%% file: main.tex
\documentclass[11pt,letterpaper]{article}
\usepackage[in]{fullpage}  
\usepackage{float}
\usepackage{mathpazo}

\usepackage{amsmath}
\usepackage{amssymb}
\usepackage{mathtools}
\usepackage{amsthm}
\usepackage{wrapfig}
\usepackage{thm-restate,thmtools}
\usepackage{algorithm}
\usepackage{algorithmic}
\usepackage{hyperref}
\usepackage{booktabs}

\usepackage{url}
\usepackage{subcaption} 
\usepackage{multirow}
\usepackage{enumitem}
\usepackage{graphicx}

\newtheorem{theorem}{Theorem}
\newtheorem{lemma}{Lemma}

\newtheorem{definition}{Definition}
\newtheorem{assumption}{Assumption}
\newtheorem{remark}{Remark}

\title{A Unified Attack Framework against Statistical Generative Watermarks for Large Language Models}

\author{
  Romina Omidi\\
   Illinois Institute of Technology\\
 \texttt{romidi@hawk.illinoistech.edu} 
  \and
  Nima Naderloui \\
   University of Connecticut\\
   \texttt{nima.naderloui@uconn.edu}
  \and
  Yuan Hong \\
  University of Connecticut\\
  \texttt{yuan.hong@uconn.edu}
  \and
  Binghui Wang\\
  Illinois Institute of Technology\\
  \texttt{bwang70@illinoistech.edu} 
}

\date{}

\begin{document}
\maketitle

\input{Abstract}
\input{Introduction}

\input{Background}
\input{AttackFramework}

\input{Optimality}
\input{Tradeoff}
\input{TheoreticalEffectiveness}
\input{Experiments}
\input{RelatedWork}

\input{Conclusion}

\bibliographystyle{plainurl}
\bibliography{References}

\appendix
\clearpage
\input{Appendix}
\end{document}

%% file: Abstract.tex
\begin{abstract}

Watermarking has emerged as a key tool for identifying text generated by large language models (LLMs), with current schemes embedding statistical signals into the generation process to enable later detection. In this work, we identify a shared weakness across a broad class of popular LLM watermarking methods, including KGW (and its variants), GumbelSoft,  DipMark, ITS, and SynthID. We show that these schemes exhibit a fundamental link between their detection scores and the token probability shifts they introduce during generation. This link, which we call \emph{score-ratio alignment}, allows an attacker with only \emph{black-box query} access to the model—and no knowledge of secret keys or model weights—to produce text that evades detection while remaining fluent and natural. We prove that the resulting attack is optimal across all covered watermark families, and show that the weakness is intrinsic: any watermark designed to maximize statistical detection power necessarily induces the very structure that attackers can exploit. We evaluate our attack on seven popular generative LLM watermarking schemes, three 
LLMs, and tasks, which 
reduce detection rates to at most 0.05 across all settings while preserving text quality. Our findings show that current watermarking methods can be reliably defeated under realistic attack conditions, highlighting the need for fundamentally different watermarking designs that decouple detection scores from token probability shifts.
Code is available at \url{https://github.com/romidi80/Unified-Attack-Framework-for-LLM-Watermark}. 
\end{abstract}

%% file: Introduction.tex
\section{Introduction}
\label{sec:introduction}

The widespread deployment of large language models (LLMs)
capable of generating fluent, human-like text at scale has
created a pressing need for reliable provenance
attribution~\cite{brown2020language,
openai2024gpt4technicalreport,touvron2023llama2openfoundation,bai2023qwen}.
Generative watermarking has emerged as the most practical solution:
a statistical signal is embedded into the LLM's token
generation process, enabling a verifier to classify text
as machine-generated without access to the original prompt
or model weights~\cite{aaronson2022watermarking,kirchenbauer2023watermark,
christ2024undetectable,kuditipudi2024robust,dathathri2024scalable}.
Deployed and proposed schemes---including
KGW~\cite{kirchenbauer2023watermark},
GumbelSoft~\cite{fu2024gumbelsoft,aaronson2022watermarking},
DipMark~\cite{wu2024dipmark},
ITS~\cite{kuditipudi2024robust}, and
SynthID~\cite{dathathri2024scalable}---share a common
architecture: a pseudorandom function (PRF) keyed by a
secret key biases token selection at each decoding step,
and a detector aggregates per-token scores into a test
statistic compared against a calibrated threshold.

Despite significant advances in watermark design, a
central security question remains unresolved: \emph{can
an adversary with only black-box access to the
watermarked LLM---and no knowledge of the PRF key or
model weights---reliably evade detection while generating
fluent, semantically coherent text?}
Prior attacks have addressed a weaker version of this
question by applying post-hoc paraphrasing: a secondary
model rewrites the watermarked output to disrupt the
embedded signal~\cite{krishna2023paraphrasing,
Piet2023MARKMyWORDSAA,Sadasivan2023CanAT}.
Paraphrasing-based attacks impose a fundamental
quality--evasion tradeoff---aggressive rewriting
degrades fluency and semantic fidelity in proportion to
the evasion gain~\cite{krishna2023paraphrasing}---and
require inference through a capable auxiliary model.
Crucially, reported robustness improvements against
Paraphrasing does not characterise security against
generation-time adversaries.

\vspace{+0.05in} 
\noindent 
\textbf{Our approach.}
We identify a structural property shared by major
watermarking schemes, which we term \emph{score-ratio
alignment}: 
the per-token
watermark score is a monotone function of the probability
ratio between the watermarked and base distributions.
Formally, for each token $j$ from a vocabulary $\mathcal{V}$ we have: 
$  \arg\min_{j \in \mathcal{V}} \frac{\hat{p}_j}{p_j}
  \;\subseteq\;
  \arg\min_{j \in \mathcal{V}} \mathbb{E}[s_j]$, 
where $\hat{p}_j$ is the watermarked token probability,
$p_j$ the base model probability, and $\mathbb{E}[s_j]$ the
expected per-token detection score. 
We prove that the equation 
holds as a formal
theorem---not merely as an empirical observation---for
KGW (and its two variants), GumbelSoft, DipMark, ITS, and SynthID 
(see Theorems in \S\ref{sec:optimality}).

Score-ratio alignment admits a direct and efficient
attack.
An adversary who estimates the probability ratio
$\hat{p}_j/p_j$ from repeated black-box queries can
greedily select the score-minimizing token at each
decoding step, systematically suppressing the cumulative
detection statistic.
Because the selected tokens are drawn from the
low-distortion region of the base model distribution,
text quality is preserved.
The attack requires neither the secret key nor model weights, relying only on sampling access to the watermarked model and an offline surrogate model.

\vspace{+0.05in} 
\noindent {\bf Key insight.}
Score-ratio alignment is not an artifact of any
particular design; it is a structural consequence
of optimization for detection power.
By the Neyman--Pearson
lemma~\cite{Neyman1933OnTP}, any statistically optimal
detector must be based on the likelihood ratio
$\hat{p}(\mathbf{x})/p(\mathbf{x})$---exactly the
quantity that score-ratio alignment ties to the watermark
score.
We prove (Theorem~\ref{subsec:fundamental-tradeoff}, \S\ref{sec:tradeoff}) that
\emph{no watermarking scheme within the current
generator--detector paradigm can simultaneously achieve
Neyman--Pearson-optimal detection power and robustness
against a score-ratio-minimizing adversary.}
This result is not a limitation of specific parameter
setting or scheme; it is an
information-theoretic impossibility that holds for any
scheme whose score function is optimized for
distinguishability.

\vspace{+0.05in}
\noindent {\bf Contributions.}
This paper makes below contributions.

\begin{enumerate}[leftmargin=*, label=\textbf{(\arabic*)}, nosep]

\item \textbf{Score-ratio alignment and a unified attack.}
We identify score-ratio alignment as a shared 
property of all major LLM watermarking schemes and
formulate score-ratio minimization as a principled
black-box attack 
requiring only output tokens from 
the watermarked model (\S\ref{sec:attack}).

\item \textbf{Formal optimality.}
We prove that score-ratio minimization achieves the
minimum attainable expected detection score for a wide range of LLM watermarking schemes: 
no black-box adversary can achieve a
lower expected score within each scheme's construction
(\S\ref{sec:optimality}).

\item \textbf{Detectability--robustness tradeoff.}
We prove that optimizing the LLM watermark 
detection power establishes an inherent tension between detectability and
 robustness (\S\ref{sec:tradeoff}).

\item \textbf{Theoretical attack effectiveness.}
We theoretically characterize the effectiveness of our practical attack  by showing its true positive rate admits a small upper bound (\S\ref{sec:attack-effectiveness}).

\item \textbf{Empirical evaluation.}
We evaluate the attack on seven watermarking schemes,  
{three} LLMs,  
and
three text corpora (C4, FakeNews, and Eli5~\cite{Raffel2019ExploringTL}).
The attack reduces the true positive rate to at most
$0.05$ across all settings while preserving text quality 
(\S\ref{sec:experiments}).

\end{enumerate}

%% file: Background.tex
\begin{table*}[t]
  \centering
  \caption{Summary of the seven watermarking schemes
           covered by our framework.
           ``Distortional'' (D) schemes modify the
           sampling distribution; ``distortion-free'' (DF)
           schemes preserve it in expectation.
           $\gamma$ denotes the green-list fraction,
           $\delta$ the logit boost, $\xi_t$ the pseudorandom vector,
         and $\gamma_{\mathrm{EM}}$ the
           Euler--Mascheroni constant.}
  \label{tab:scheme-summary}
  \resizebox{\textwidth}{!}{%
  \begin{tabular}{llllll}
    \toprule
    \textbf{Scheme} & \textbf{Type} & \textbf{PRF output} $\mathcal{E}$
      & \textbf{Decoder} $\Gamma$
      & \textbf{Token score} $\phi(\xi_t, x_t)$
      & \textbf{Detection statistic} $\Phi$ \\
    \midrule
    KGW~\cite{kirchenbauer2023watermark}
      & D
      & $\{0,1\}^{|\mathcal{V}|}$
      & $\operatorname{softmax}(\mathbf{l}_t + \delta\xi_t; \tau)$
      & $\xi_t[x_t] \in \{0,1\}$
      & $Z$-score~\eqref{eq:kgw-stat} \\
      Token-Specific KGW~\cite{huo2024tokenspecific}
      & D
      & $\{0,1\}^{|\mathcal{V}|}$ (token-dep.)
      & $\operatorname{softmax}((\mathbf{l}_t{+}\delta_t{\odot}\xi_t);\tau)$
      & $\delta_t[x_t]\,\xi_t[x_t]$
      & Normalized sum~\eqref{eq:generalized-kgw-stat} \\
    Adaptive KGW~\cite{liu2024adaptive}
      & D
      & $\{0,1\}^{|\mathcal{V}|}$ (adaptive $\gamma_t$)
      & $\operatorname{softmax}((\mathbf{l}_t{+}\delta_t\xi_t);\tau)$
      & $\delta_t\,\xi_t[x_t]$
      & Normalized sum~\eqref{eq:generalized-kgw-stat} \\
       GumbelSoft~\cite{fu2024gumbelsoft}
      & DF
      & $\mathbb{R}^{|\mathcal{V}|}$ (Gumbel)
      & $\operatorname{softmax}((\mathbf{l}_t{+}\xi_t);\tau)$
      & $\xi_t[x_t]$
      & Scaled mean~\eqref{eq:gumbel-stat} \\
    DipMark~\cite{wu2024dipmark}
      & DF
      & Permutation $\iota_t$
      & Rank-reweighted~\eqref{eq:dipmark-decoder}
      & $\mathbf{1}[x_t \in \text{green region}]$
      & $Z$-score~\eqref{eq:dipmark-stat} \\
    ITS~\cite{kuditipudi2024robust}
      & DF
      & $(\mu_t, \pi_t)$
      & Inverse CDF~\eqref{eq:its-decoder}
      & $|\mu_t - \eta(\pi_t(x_t))|$
      & Negative mean~\eqref{eq:its-stat} \\
    SynthID~\cite{dathathri2024scalable}
      & DF
      & $\mathbb{R}^{|\mathcal{V}|}$ (proprietary)
      & Tournament Sampling~\eqref{eq:synthid-layer-update}
      & 
      $\xi_t[x_t]$
      & Mean~\eqref{eq:synthid-mean-statistic} or Bayesian~\eqref{eq:synthid-bayesian-statistic}\\
    \bottomrule
  \end{tabular}}
\end{table*}

\section{Background}
\label{sec:background}

We introduce the notation used throughout the paper, give a
self-contained description of the generative LLM watermarking
paradigm, and formally specify each of the seven watermarking
schemes evaluated in \S\ref{sec:experiments}.
Table~\ref{tab:scheme-summary} summarises the schemes and
their key components for reference.

\subsection{Language Model Generation}
\label{sec:background:lm}

An autoregressive language model generates a token sequence
$\mathbf{x} = (x_1, x_2, \ldots, x_T)$ over a vocabulary
$\mathcal{V}$ one token at a time.
At decoding step $t$, the model maps the preceding context
$x_{1:t-1}$ to a logit vector
$\mathbf{l}_t \in \mathbb{R}^{|\mathcal{V}|}$, where
$\mathbf{l}_t[x]$ is the unnormalized score of
token $x \in \mathcal{V}$.
The next-token distribution is
\[
  p_t(x \mid x_{1:t-1}) = \operatorname{softmax}(\mathbf{l}_t; \tau)
  = \frac{\exp(\mathbf{l}_t[x] / \tau)}
         {\sum_{x' \in \mathcal{V}} \exp(\mathbf{l}_t[x'] / \tau)},
  \label{eq:softmax}
\]
where $\tau > 0$ is the temperature.
The next token is then sampled as
$x_t \sim p_t(\cdot \mid x_{1:t-1})$.
We write $p_t \in \Delta^{|\mathcal{V}|-1}$ as the simplex
of next-token probabilities at step~$t$,
and drop the conditioning notation and decoding step  when the context is clear.

\subsection{The Generative Watermarking Paradigm}
\label{sec:background:paradigm}

Generative LLM watermarking~\cite{kirchenbauer2023watermark,christ2024undetectable,kuditipudi2024robust}
embeds a hidden
statistical signal into text during token generation, making
the signal imperceptible to human but reliably
detectable by an algorithm.
All schemes evaluated here instantiate the following
two-component framework.

\vspace{+0.05in} \noindent {\bf Watermark generator.}
At each decoding step~$t$, the generator receives the
preceding context $x_{1:t-1}$ and the model's logit vector
$\mathbf{l}_t$, and produces the next token~$x_t$ via:

\begin{enumerate}[leftmargin=*, label=(\roman*), nosep]
  \item \textbf{PRF evaluation.}  A pseudorandom function
        (PRF) $F_{sk}$ keyed by a secret key $sk$ is
        applied to a \emph{context window}
        $c_t \subseteq x_{1:t-1}$ (e.g., the last $h$
        tokens) to produce a pseudorandom vector
        $\xi_t = F_{sk}(c_t) \in \mathcal{E}$, where the
        range $\mathcal{E}$ is scheme-specific.
  \item \textbf{Watermarked sampling.}  A decoder
        $\Gamma(\xi_t, \mathbf{l}_t)$ selects the next token
        $x_t$ using both $\mathbf{l}_t$ and $\xi_t$,
        inducing a watermarked distribution
        $\hat{p}_t \in \Delta^{|\mathcal{V}|-1}$ that may or may not
        differ from the base distribution $p_t$.
\end{enumerate}

\vspace{+0.05in} \noindent {\bf Watermark detector.}
Given a candidate text $\mathbf{x} = (x_1, \ldots, x_T)$,
the detector consists of below steps:
\begin{enumerate}[label=(\roman*), leftmargin=*, nosep]
  \item recomputes $\xi_t = F_{sk}(c_t)$ for each step~$t$
        using the same secret key $sk$;
  \item evaluates a \emph{token score function}
        $s_t = \phi(\xi_t, x_t) \in \mathbb{R}$ that measures the
        per-token watermark evidence; and
  \item aggregates scores into a \emph{detection statistic}
        $\Phi(s_{1:T})$,
        which is compared against a pre-calibrated threshold
        $\theta$ to render a binary decision.
\end{enumerate}
The threshold $\theta$ is set to control the false positive
rate (FPR) under the null hypothesis $H_0$ that the text
was sampled from the unwatermarked base model.
The test statistic $\Phi(s_{1:T})$ is
\emph{designed} to be large in expectation under $H_1$
(watermarked text) and small under~$H_0$, so that
watermarked text is detected with high true positive rate
(TPR) at a fixed FPR.

\vspace{+0.05in} \noindent {\bf Categorization.}
Watermarking schemes are broadly classified as
\emph{distortional} if $\hat{p}_t \neq p_t$ with positive
probability (so that the output distribution differs from
the base model), and \emph{distortion-free} if
$\mathbb{E}[\hat{p}_t] = p_t$ in expectation over the
randomness of $\xi_t$~\cite{christ2024undetectable}.
KGW and its variants, i.e., Adaptive KGW \cite{liu2024adaptive} and Token-Specific KGW \cite{huo2024tokenspecific} are
distortional, while GumbelSoft, DipMark, ITS and SynthID are distortion-free.

\subsection{LLM Watermarking Schemes}

\subsubsection{KGW}
\label{sec:background:kgw}

KGW~\cite{kirchenbauer2023watermark}
partitions the vocabulary into \emph{green} and \emph{red}
tokens randomly, then applies a logit boost to all
green tokens.

\begin{itemize}[leftmargin=*,nosep]
  \item \textbf{Context.} The current token $x_t$ itself
        (self-hash variant; the partition is seeded by the
        token being generated rather than the preceding
        context, so the partition is fixed for each
        vocabulary token across all positions).

  \item \textbf{PRF.} $F_{sk}(x_t)$ maps $x_t$ to a seed,
        which is used to sample a binary vector
        $\xi_t \in \{0,1\}^{|\mathcal{V}|}$ with exactly
        $\gamma|\mathcal{V}|$ ones (the \emph{green list},
        $\mathcal{G}_t$) and $(1-\gamma)|\mathcal{V}|$
        zeros (the \emph{red list}).
        Here $\gamma \in (0,1)$ is the green-list fraction.

  \item \textbf{Decoder.}
        $\Gamma(\xi_t, \mathbf{l}_t)$ samples from
        $\operatorname{softmax}(\frac{\mathbf{l}_t + \delta \cdot \xi_t}{\tau})$,
        where $\delta > 0$ is the logit boost.

  \item \textbf{Token score.}
        $s_t = \phi(\xi_t, x_t) = \xi_t[x_t] \in \{0,1\}$,
        i.e., 1 if $x_t$ is green and 0 otherwise.

  \item \textbf{Detection statistic.} KGW uses the Z-score defined below:
       \begin{align}
          \Phi(s_{1:T})
          = \frac{\sum_{t=1}^{T} s_t - \gamma T}
                 {\sqrt{T\gamma(1-\gamma)}}.
          \label{eq:kgw-stat}
        \end{align}
        Under $H_0$, $\Phi \xrightarrow{d} \mathcal{N}(0,1)$;
        under $H_1$, the elevated green-token rate shifts
        $\Phi$ positively.
\end{itemize}

\subsubsection{KGW Variants}
\label{sec:background:generalized-kgw}

Several extensions of
KGW replace its fixed green-list
fraction and logit boost with context-dependent parameters.
Two representative variants are \emph{Token-Specific
KGW}~\cite{huo2024tokenspecific}, which assigns different
parameters to individual vocabulary tokens, and
\emph{Adaptive KGW}~\cite{liu2024adaptive}, which dynamically adjusts
shared parameters  at each decoding step.
\emph{We can unify both schemes using the following generalized
KGW formulation.}

\begin{itemize}[leftmargin=*,nosep]
  \item \textbf{Context.}
        The previous $h$ tokens $x_{t-h:t-1}$, optionally
        together with semantic-coherence, complexity, or
        token-importance signals derived from current
        decoding context.

  \item \textbf{PRF and green-list construction.}
        The pseudorandom function
        $F_{sk}(x_{t-h:t-1})$ seeds a context-dependent
        binary vector $\xi_t$, where
        $\xi_t[x] \sim
          \operatorname{Bernoulli}\!\left(\gamma_t[x]\right),
          \, \forall\,x\in\mathcal{V},$

        and $\gamma_t[x]\in(0,1)$ is the probability that
        token $x$ is assigned to the green list at step $t$ and each token $x$ has a boost $\delta_t[x]$.

        \begin{enumerate}
            \item {\bf Token-Specific KGW:} Both $\gamma_t[x]$ and the
        boost $\delta_t[x]$ may vary across tokens according
        to semantic similarity or context-aware importance. I.e., $\gamma_t[x]$
        \item {\bf Adaptive KGW:} All tokens at a given step $t$ typically
        share the same parameters, i.e., $\gamma_t[x]=\gamma_t, \delta_t[x]=\delta_t, \forall\,x\in\mathcal{V}$; but

        $\gamma_t$ and $\delta_t$ vary across decoding
        steps according to semantic coherence, sentence
        complexity, topic sensitivity, or token importance.
       Thus, Adaptive KGW is a step-dependent special case of the generalized token-dependent formulation.
        \end{enumerate}

  \item \textbf{Decoder.}
        Let $\boldsymbol{\delta}_t \geq 0$ denote the boost vector.

        For Token-Specific KGW, the watermarked
        token is sampled via
        $
          x_t \sim
          \operatorname{softmax}\!\left(
            \frac{\mathbf{l}_t+
            \boldsymbol{\delta}_t\odot\boldsymbol{\xi}_t}
            {\tau}
          \right),
        $
        where $\odot$ is element-wise product.

        For Adaptive KGW, this becomes
        $
          x_t \sim
          \operatorname{softmax}\!\left(
            \frac{\mathbf{l}_t+\delta_t \cdot \boldsymbol{\xi}_t}
            {\tau}
          \right).
       $

  \item \textbf{Token score.}
        In Token-Specific KGW,
        the score contributed by
        the selected token is $s_t=\phi(\boldsymbol{\xi}_t,x_t)=\delta_t[x_t]\,\xi_t[x_t]$; while
        under Adaptive KGW, the
        score becomes
        $s_t=\delta_t\,\xi_t[x_t].$

  \item \textbf{Detection statistic.}
        Both variants aggregate the token-level scores using
        the normalized sum  statistic as below
        \begin{align}
          \Phi(s_{1:T})
          =
          \frac{
            \sum_{t=1}^{T}s_t
            -
            \mathbb{E}_{H_0}\!\left[
              \sum_{t=1}^{T}s_t
            \right]
          }{
            \sqrt{
              \operatorname{Var}_{H_0}\!\left[
                \sum_{t=1}^{T}s_t
              \right]
            }
          },
          \label{eq:generalized-kgw-stat}
        \end{align}
        where the null-hypothesis mean and variance are
        computed using the corresponding token-specific or
        step-adaptive green-list parameters. A sufficiently
        large value of $\Phi(s_{1:T})$ indicates the presence
        of the watermark.
\end{itemize}

\subsubsection{GumbelSoft}
\label{sec:background:gumbelsoft}

GumbelSoft~\cite{fu2024gumbelsoft} injects Gumbel-distributed
pseudorandom noise \cite{gumbel1958statistics} into the logits before sampling.
Because the Gumbel-Max trick is equivalent to sampling from
the softmax distribution, the scheme is {distortional-free}.

\begin{itemize}[leftmargin=*,nosep]
  \item \textbf{Context.} The previous $h$ tokens
        $x_{t-h:t-1}$.

  \item \textbf{PRF.} $F_{sk}(x_{t-h:t-1})$ hashes the
        context to a seed and draws
        $
          \xi_t[x] \overset{\mathrm{i.i.d.}}{\sim}
          \operatorname{Gumbel}(0,1),
          \quad \forall\, x \in \mathcal{V}.
          \label{eq:gumbel-prf}
        $

  \item \textbf{Decoder.}
        $x_t \sim \operatorname{softmax}((\mathbf{l}_t + \xi_t)/\tau)$.

  \item \textbf{Token score.}
        $s_t = \phi(\xi_t, x_t) = \xi_t[x_t]$,
        i.e., the Gumbel noise value at the sampled
        position.

  \item \textbf{Detection statistic.} It uses scaled mean defined as:
        \begin{align}
          \Phi(s_{1:T})
          = \frac{\sqrt{6T}}{\pi}
            \left(\frac{1}{T}\sum_{t=1}^{T} s_t - \gamma_{\mathrm{EM}}\right),
          \label{eq:gumbel-stat}
        \end{align}
        where $\gamma_{\mathrm{EM}} \approx 0.5772$ is the
        Euler--Mascheroni constant, i.e., the mean of a
        $\operatorname{Gumbel}(0,1)$ random variable.
\end{itemize}

\subsubsection{DipMark}
\label{sec:background:dipmark}

DipMark~\cite{wu2024dipmark} is a distortion-free scheme
that reweights the token distribution via a pseudorandom
permutation, assigning binary colour scores to tokens by
their permuted rank.

\begin{itemize}[leftmargin=*,nosep]
  \item \textbf{Context.} The previous $h$ tokens
        $x_{t-h:t-1}$.

  \item \textbf{PRF.} $F_{sk}(x_{t-h:t-1})$ seeds a
        uniform random permutation
        $\iota_t : \mathcal{V} \to \mathcal{V}$.

  \item \textbf{Decoder.}  Token $\iota_t[i]$ is sampled
        with probability $\lambda_t(i) - \lambda_t(i-1)$,
        where $\lambda_t(i)$ is defined as
        \begin{equation}
          \max\!\left\{\sum_{j=1}^{i} p_t(\iota_t[j])
                         - \alpha,\; 0\right\}
            + \max\!\left\{\sum_{j=1}^{i} p_t(\iota_t[j])
                         - (1-\alpha),\; 0\right\},
        \label{eq:dipmark-decoder}
        \end{equation}
        and $\alpha \in (0, \tfrac{1}{2})$ is a
        scheme parameter.
        Tokens ranked in the upper $\alpha$-fraction of
        the permuted vocabulary (the ``green'' region) are
        upweighted, while tokens in the lower $\alpha$-fraction
        (the ``red'' region) are downweighted.

  \item \textbf{Token score.}
        $s_t = \phi(\xi_t, x_t)
        = \mathbf{1}\{x_t \in \iota_t[\lceil(1-\gamma)|\mathcal{V}|\rceil :|\mathcal{V}|]\}$,
        i.e., 1 if the sampled token falls in the green
        (high-rank) portion of the permuted vocabulary,
        0 otherwise.

  \item \textbf{Detection statistic.} It uses the Z-score defined as:
        \begin{align}
          \Phi(s_{1:T})
          = \frac{\sum_{t=1}^{T} s_t - (1-\gamma)T}
                 {\sqrt{T}}.
          \label{eq:dipmark-stat}
       \end{align}
\end{itemize}

\subsubsection{ITS}
\label{sec:background:its}

The Inverse Transform Sampling (ITS) scheme~\cite{kuditipudi2024robust}
is distortion-free by construction: it uses a pre-committed
global key sequence to deterministically select tokens via
inverse CDF sampling, producing an output distribution
equal in expectation to the base model.

\begin{itemize}[leftmargin=*,nosep]
  \item \textbf{Context.} A global watermark key
        $\xi\text{-list}$ and position index $t$.
        Each entry $\xi\text{-list}[t] = (\mu_t, \pi_t)$
        consists of a random threshold $\mu_t \in \text{Uni}(0,1)$
        and a permutation $\pi_t$ of $\mathcal{V}$.

  \item \textbf{PRF.} The PRF simply reads off the
        pre-committed key: $F_{sk}(t) = (\mu_t, \pi_t)$.

  \item \textbf{Decoder.} For a fixed key $(\mu_t,\pi_t)$, the decoder selects
        \begin{align}
        \small
          x_t = \pi_t^{-1}\!\left(
            \min\!\left\{
              \pi_t(i) :
              p_t\bigl(\{j : \pi_t(j) \leq \pi_t(i)\}\bigr)
              \geq \mu_t
            \right\}
          \right),
          \label{eq:its-decoder}
        \end{align}
      which selects the smallest $\pi_t$-ranked token whose cumulative probability under the permuted ordering meets or exceeds $\mu_t$. This is equivalent to inverse-CDF sampling over the permuted token distribution.

  \item \textbf{Token score.}
        \begin{align}
        \small
          s_t = \phi(\xi_t, x_t)
          = \left|\mu_t - \eta(\pi_t(x_t))\right|,
          \,\,
          \eta(k) = \frac{k-1}{|\mathcal{V}|-1},
          \label{eq:its-score}
       \end{align}
        which measures how closely the sampled token's
        normalized rank matches the random threshold.
        Under $H_1$, scores are concentrated near zero;
        under $H_0$, they are approximately uniform on
        $[0, \tfrac{1}{2}]$.

  \item \textbf{Detection statistic.} It uses a negative mean:
        \begin{align}
          \Phi(s_{1:T})
          = -\frac{1}{T}\sum_{t=1}^{T} s_t.
          \label{eq:its-stat}
        \end{align}
        Note that larger negative values of $\Phi$ indicate
        stronger watermark evidence.
\end{itemize}

\subsubsection{SynthID}
\label{sec:background:synthid}

SynthID~\cite{dathathri2024scalable} is the first production-ready generative LLM watermarking system developed by Google and achieves the state-of-the-art detection performance.
Its core contribution is
a novel \emph{Tournament Sampling} mechanism that 
samples several candidate tokens from base distribution
and repeatedly selects candidates having larger pseudorandom
$g$-values.

\begin{itemize}[leftmargin=*,nosep]
    \item \textbf{Context.}
    The previous $h$ tokens
        $x_{t-h:t-1}$.
    \item \textbf{PRF and $g$-value function.}
    The random seed at position $t$ is computed from the
    preceding context and the secret watermark key $r_t
        =
        F_{sk}(x_{t-h:t-1}).$

    For a tournament with $m$ layers, the seed $r_t$ determines
    $m$ independent pseudorandom functions (also called $g$-value functions):
    \begin{align}
    \label{eq:gvalue}
        g_{t,\ell}(i)
        :=
        g_\ell(i,r_t),
        \quad
        \ell\in\{1,\ldots,m\},\quad i\in\mathcal V.
    \end{align}
    Each function assigns every token an ordered
    $g$-value in a set $\mathcal Z$.
    The standard construction
    uses binary values, but it naturally
    extends to arbitrary discrete ordered $g$-values.

    \item \textbf{Decoder: Tournament Sampling.}
    Assume the tournament match contains $N\geq2$
    competitors. The decoder initially samples $N^m$
    candidate tokens independently from  base distribution $p_t$. At layer
    $\ell$, the remaining candidates are divided into groups
    of $N$, and the candidate having the largest
    $g_{t,\ell}$-value in each group is retained. Ties are
    broken uniformly. After $m$ layers, the only remaining
    token becomes $x_t$.

    Equivalently, let $ p_t^{(0)}:=p_t$
    and $p_t^{(\ell)}$ denote the winner distribution after
    layer $\ell$. For a score level $a\in\mathcal Z$, define the cumulative masses under the
distribution $p^{(\ell-1)}$ by
    \begin{align}
        & Q_{t,\ell,<}(a)
        :=
        \sum_{j:g_{t,\ell}(j)<a}
        p_{t,j}^{(\ell-1)},
        Q_{t,\ell,\leq}(a),
        :=
        \sum_{j:g_{t,\ell}(j)\leq a}
        p_{t,j}^{(\ell-1)},
Q_{t,\ell,=}(a)
        :=
        Q_{t,\ell,\leq}(a)-Q_{t,\ell,<}(a).
    \label{eq:cumdist}
    \end{align}
    The layerwise Tournament-sampling  update is
    \begin{equation}
        p_{t,i}^{(\ell)}
        =
        p_{t,i}^{(\ell-1)}
        \frac{
            Q_{t,\ell,\leq}(g_{t,\ell}(i))^N
            -
            Q_{t,\ell,<}(g_{t,\ell}(i))^N
        }{
            Q_{t,\ell,=}(g_{t,\ell}(i))
        }.
        \label{eq:synthid-layer-update}
    \end{equation}

     The final watermarked distribution is
    $
        \widehat p_t:=p_t^{(m)},
        \,
        x_t\sim\widehat p_t.
    $

    \item \textbf{Detection statistics.}
    SynthID supports both
    the mean score and
    Bayesian score, and larger values provide stronger watermark evidence.

    \begin{itemize}[leftmargin=*]
        \item \textbf{Mean Score.}
        The scalar contribution of token $x_t$ is $s_t^{\mathrm{MS}}
            :=
            \frac{1}{m}
            \sum_{\ell=1}^{m}g_{t,\ell}(x_t).$

        Let $\mathcal I$ be the set of positions retained after
        context-repetition and other validity masking. The
        document-level mean statistic is
        \begin{equation}
        \small
            \Phi_{\mathrm{MS}}(x_{1:T})
            :=
            \frac{1}{m|\mathcal I|}
            \sum_{t\in\mathcal I}
            \sum_{\ell=1}^{m}
            g_{t,\ell}(x_t).
            \label{eq:synthid-mean-statistic}
        \end{equation}

        Detection is performed by comparing
        $\Phi_{\mathrm{MS}}$ with a threshold calibrated under
        the unwatermarked null distribution.

        \item \textbf{Bayesian Score.}
        Let $\psi_{t,\ell}\in\{1,\ldots,N\}$ denote the
        number of distinct competitors
        associated with the observed token at layer $\ell$. 
        For a fixed value $\psi_{t,\ell}=c$, the likelihood
        ratio associated with observing $g$-value $z$ is
        \begin{align}
            \Lambda_c(z)
            :=
            \frac{
                \Pr(g_{t,\ell}=z
                \mid H_1,\psi_{t,\ell}=c)
            }{
                \Pr(g_{t,\ell}=z\mid H_0)
            }
            \label{eq:synthid-fixed-c-bayes-factor}
        \end{align}

        Given prior layer values
        $
            \mathbf g_{t,<\ell}
            =
            \bigl(
                g_{t,1}(x_t),\ldots,
                g_{t,\ell-1}(x_t)
            \bigr),
        $
        the conditional Bayes factor for layer $\ell$ is
        \begin{equation}
            \textrm{BS}_{t,\ell}
            \bigl(
                g_{t,\ell}(x_t);
                \mathbf g_{t,<\ell}
            \bigr)
            :=
            \sum_{c=1}^{N}
            \Pr\!\left(
                \psi_{t,\ell}=c
                \mid
                \mathbf g_{t,<\ell},H_1
            \right)
            \Lambda_c\!\left(g_{t,\ell}(x_t)\right).
            \label{eq:synthid-layerwise-bayes-factor}
        \end{equation}

        The log-Bayesian contribution of token $x_t$ is
        \begin{equation}
        \small
            s_t^{\mathrm{BS}}
            :=
            \sum_{\ell=1}^{m}
            \log
            \textrm{BS}_{t,\ell}
            \bigl(
                g_{t,\ell}(x_t);
                \mathbf g_{t,<\ell}
            \bigr).
            \label{eq:synthid-token-bayesian-score}
        \end{equation}
        The document-level log-Bayes statistic is therefore
        \begin{equation}
        \small
            \Phi_{\mathrm{BS}}(x_{1:T})
            :=
            \sum_{t\in\mathcal I}s_t^{\mathrm{BS}}
            =
            \sum_{t\in\mathcal I}
            \sum_{\ell=1}^{m}
            \log
            \textrm{BS}_{t,\ell}
            \bigl(
                g_{t,\ell}(x_t);
                \mathbf g_{t,<\ell}
            \bigr).
            \label{eq:synthid-bayesian-statistic}
        \end{equation}

            \end{itemize}
      \end{itemize}

%% file: AttackFramework.tex
\section{Attack Framework}
\label{sec:attack}

We consider a watermark applied at generation time of an LLM by modifying its token
selection distribution. 
The watermark is considered secure if it is simultaneously:
(i) statistically detectable with high confidence, and 
(ii) minimally distortive with respect to the raw model distribution. 
Our attack challenges 
property (i) while preserving property (ii)\footnote{For brevity, in the rest of the paper, we omit the decoding step index $t$ and write $s_j$ instead of $s_{t}(j)$ to denote the $j$-th token at decoding step $t$}.

\subsection{Threat Model}
\label{sec:threatmodel}

\vspace{+0.05in}
\noindent {\bf Adversary capabilities:} 
We assume a \emph{distribution-level black-box adversary} with below capabilities:

\begin{itemize}[leftmargin=*,nosep]
    \item \textbf{Sampling and detector access:}
    The adversary knows the watermark detection rule
    (e.g., score aggregation and threshold), since it knows which watermark to attack; the attacker and can query the watermarked LLM
    to observe generated text samples.

    \item \textbf{No seed or key access:}
    The adversary has no access to the permutation, random seed,
    or secret partition used to construct the watermark.

    \item \textbf{No model access:}
    The adversary does not modify the model weights or have 
access to the watermark algorithm.
\end{itemize}

\vspace{+0.05in}
\noindent {\bf Adversary objective:}
The adversary objective is to produce an attacked distribution $\tilde p$ such that:

\begin{enumerate}[leftmargin=*,nosep]
    \item \textbf{Watermark evasion:}
    Generated text 
    $x$ sampled 
    from $\tilde p$ is classified as \emph{non-watermarked} by the
    detector with high probability. 
    
    \item \textbf{Low distortion:}
    The modified distribution $\tilde p$ remains close to the original model distribution $p$.

\end{enumerate}

In other words, the adversary aims to \emph{erase the watermark signal while preserving
language quality}.

\subsection{Attack Principle}

The central quantity we study is the \emph{expected
per-token score} $\mathbb{E}[s_j]$ as a function of
the token index $j$. 
In \S\ref{sec:optimality} 
we show that
for every watermarking  scheme in Table~\ref{tab:scheme-summary}, this
function is \emph{monotone} in the probability ratio
$\hat{p}_j / p_j$, a property we call
\emph{score-ratio alignment} and forms the theoretical
foundation of our attack.

Specifically, the attack exploits the fact that the schemes (both distortional and distortion-free) induce a structured change
in token probabilities that correlates with the watermark score $s_j$, i.e., 
tokens with the smallest expected scores experience the
largest score-ratio degradation:
\begin{equation}
\label{eq:scoreratio}
    j^\star =\arg\min_{j \in \mathcal{V}} \frac{\hat p_j}{p_j}
    \quad \Rightarrow \quad
    j^\star \in \arg\min_{j \in \mathcal{V}} \mathbb{E}[s_j].
\end{equation}

The adversary leverages this alignment to \emph{estimate which tokens contribute most 
to the detection statistic} and counteracts the watermark by applying a compensating
reweighting or noise injection that neutralizes the expected score bias.

\begin{algorithm}[!t]
\small
\caption{Black-Box Attack on LLM Watermarking}
\label{alg:blackbox_watermark_attack}
\begin{algorithmic}[1]

\REQUIRE Prompt $x$, surrogate model $M_s$, watermarked oracle $\widetilde{M}$,
candidate size $K$, query budget $N$, and generation length $T$
\ENSURE Attacked continuation $y$

\STATE Initialize $y \leftarrow [\,]$

\FOR{$j = 1,\ldots,T$}

    \STATE Set the current context $h \leftarrow (x,y)$

    \STATE Query the watermarked oracle $N$ times:
    \[
        \widetilde{t}^{(1)},\ldots,\widetilde{t}^{(N)}
        \sim \widetilde{M}(\cdot \mid h)
    \]

    \STATE Select the top-K candidate tokens using the surrogate model:
    \[
        \mathcal{C}
        \leftarrow
        \operatorname{TopK}\!\left(p_{M_s}(\cdot \mid h), K\right)
    \]

    \STATE Estimate the watermarked probability for each $t \in \mathcal{C}$:
    \[
        \widehat{p}_{\mathrm{wm}}(t \mid h)
        \leftarrow
        \frac{
            \sum_{i=1}^{N}\mathbb{I}
            [\widetilde{t}^{(i)} = t] + \epsilon
        }{
            \sum_{u \in \mathcal{C}}
            \sum_{i=1}^{N}\mathbb{I}
            [\widetilde{t}^{(i)} = u]
            + \epsilon |\mathcal{C}|
        },
        \quad \forall t \in \mathcal{C}
    \]

    \STATE Select the next token:
    \[
        t_j^\star
        \leftarrow
        \arg\min_{t \in \mathcal{C}}
        \frac{
            \widehat{p}_{\mathrm{wm}}(t \mid h)
        }{
            p_{M_s}(t \mid h) + \epsilon
        }
    \]

    \STATE Append $t_j^\star$ to the continuation:
    $
        y \leftarrow y \circ t_j^\star
    $

    \IF{$t_j^\star = \mathrm{EOS}$}
        \STATE \textbf{break}
    \ENDIF

\ENDFOR

\RETURN $y$

\end{algorithmic}
\end{algorithm}

\subsection{Attack Algorithm}

Algorithm~\ref{alg:blackbox_watermark_attack} presents our detailed attack procedure.
Here, we choose an arbitrary surrogate model and use its tokenizer consistently throughout the attack.
We make the following remarks regarding Algorithm~\ref{alg:blackbox_watermark_attack}.

\begin{enumerate}[leftmargin=*]
    \item \emph{Computational consideration.}
    Unlike our theoretical analysis that considers the entire vocabulary $\mathcal{V}$, in practice, we estimate the watermarked probabilities only for the top-$K$ candidate tokens in $\mathcal{C}$ selected by the surrogate model to reduce computational cost. Most tokens outside $\mathcal{C}$ have very small probabilities and are thus unlikely to be selected. This provides an efficiency--search-space tradeoff by substantially reducing the number of candidate tokens while retaining the most likely ones.

    \item \emph{Per-token queries.}
    Our attack operates at the token level. When querying the watermarked oracle, we set \texttt{max\_new\_tokens=1} to obtain a single next-token sample. If generation-length control is unavailable and the oracle returns a sequence, we use only its first generated token as the next-token observation.

    \item \emph{Numerical stability.}
    We use a small constant $\epsilon>0$ in the denominator to avoid numerical instability. Since $\epsilon$ is sufficiently small, it has a negligible effect on the probability ratios and does not affect the attack results in practice.
\end{enumerate}

%% file: Optimality.tex
\section{Optimality: Score-Ratio Minimization}
\label{sec:optimality}

We now formally establish that the score-ratio minimization strategy in Equation (\ref{eq:scoreratio}) is optimal for the studied watermark schemes. 
Specifically, we prove that in every scheme, the token achieving
the minimum probability ratio $\hat{p}_j / p_j$ also achieves the minimum
expected detection score $\mathbb{E}[s_j]$---the score-ratio alignment
condition of Equation  \eqref{eq:scoreratio}. This alignment
confirms that an adversary executing Algorithm~\ref{alg:blackbox_watermark_attack} could minimize 
the detection statistic in expectation, without requiring knowledge of the secret key or the watermark seed.

Next, we describe the setup and present the theorems. 
Since both KGW and GumbelSoft belong to the class of logit-addition watermarking schemes, the theorem applies directly to both methods. Furthermore, the two KGW variants can be analyzed under the same unified framework.
{\bf All proofs are deferred to Appendix \ref{app:proof-opt}}.

\subsection{Logit-Addition Watermark: KGW and GumbelSoft}
\label{subsec:proof-logit}

\noindent\textbf{Setup.}
Let $l \in \mathbb{R}^{|\mathcal{V}|}$ denote the original logit vector
produced by the base LLM at a given decoding step, and define the
base-model next-token distribution as
$
    p_j = \mathrm{softmax}(l)_j
        = \frac{e^{l_j}}{\sum_{k \in \mathcal{V}} e^{l_k}}.
$
A logit-addition watermark injects a per-token signal $\xi_j$ into the
logits before sampling:
\begin{align}
    \hat{l}_j = l_j + \delta\,\xi_j,
    \qquad
    \hat{p}_j = \mathrm{softmax}(\hat{l})_j,
    \qquad
    s_j := \xi_j,
\end{align}
where $\delta > 0$ is the watermark strength and $\xi_j$ is drawn from a
distribution determined by the PRF and the current context. For KGW,
$\xi_j \sim \mathrm{Bernoulli}(\gamma)$; for GumbelSoft,
$\xi_j \sim \mathrm{Gumbel}(0,1)$. In both cases the detector score for
the sampled token is $s_j = \xi_j$.

\begin{theorem}[Score-Ratio Alignment: Logit-Addition]
\label{thm:logit-addition}
For any logit-addition watermark with $\delta > 0$, we have:
\[
    \arg\min_{j \in \mathcal{V}} \frac{\hat{p}_j}{p_j}
    \;\subseteq\;
    \arg\min_{j \in \mathcal{V}} \mathbb{E}[s_j].
\]
\end{theorem}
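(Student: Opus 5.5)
The plan is to compute the ratio $\hat p_j/p_j$ in closed form and observe that it is a strictly increasing function of the single quantity $\xi_j$. From that, ordering tokens by the ratio is the same as ordering them by the score. By definition of the softmax,
\[
\frac{\hat p_j}{p_j} \;=\; \frac{e^{l_j+\delta\xi_j}/\sum_k e^{l_k+\delta\xi_k}}{e^{l_j}/\sum_k e^{l_k}} \;=\; e^{\delta\xi_j}\cdot C(\xi), \qquad C(\xi):=\frac{\sum_k e^{l_k}}{\sum_k e^{l_k+\delta\xi_k}}.
\]
The factor $C(\xi)>0$ depends on the whole vector $\xi$ and on $l$, but not on the index $j$. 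Hence, for a fixed realization of $\xi$, the map $j\mapsto \hat p_j/p_j$ equals $j\mapsto g(\xi_j)$ with $g(u)=C e^{\delta u}$. Since $\delta>0$, $g$ is strictly increasing. Therefore $\arg\min_j \hat p_j/p_j = \arg\min_j \xi_j$, with equality of the two sets rather than mere inclusion.

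The second step links this to the score. In both KGW and GumbelSoft the detector score of token $j$ is $s_j=\xi_j$, a deterministic function of the realized pseudorandom vector at the current context. I will read $\mathbb{E}[s_j]$ as the expectation conditional on the information that fixes $\hat p$ (the realized $\xi$ at this step), so that $\mathbb{E}[s_j]=\xi_j$. With this reading, $\arg\min_j\mathbb{E}[s_j]=\arg\min_j \xi_j$, and the inclusion follows immediately from the first step.

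I expect the main obstacle to be interpretive: making precise which randomness $\mathbb{E}$ averages over. If $\mathbb{E}$ averaged over the unconditional PRF distribution, every $\xi_j$ would be i.i.d. Bernoulli$(\gamma)$ or Gumbel$(0,1)$. Then $\mathbb{E}[s_j]$ would be constant in $j$, the right-hand argmin would be all of $\mathcal V$, and the claim would hold trivially but vacuously. I will state explicitly that the attacker's estimated ratio $\hat p_j/p_j$ is a function of the realized $\xi$, because it is a frequency estimate under a fixed context and key. The meaningful statement is therefore conditional on $\xi$, and I will justify this choice before the computation.

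If a version averaged over $\xi$ is needed, the fallback is this: for any token $j$ attaining the minimum ratio, the conditional score $\xi_j$ equals $\min_k \xi_k$ pointwise. Consequently, selecting $j^\star(\xi)$ gives $\mathbb{E}_\xi[s_{j^\star(\xi)}]=\mathbb{E}_\xi[\min_k\xi_k]\le \mathbb{E}_\xi[s_{J}]$ for any selection rule $J$, which is the optimality claim of \S\ref{sec:optimality}. The computation is routine; the care goes into this conditioning and into noting that ties among equal $\xi_j$ (frequent for binary KGW) are handled by set inclusion.
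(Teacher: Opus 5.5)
Your proposal is correct and follows the paper's proof essentially verbatim. Both arguments factor $\hat p_j/p_j = e^{\delta\xi_j}\cdot C$ with $C$ independent of $j$, use strict monotonicity of $u\mapsto e^{\delta u}$ for $\delta>0$ to get $\arg\min_j \hat p_j/p_j=\arg\min_j\xi_j$, and then identify $\mathbb{E}[s_j]$ with $\xi_j$ by taking the expectation over the PRF randomness conditioned on the context, which is exactly your conditional reading (your remark that the unconditional reading would make the inclusion vacuous is a helpful clarification the paper leaves implicit).
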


\begin{remark}
\label{rem:logit-intuition}
Equation  \eqref{eq:ratio-logit} shows that the probability ratio
$\hat{p}_j / p_j$ is a monotone function of $\xi_j$ alone: the original
logit $l_j$ cancels exactly. Consequently, any token selection strategy
that minimizes $\xi_j$---equivalently, minimizes the detection score---also
minimizes the probability ratio, and vice versa. This symmetry is the
algebraic source of score-ratio alignment in all logit-addition schemes.
\end{remark}

\subsection{{KGW Variants} }
\label{subsec:proof-kgw-variant}

\noindent\textbf{Setup.}
In   KGW variants~\cite{huo2024tokenspecific,liu2024adaptive}, each token
$j \in \mathcal{V}$ receives an independent binary color
$
\xi_j \sim \mathrm{Bernoulli}(\gamma_j),
$
and a boost magnitude $\delta_j \geq 0$.  Watermarked
logit and probability are
\begin{align}
    \hat{\ell}_j = \ell_j + \delta_j\,\xi_j,
    \qquad
    \hat{p}_j = \mathrm{softmax}(\hat{\ell})_j,
    \qquad
    s_j = \xi_j.
\end{align}
The probability ratio simplifies to
\begin{align}
\small
    \frac{\hat{p}_j}{p_j}
    = \frac{e^{\ell_j + \delta_j\xi_j} / \sum_j e^{\ell_j + \delta_j\xi_j}}
            {e^{\ell_j}                  / \sum_j e^{\ell_j}}
     = e^{\delta_j\xi_j}
       \cdot \underbrace{\frac{\sum_j e^{\ell_j}}
                              {\sum_j e^{\ell_j + \delta_j\xi_j}}}_{=:\,C},
     \label{eq:ratio-kgw-variant}
\end{align}
where $C$ is a constant,
so
$\arg\min_i \hat{p}_j / p_j = \arg\min_j \delta_j\xi_j$.

\begin{theorem}[Score-Ratio Alignment: KGW Variants]
\label{thm:kgw-variant}
For every realization of the token-specific green list, we have:
\[
    \arg\min_{j \in \mathcal{V}} \frac{\hat{p}_j}{p_j}
    \;\subseteq\;
    \arg\min_{j \in \mathcal{V}} s_j.
\]
\end{theorem}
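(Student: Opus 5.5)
The plan is to argue pointwise, for a fixed realization of the green-list vector $\xi=(\xi_j)_{j\in\mathcal V}$, using the factorization \eqref{eq:ratio-kgw-variant}. The key observation is that the normalizing factor $C=\sum_k e^{\ell_k}/\sum_k e^{\ell_k+\delta_k\xi_k}$ depends on $\xi$ but not on the index $j$. It is also strictly positive. Hence for this fixed realization the map $j\mapsto \hat p_j/p_j = C\,e^{\delta_j\xi_j}$ is a strictly increasing function of the single quantity $\delta_j\xi_j$. Since $u\mapsto Ce^{u}$ is strictly increasing, it preserves minimizers exactly, so
\[
\arg\min_{j\in\mathcal V}\frac{\hat p_j}{p_j}=\arg\min_{j\in\mathcal V}\delta_j\xi_j .
\]
This is an equality of sets, which is stronger than the claimed inclusion. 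Because the vocabulary is finite, both argmin sets are nonempty, so nothing is vacuous.

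The second step identifies $\delta_j\xi_j$ with the detector score. Following the Background description of the generalized KGW scheme, the per-token score is $s_j=\delta_j\xi_j$: it equals $\delta_t[x]\xi_t[x]$ for Token-Specific KGW and $\delta_t\xi_t[x]$ for Adaptive KGW. With this identification, the displayed equality is exactly the statement of Theorem~\ref{thm:kgw-variant}. For Adaptive KGW, where $\delta_j\equiv\delta_t>0$, the argmin of $\delta_t\xi_j$ coincides with the argmin of $\xi_j$. So the result also holds verbatim with $s_j=\xi_j$, and it reduces to the same argument as Theorem~\ref{thm:logit-addition}.

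The main obstacle is the mismatch between the Setup, which writes $s_j=\xi_j$, and the Background, which writes $s_j=\delta_j\xi_j$, in the token-specific case. If one insists on $s_j=\xi_j$ while allowing $\delta_j=0$ for some tokens, the inclusion can fail. A token with $\xi_j=1$ and $\delta_j=0$ attains $\delta_j\xi_j=0$, the ratio minimum, but it has $s_j=1$, which exceeds the score of any red token. I would therefore handle this in one of two ways. Either use the detector's actual score $s_j=\delta_j\xi_j$, which is the quantity entering \eqref{eq:generalized-kgw-stat}. Or, under the convention $s_j=\xi_j$, add the mild assumption that $\delta_j>0$ on green tokens. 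Under that assumption, $\delta_j\xi_j=0$ if and only if $\xi_j=0$. We then split into two cases. If some red token exists, both argmin sets are exactly the red tokens. If all tokens are green, then $\arg\min_j\xi_j=\mathcal V$ and the inclusion is trivial.

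Finally, I would remark that because the argument holds for every realization of $\xi$, it also holds conditionally on the context. This is what the attack in Algorithm~\ref{alg:blackbox_watermark_attack} needs, since it acts at a single decoding step with a fixed PRF output.
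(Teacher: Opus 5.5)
Your proposal is correct. Its fallback branch is exactly the paper's proof: the paper adopts the Setup's convention $s_j=\xi_j$ and asserts $\delta_j>0$ ``by design'', even though the Setup writes $\delta_j\ge 0$. It then splits on whether a red token exists. When one does, it obtains equality of the two argmin sets; when all tokens are green, it obtains only the trivial inclusion into $\mathcal V$.

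Your primary branch differs in taking the score to be $s_j=\delta_j\xi_j$, the quantity actually aggregated in \eqref{eq:generalized-kgw-stat}. With that choice, the strict monotonicity of $u\mapsto Ce^{u}$, for $C>0$ independent of $j$, gives $\arg\min_j \hat p_j/p_j=\arg\min_j s_j$ immediately. There is no case split and no positivity assumption on the boosts. It is also sharper in the all-green case: there the paper gets only inclusion, because the ratios $Ce^{\delta_j}$ vary while every $\xi_j=1$, whereas you get equality with $\arg\min_j \delta_j$.

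The paper's convention, for its part, is the one behind the companion Theorem~\ref{thm:kgw-feasibility}, which says the minimizers are exactly the red tokens; that is the form the attack intuition relies on. Your counterexample, a green token with $\delta_j=0$ when a red token exists, correctly pinpoints why the paper needs the strict positivity it asserts.
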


\begin{theorem}[Feasibility of Red-Token Minimizers: Token-Specific KGW]
\label{thm:kgw-feasibility}
In the Token-Specific KGW watermark, every red token is a feasible
minimizer of the probability ratio. In particular, if at least one token
has $\xi_j = 0$, then
\[
    \arg\min_{j \in \mathcal{V}} \frac{\hat{p}_j}{p_j}
    = \{j : \xi_j = 0\}
    = \arg\min_{j \in \mathcal{V}} s_j.
\]
If no red token exists, every token is a feasible minimizer.
\end{theorem}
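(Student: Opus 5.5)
The plan is to use the ratio identity \eqref{eq:ratio-kgw-variant}, which holds for every realization of the green list: $\hat p_j/p_j = C\,e^{\delta_j\xi_j}$, where the normalizer $C>0$ does not depend on $j$. The argument has two steps. First, determine the minimal value of $\delta_j\xi_j$ over $\mathcal V$. Second, read off which tokens achieve it and compare that set with $\arg\min_j s_j=\arg\min_j \xi_j$.

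For the first step, note that $\delta_j\ge 0$ and $\xi_j\in\{0,1\}$ imply $\delta_j\xi_j\ge 0$ for all $j$. A red token has $\xi_j=0$ and therefore $\delta_j\xi_j=0$, which is the smallest possible value. This proves the "feasibility" clause: every red token attains the global minimum of the ratio, namely $\hat p_j/p_j=C$. Now suppose at least one red token exists. Then the minimum of $\delta_j\xi_j$ is $0$. Since $\exp$ is strictly increasing and $C>0$, $\arg\min_j \hat p_j/p_j=\{j:\delta_j\xi_j=0\}$. On the score side, $\min_j\xi_j=0$, so $\arg\min_j s_j=\{j:\xi_j=0\}$.

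The main obstacle is the claimed equality $\{j:\delta_j\xi_j=0\}=\{j:\xi_j=0\}$. The inclusion $\supseteq$ is immediate. The reverse inclusion fails if some green token has $\delta_j=0$, because such a token also achieves ratio $C$ but has $s_j=1$. I would handle this by interpreting green tokens as receiving a strictly positive boost, i.e.\ $\delta_j>0$ whenever $\xi_j=1$. This is the natural reading, since a zero-boost "green" token is indistinguishable from a red one in the decoder. I would state this assumption explicitly in the proof. Under it, $\delta_j\xi_j=0$ holds iff $\xi_j=0$, and the displayed chain of equalities follows. Without it, only the inclusion $\{j:\xi_j=0\}\subseteq\arg\min_j \hat p_j/p_j$ survives, which is still exactly the feasibility statement.

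Finally, consider the case with no red token, where $\xi_j=1$ for all $j$. Then $\hat\ell_j=\ell_j+\delta_j$, and the ratio equals $Ce^{\delta_j}$. If the boosts are uniform, as in Adaptive KGW, the ratio equals $1$ for every token, so every token is a minimizer of both the ratio and $s_j\equiv 1$. In the genuinely token-specific case, "every token is a feasible minimizer" should be read with respect to the score: all tokens tie at $s_j=1$, so $\arg\min_j s_j=\mathcal V$ and contains $\arg\min_j \hat p_j/p_j$. This is consistent with Theorem~\ref{thm:kgw-variant}. I would phrase this case as the containment in $\mathcal V$, noting that the ratio minimizers are those with smallest $\delta_j$.
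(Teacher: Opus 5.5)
Your proposal is correct and takes the same route as the paper. You use $\hat p_j/p_j = C e^{\delta_j\xi_j}$, note that red tokens attain the minimal factor $e^0=1$ while green tokens get $e^{\delta_j}>1$, and read the all-green case as the trivial containment in $\arg\min_j s_j=\mathcal V$. The strictly positive boost you add as an explicit hypothesis is exactly what the paper relies on: it asserts $\delta_j>0$ ``by design'' in the proof of Theorem~\ref{thm:kgw-variant}, even though the setup allows $\delta_j\geq 0$, so your version only makes that dependence visible.
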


\subsection{DipMark}
\label{subsec:proof-dipmark}

\noindent\textbf{Setup.} 
In DipMark, a
step samples a uniformly random permutation
$
    \iota : \{0, 1, \ldots, |\mathcal{V}|-1\} \to \mathcal{V},
$
which induces a shuffled ordering of the vocabulary $\mathcal{V}$. The binary watermark
color (score) assigned to token $\iota[k]$ is
\begin{align}
    s_{\iota[k]} =
    \begin{cases}
        0 & \text{(red)}   \quad \text{if } k < \lceil \gamma|\mathcal{V}| \rceil, \\
        1 & \text{(green)} \quad \text{otherwise,}
    \end{cases}
\end{align}
which means the first $\lceil \gamma|\mathcal{V}| \rceil$ tokens in the permuted order form the red prefix.

We use
$\hat{p}_j / p_j$ as the \emph{degradation factor} of token $j$.

\begin{theorem}[Score-Ratio Alignment: DipMark]
\label{thm:dipmark}
In DipMark watermarking, the following inclusion holds:
\[
    \arg\min_{j \in \mathcal{V}} \frac{\hat{p}_j}{p_j}
    \;\subseteq\;
    \arg\min_{j \in \mathcal{V}} \mathbb{E}[s_j].
\]
Equivalently, every token achieving the smallest probability-ratio also achieves the
smallest expected watermark score.
\end{theorem}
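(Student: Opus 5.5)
Fix a realization of the permutation $\iota$ and write $P_k=\sum_{i\le k}p(\iota[i])$ for the cumulative base mass along the permuted order. The plan is to show the inclusion pointwise in $\iota$. Since $s$ is then deterministic, $\mathbb{E}[s_j]=s_j$, and it suffices to prove that every minimizer of $\hat p_j/p_j$ is a red token whenever a red token has positive mass.

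\textbf{Step 1: ratios from the decoder.} Equation~\eqref{eq:dipmark-decoder} gives $\lambda(k)=\max\{P_k-\alpha,0\}+\max\{P_k-(1-\alpha),0\}$, i.e.\ $\lambda=f(P)$ with $f$ the piecewise-linear map $f(u)=0$ on $[0,\alpha]$, $f(u)=u-\alpha$ on $[\alpha,1-\alpha]$, and $f(u)=2u-1$ on $[1-\alpha,1]$. For token $j=\iota[k]$,
\[
\frac{\hat p_j}{p_j}=\frac{f(P_k)-f(P_{k-1})}{P_k-P_{k-1}},
\]
which is the average slope of $f$ over the interval $[P_{k-1},P_k]$.

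\textbf{Step 2: monotonicity in position.} The slopes of $f$ are $0,1,2$, so $f$ is convex and nondecreasing. Average slopes over consecutive, non-overlapping intervals of a convex function are therefore nondecreasing in $k$. Hence the ratio $\hat p_{\iota[k]}/p_{\iota[k]}$ is nondecreasing along the permuted order, and the minimal ratio is attained by the earliest tokens in $\iota$. Tokens with $p_j=0$ are excluded, since their ratio is undefined.

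\textbf{Step 3: match to colours.} The red prefix consists of the first $\lceil\gamma|\mathcal V|\rceil$ positions. By Step 2, any minimizer $j^\star$ either lies in this prefix, and so has $s_{j^\star}=0=\min_j s_j$, or attains the same minimal ratio as all earlier tokens.

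\textbf{Main obstacle.} The hard case is a green token at a later position that ties with the minimal ratio. This happens when several tokens, both red and green, lie entirely inside the flat region $[0,\alpha]$ of $f$ with ratio $0$, or entirely inside $[\alpha,1-\alpha]$ with ratio $1$. In that situation the pointwise inclusion fails unless the tie is resolved. I would address it by switching to expectations over $\iota$: $\mathbb{E}[s_j]$ is identical for all $j$, equal to $1-\gamma$ by symmetry of a uniform permutation, so the right-hand argmin is all of $\mathcal V$ and the inclusion holds trivially. I would first state this trivial version explicitly, then present the pointwise red-prefix argument of Steps 1--3, which is the content the attack actually exploits. That pointwise argument requires the standing assumption that the mass of the red prefix reaches $\alpha$, i.e.\ $\gamma$ is aligned with $\alpha$, so that red tokens occupy the low-slope part of $f$.
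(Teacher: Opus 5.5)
Your proposal is correct, and it is more careful than the paper's proof, which takes a different and looser route. The paper splits on whether the red prefix is empty. If it is empty, $\arg\min_j\mathbb{E}[s_j]=\mathcal V$, as in your trivial version. Otherwise it asserts informally that DipMark shifts mass from the start of the permuted list toward the end, and concludes that $\iota[0]$ lies in both $\arg\min_j \hat p_j/p_j$ and $\arg\min_j \mathbb{E}[s_j]$.

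Your Step~2 is the rigorous version of that informal claim: $f$ is convex with slopes $0,1,2$, so chord slopes over consecutive intervals are nondecreasing. It recovers the paper's conclusion that the first supported token minimizes the ratio. But one common element only shows that the two argmin sets intersect; it does not give the inclusion.

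The paper writes $\min_j\mathbb{E}[s_j]=0$, so it is reading $\mathbb{E}[s_j]$ pointwise in $\iota$. In that reading your tie analysis shows the inclusion fails in general. Take uniform $p$ on ten tokens, $\alpha=0.45$, and a one-token red prefix. The first four permuted tokens have cumulative masses $0.1,\dots,0.4\le\alpha$, so all have ratio $0$, yet three of them are green. Your two-part resolution is therefore the right one: the statement is trivial when the expectation is over $\iota$, and it needs an extra hypothesis in the pointwise reading. The paper's argument does not supply such a hypothesis.

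Two small corrections. First, ties at ratio $1$ never involve the minimum. The first supported token's chord starts at $0$, where $f$ is flat, so its ratio is $f(P)/P<1$ unless it carries all the mass. Hence only ratio-$0$ ties (supported tokens with $P_k\le\alpha$) matter. Second, your sufficient condition $P_{\lceil\gamma|\mathcal V|\rceil}\ge\alpha$ is a property of the realized pair $(p,\iota)$, not an alignment of the parameters $\gamma$ and $\alpha$. Whenever the green region is nonempty, putting most of the mass on a green token makes the red-prefix mass arbitrarily small. So the assumption must be imposed on the realization, or the claim weakened to ``some ratio minimizer is red.''
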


\begin{theorem}[Feasibility of Index 0: DipMark]
\label{thm:dipmark-feasibility}
In DipMark watermarking, $\iota[0]$ is always a feasible minimizer of
$s_j$ in the index space induced by the permutation. In particular,
\[
    \arg\min_{j \in \mathcal{V}} \frac{\hat{p}_j}{p_j}
    = \iota[0]
    \in \arg\min_{j \in \mathcal{V}} \mathbb{E}[s_j].
\]
\end{theorem}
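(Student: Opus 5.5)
The plan is to compute the DipMark reweighting explicitly along the permuted order $\iota[0],\iota[1],\ldots$. Write $F_k=\sum_{i\le k}p_{\iota[i]}$ for the cumulative base mass, with $F_{-1}=0$. Let $\lambda(F)=\max\{F-\alpha,0\}+\max\{F-(1-\alpha),0\}$ as in \eqref{eq:dipmark-decoder}. Then
\[
\hat p_{\iota[k]}=\lambda(F_k)-\lambda(F_{k-1}).
\]
Since $\lambda$ is piecewise linear with slope $0$ on $[0,\alpha]$, $1$ on $[\alpha,1-\alpha]$ and $2$ on $[1-\alpha,1]$, the ratio $\hat p_{\iota[k]}/p_{\iota[k]}$ is the average slope of $\lambda$ over the interval $[F_{k-1},F_k]$. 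Because $\lambda$ is convex, this average slope is nondecreasing in $k$. The key lemma is therefore monotonicity: $k\mapsto \hat p_{\iota[k]}/p_{\iota[k]}$ is nondecreasing along the permutation. It follows that $\iota[0]$ always attains the minimal ratio, since its interval $[0,F_0]$ sits leftmost where the slope is smallest.

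I would next identify the argmin set of the ratio. It consists of an initial segment $\{\iota[0],\ldots,\iota[k_0]\}$ of the permuted order, possibly only $\iota[0]$ when ties are broken toward the first index. The theorem's statement writes equality with $\iota[0]$, so I would adopt the convention that the minimizer is the first index attaining the minimum.

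For the score side, the red prefix consists of the first $\lceil\gamma|\mathcal V|\rceil\ge 1$ positions, so $\iota[0]$ is red and $s_{\iota[0]}=0$. Scores take values in $\{0,1\}$, so $0$ is the global minimum. Hence $\iota[0]\in\arg\min_j s_j$ for every realization of $\iota$, i.e.\ it is a feasible minimizer in the index space. Taking the expectation with $\iota$ fixed (conditionally on the permutation the score is deterministic), $\mathbb E[s_{\iota[0]}]=0$ is the smallest possible value. This gives the stated inclusion.

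The main obstacle is interpreting the statement rigorously. The argmin of the ratio may contain several tokens (all those whose intervals lie inside $[0,\alpha]$ have ratio $0$), and some of them may lie beyond the red prefix when $\gamma$ is small. So equality with $\{\iota[0]\}$ needs the first-index tie-break convention, and the expectation must be read conditionally on $\iota$. Unconditionally, $\mathbb E[s_j]$ is identical for every token by symmetry of the uniform permutation, and the claim becomes trivial. I would state both conventions explicitly and note that Theorem~\ref{thm:dipmark} supplies the general inclusion, while here only $\iota[0]$'s membership in both sets is needed.
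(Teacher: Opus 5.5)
Your argument is correct and follows the same route as the paper's. First, $\iota[0]$ lies in the red prefix, so $s_{\iota[0]}=0$ is the minimum score. Second, DipMark removes mass from the front of the permuted order, so $\iota[0]$ attains the smallest ratio. The difference is that you actually prove the second step, by writing $\hat p_{\iota[k]}/p_{\iota[k]}$ as the secant slope of the convex $\lambda$ over $[F_{k-1},F_k]$ (hence nondecreasing in $k$), whereas the paper only asserts it informally.

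Your caveats are also accurate. The paper's own proof establishes only $\iota[0]\in\arg\min_j \hat p_j/p_j$, so the stated equality does need a tie-breaking convention. One correction: by your own observation that zero-ratio tokens can lie beyond the red prefix, the inclusion in Theorem~\ref{thm:dipmark} fails under the conditional reading (given $\iota$) and is trivial under the unconditional one. So you should not cite that theorem as ``supplying the general inclusion''; your feasibility argument does not need it anyway.
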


\subsection{ITS}
\label{subsec:proof-its}

\noindent\textbf{Setup.}
Recall from \S\ref{sec:background:its} that the key
at position $t$ is $\xi_t=(\mu_t,\pi_t)$,
where $\mu_t\sim\operatorname{Unif}(0,1)$ and $\pi_t$ is a
random permutation of $\mathcal V$. These keys are
obtained from the pre-committed sequence $\xi\text{-list}$ by
applying a random document-level shift. The following result
uses only the single-position marginal distribution of
$\xi_t$; and independence between different positions is not
required. 
Fix a decoding step $t$ and its preceding context. Let
$p_t$ be the corresponding base-model distribution, define
$    \mathcal S_t:=\{j\in\mathcal V:p_t(j)>0\},
$
and let $x_t=\Gamma(p_t,\xi_t)$
denote the token generated by the ITS decoder in
Equation~\eqref{eq:its-decoder}.
For a permutation $\pi_t$, token $j$ occupies the
permuted-CDF interval
\begin{align}
\small
 I_{t,j}^{\pi_t}
 :=
 \Big[
  \sum_{i:\pi_t(i)<\pi_t(j)}p_t(i),
  \sum_{i:\pi_t(i)\leq\pi_t(j)}p_t(i)
 \Big).
 \label{eq:its-proof-cdf-interval}
\end{align}
Consequently,
$
    x_t=j
    \Longleftrightarrow
    \mu_t\in I_{t,j}^{\pi_t}.
$

\begin{theorem}[ITS detection-statistic reduction]
\label{thm:its}

If $|\mathcal S_t|\geq2$,
\begin{align}
 \mathbb{E}_{\xi_t}
 \left[
  -s(x_t^\star)
 \right]
 <
 \mathbb{E}_{\xi_t}
 \left[
  -s(x_t)
 \right].
 \label{eq:its-proof-statistic-reduction}
\end{align}
Thus, empirical probability-ratio minimization reduces the
expected detection statistic to its key-independent null
baseline.
\end{theorem}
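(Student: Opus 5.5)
The plan is to compare two joint laws of the pair $(\mu_t,\pi_t(\cdot))$ with the selected token: the ITS coupling and the attacker's key-independent coupling. We then show the ITS coupling is strictly better at making $|\mu_t-\eta(\pi_t(x))|$ small.

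\textbf{Step 1: identify $x_t^\star$.} For a fixed key, the ITS decoder is deterministic, so $\hat p_t$ is a point mass at $x_t$. The \emph{empirical} ratio that Algorithm~\ref{alg:blackbox_watermark_attack} estimates is therefore taken over the randomness the attacker actually sees, namely the document-level shift of the $\xi$-list. Because ITS is distortion-free, this averaged ratio is $\approx 1$ on $\mathcal S_t$. So the ratio carries no information about $\xi_t$, and the token $x_t^\star$ returned by ratio minimization (ties broken by the surrogate) is independent of $\xi_t=(\mu_t,\pi_t)$. I will formalize this as: $x_t^\star$ is drawn from $p_t$ independently of $\xi_t$. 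This matches the distortion-free marginal, and it is the cleanest reading of ``key-independent null baseline''; if needed I will note that this is the modelling assumption the theorem rests on.

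\textbf{Step 2: rewrite both expectations conditionally on $\pi_t$.} Fix $\pi_t=\pi$ and write $\eta_j:=\eta(\pi(j))$.
\begin{itemize}
\item For the watermarked token, $x_t=j\iff\mu_t\in I^{\pi}_{t,j}$, so
\[
\mathbb E[s(x_t)\mid\pi]=\sum_j\int_{I^{\pi}_{t,j}}|\mu-\eta_j|\,d\mu .
\]
\item For the attacker,
\[
\mathbb E[s(x_t^\star)\mid\pi]=\sum_j p_t(j)\int_0^1|\mu-\eta_j|\,d\mu .
\]
\end{itemize}
In both cases the pair $(\mu,\eta_J)$ has the same marginals: $\mu\sim\mathrm{Unif}(0,1)$, and $J\sim p_t$, hence $\eta_J$ has the law of $\eta_J$ under $J\sim p_t$. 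This holds because the intervals $I^\pi_{t,j}$ have lengths $p_t(j)$.

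\textbf{Step 3: comonotone versus independent coupling.} The intervals $I^\pi_{t,j}$ are laid out in increasing order of $\pi(j)$. Hence in the ITS coupling the map $\mu\mapsto\eta_{x_t}$ is nondecreasing, which makes $(\mu,\eta_{x_t})$ the comonotone coupling. The cost $c(u,v)=|u-v|$ is submodular, i.e.\ $-c$ is supermodular. By the classical rearrangement / Lorentz inequality for submodular costs (equivalently, optimality of the monotone coupling in one-dimensional optimal transport with a convex cost in $u-v$), the comonotone coupling minimizes $\mathbb E\,c$ among all couplings with the given marginals. In particular it does at least as well as the independent coupling, so $\mathbb E[s(x_t)\mid\pi]\le\mathbb E[s(x_t^\star)\mid\pi]$. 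Averaging over $\pi_t$ and negating gives the non-strict version of \eqref{eq:its-proof-statistic-reduction}.

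\textbf{Step 4 (main obstacle): strictness.} Strictness is the delicate step, because $|u-v|$ is not strictly submodular everywhere. The plan is to exhibit, for at least one $\pi$, a strict gain using $|\mathcal S_t|\ge2$.
\begin{itemize}
\item Take two support tokens $a,b$ with $\pi(a)<\pi(b)$, so $\eta_a<\eta_b$. Compare the independent coupling with the one obtained by swapping mass between the rectangles $I_a\times\{\eta_b\}$ and $I_b\times\{\eta_a\}$.
\item For $u<u'$ and $v<v'$, one has $|u-v|+|u'-v'|<|u-v'|+|u'-v|$ whenever the interval $(u,u')$ and the interval $(v,v')$ overlap on a set of positive length.
\item This overlap happens for a positive-probability set of $\pi$. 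For instance, place $a$ and $b$ at adjacent ranks near the middle, so that $[\eta_a,\eta_b]$ meets the interior of the CDF interval they span. This yields a strict inequality for that $\pi$, and the strict gain survives averaging over the uniform permutation.
\end{itemize}
If this direct swap argument gets messy, my fallback is to compute $\mathbb E[s(x_t^\star)\mid\pi]=\sum_j p_t(j)\,\tfrac{\eta_j^2+(1-\eta_j)^2}{2}$ explicitly. I would then check the strict gap in the simplest case, where $\pi$ places the two support tokens at ranks $1$ and $|\mathcal V|$.
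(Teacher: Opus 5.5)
Your argument is correct in substance, and it takes a different route from the paper's. The paper compares the two scores only after averaging over $\pi_t$. It first computes $\mathbb{E}_{\xi_t}[s(x_t^\star)]=c_v=\frac{2v-1}{6(v-1)}$, with $v=|\mathcal V|$, directly from key-independence. It then shows that the ITS rank $R_t=\pi_t(x_t)$ is marginally uniform. Next it proves positive quadrant dependence of $(\mu_t,R_t)$ from $\Pr(\mu_t\le u,R_t\le r)=\mathbb{E}_{\pi_t}[\min\{u,C^{\pi_t}_{t,r}\}]\ge ur/v$. It turns this into $\mathbb{E}|\mu_t-Z_t|\le c_v$ via the layer-cake formula for $\mathbb{E}\min\{\mu_t,Z_t\}$ and $|a-b|=a+b-2\min\{a,b\}$. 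Strictness comes from $\min\{u,c\}>uc$ at rank $1$.

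You instead fix $\pi_t$ and observe that ITS realizes the comonotone coupling of $\mu_t$ with $\eta(\pi_t(x_t))$. You then invoke optimality of the monotone coupling for the submodular cost $|u-v|$, and get strictness from an explicit uncrossing swap. Your version gives a stronger per-permutation statement: for every $\pi$, ITS minimizes the expected score among all ways of coupling $\mu_t$ with a token of law $p_t$. It also makes the source of strictness transparent. The paper's version is fully elementary, and because it averages over $\pi_t$ before comparing, it never needs the law of $x_t^\star$.

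That last point is where your write-up needs repair. The ratio minimizer $x_t^\star$ is not distributed as $p_t$, and the theorem does not rest on that assumption. It rests only on $x_t^\star$ being a function of auxiliary keys independent of $\xi_t$. Your Step 3 needs matching marginals conditional on $\pi$, so insert the following observation. For every fixed $j$, $\mathbb{E}_{\xi_t}[s(j)]=c_v$, because $\pi_t(j)$ is uniform and independent of $\mu_t$. Hence $\mathbb{E}_{\xi_t}[s(x_t^\star)]=c_v=\mathbb{E}[s(J)]$ for an independent $J\sim p_t$, and you can run Steps 2--4 with $J$ in place of $x_t^\star$.

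For strictness, use your fallback placement rather than adjacent middle ranks. The middle-rank placement gains nothing if support tokens ranked before $a$ already carry mass at least $\eta_b$. Instead, put two support tokens $a,b$ at ranks $1$ and $|\mathcal V|$, with the others arbitrary. Then $I_a=[0,p_t(a))$, $I_b=[1-p_t(b),1)$, and $(\eta_a,\eta_b)=(0,1)\supseteq(u,u')$ for all $u\in I_a$, $u'\in I_b$. Move density $\lambda p_t(b)$ on $I_a$ from level $\eta_b$ to $\eta_a$, and density $\lambda p_t(a)$ on $I_b$ from $\eta_a$ to $\eta_b$, with $\lambda\in(0,1]$. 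This preserves both marginals and lowers the independent-coupling cost by $2\lambda\int_{I_a}\int_{I_b}(u'-u)\,du'\,du>0$. Combined with Step 3, this gives a strict gap on an event of probability $1/(|\mathcal V|(|\mathcal V|-1))>0$, for any $|\mathcal S_t|\ge2$.
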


\begin{remark}
The strict inequality follows from the independence of the
attacked token and the detector key. In fact, every token
selection rule based only on auxiliary keys has the same
expected null score $\frac{2|\mathcal{V}|-1}{6(|\mathcal{V}|-1)}$ (see more details in Appendix \ref{subsec:proof-its}). Empirical probability-ratio
minimization determines which underrepresented token is
selected, while key independence is what reduces the expected
ITS detection statistic.
\end{remark}

\subsection{SynthID}
\label{subsec:synthid-alignment}

\paragraph{Setup.}
Consider SynthID Tournament sampling with $N\geq2$
competitors per match and $m\geq1$ tournament layers. Fix a
decoding step, its preceding context, and the corresponding
pseudorandom seed $r$. Let
$
    p^{(0)}:=p
$
be the base-model next-token distribution
and let
$
    \mathcal S:=\{i\in\mathcal V:p_i>0\}
$
denote its support. We restrict all probability ratios to
$\mathcal S$, since $\widehat p_i/p_i$ is undefined when
$p_i=0$.
At layer $\ell\in\{1,\ldots,m\}$, token $i$ is assigned the
possibly real-valued watermark score
$
    g_{\ell,i}:=g_\ell(i,r).
$
Let $p^{(\ell)}$ denote the winner distribution after layer
$\ell$. The final watermarked distribution is therefore
$
    \widehat p:=p^{(m)}.
$

\begin{theorem}[Score-Ratio Alignment: SynthID, Mean Score]
\label{thm:synthid-layerwise-alignment}
Consider Tournament sampling with arbitrary $N\geq2$ and
arbitrary discrete, ordered $g$-values. For every layer
$\ell$ and every $i,j\in\mathcal S$,
$ g_{\ell,i}<g_{\ell,j}
 \Longrightarrow
 \hat{p}^{(\ell)}_i/p^{(\ell)}_i< \hat{p}^{(\ell)}_j/p^{(\ell)}_j.
 \label{eq:synthid-strict-monotonicity}
$
Consequently,
\begin{align}
 \arg\min_{i\in\mathcal S} \frac{\hat{p}^{(\ell)}_i}{p^{(\ell)}_i}
 =
 \arg\min_{i\in\mathcal S}g_{\ell,i}.
 \label{eq:synthid-layer-minimizer}
\end{align}
\end{theorem}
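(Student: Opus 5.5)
The notation in the statement is overloaded, so we first fix its meaning. We read $\hat p^{(\ell)}_i/p^{(\ell)}_i$ as the one-layer amplification
\[
 \rho_{\ell}(i) := p^{(\ell)}_i / p^{(\ell-1)}_i ,
\]
which is the ratio between the winner distribution after layer $\ell$ and its input. This quantity is read off directly from the layerwise update \eqref{eq:synthid-layer-update}:
\[
 \rho_\ell(i) = \frac{Q_{\ell,\leq}(g_{\ell,i})^N - Q_{\ell,<}(g_{\ell,i})^N}{Q_{\ell,=}(g_{\ell,i})} =: h(g_{\ell,i}).
\]
It depends on $i$ only through its score level $a=g_{\ell,i}$. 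So the whole claim reduces to showing that $h$ is strictly increasing in $a$ over the levels actually occupied by tokens of $\mathcal S$. The support is preserved layer to layer: $\rho_\ell(i)\ge N\,Q_{\ell,<}^{N-1}>0$, or equals $Q_{\ell,=}^{N-1}>0$ at the lowest level. Hence $p^{(\ell-1)}_i>0$ for every $i\in\mathcal S$, and every occupied level has $Q_{\ell,=}(a)>0$.

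The key step is a mean-value interpretation. For an occupied level $a$, write $u=Q_{\ell,<}(a)$ and $v=Q_{\ell,\le}(a)$, so that $v-u=Q_{\ell,=}(a)>0$. Then
\[
 h(a) = \frac{v^N-u^N}{v-u} = \sum_{k=0}^{N-1} u^k v^{N-1-k},
\]
which is the average of $Nx^{N-1}$ over $[u,v]$. Now take occupied levels $a<b$, with intervals $[u_a,v_a]$ and $[u_b,v_b]$. Because the cumulative masses are monotone, $v_a\le u_b$. Since $x\mapsto Nx^{N-1}$ is nondecreasing, the average over $[u_a,v_a]$ is at most $N v_a^{N-1}$, and the average over $[u_b,v_b]$ is at least $N u_b^{N-1}$. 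This gives $h(a)\le h(b)$.

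Strictness needs care, and it is where $N\ge2$ enters. For $N\ge2$ the map $x\mapsto Nx^{N-1}$ is strictly increasing on $[0,1]$. Because the interval $[u_a,v_a]$ is nondegenerate, its average lies strictly below the value at the right endpoint: $h(a)<Nv_a^{N-1}$. Combining this with $Nv_a^{N-1}\le Nu_b^{N-1}\le h(b)$ gives $h(a)<h(b)$. This proves the strict implication $g_{\ell,i}<g_{\ell,j}\Rightarrow \rho_\ell(i)<\rho_\ell(j)$ for $i,j\in\mathcal S$.

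The argmin equality \eqref{eq:synthid-layer-minimizer} then follows. Tokens at the same level share the same ratio, since the ratio depends only on the level. By strict monotonicity across levels, the minimizers of $\rho_\ell$ are exactly the tokens of $\mathcal S$ at the minimal occupied level $\min_{i\in\mathcal S}g_{\ell,i}$.

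The main obstacle is pinning down the notation and handling degenerate levels. Levels carrying zero mass under $p^{(\ell-1)}$, and the positivity of the interval length, both need explicit treatment. Ties within a level are handled by the uniform tie-breaking already built into \eqref{eq:synthid-layer-update}. If the intended ratio is instead the cumulative one, $p^{(m)}/p^{(0)}$, this argument only gives the per-layer statement. In that case I would present the result per layer, as the theorem is phrased.
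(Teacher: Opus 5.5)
Your proposal is correct and follows the paper's proof. It uses the same reading of the ratio as the one-layer multiplier $p^{(\ell)}_i/p^{(\ell-1)}_i$, the same difference-of-powers expression depending only on the score level, and the same interleaving $Q_{\ell,<}(a)<Q_{\ell,\le}(a)\le Q_{\ell,<}(b)<Q_{\ell,\le}(b)$. The differences are minor: you get strictness by treating $h(a)$ as the average of $Nx^{N-1}$ over $[u_a,v_a]$ and separating the two levels by $Nv_a^{N-1}$, whereas the paper compares the expansion termwise (the $r=0$ term strictly increases), and you also justify support preservation across layers, which the paper simply asserts when it claims $Q_{\ell,=}(g_{\ell,i})>0$.
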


\begin{theorem}[Score-Ratio Alignment: SynthID, Bayesian Score]
\label{thm:synthid-bayesian-layerwise}
At any layer $\ell$ and condition on preceding $g$-values,
$
 \arg\min_{i\in\mathcal S}
 \frac{\widehat p^{(\ell)}_i}{p^{(\ell)}_i}
 =
 \arg\min_{i\in\mathcal S}g_{\ell,i}
 \subseteq
 \arg\min_{i\in\mathcal S}s(i).
$$ \label{eq:synthid-single-layer-bayesian-alignment}
$
\end{theorem}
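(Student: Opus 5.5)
The plan is to reduce the Bayesian case to the layerwise alignment already proved in Theorem~\ref{thm:synthid-layerwise-alignment}, and then show that the per-layer Bayes factor is increasing in the observed $g$-value. The first equality,
\[
 \arg\min_{i\in\mathcal S}\widehat p^{(\ell)}_i/p^{(\ell)}_i=\arg\min_{i\in\mathcal S}g_{\ell,i},
\]
is exactly Equation~\eqref{eq:synthid-layer-minimizer}. It needs no new argument. The only new content is the inclusion $\arg\min_i g_{\ell,i}\subseteq\arg\min_i s(i)$. Here $s(i)$ is the layer-$\ell$ log-Bayes contribution $\log\textrm{BS}_{\ell}(g_{\ell,i};\mathbf g_{<\ell})$, taken with the preceding $g$-values held fixed, as the statement stipulates. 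It therefore suffices to show that $z\mapsto \textrm{BS}_\ell(z;\mathbf g_{<\ell})$ is nondecreasing on $\mathcal Z$, since $\log$ is monotone.

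By Equation~\eqref{eq:synthid-layerwise-bayes-factor}, $\textrm{BS}_\ell(z;\mathbf g_{<\ell})$ is a mixture $\sum_c w_c\Lambda_c(z)$. The weights $w_c=\Pr(\psi_\ell=c\mid\mathbf g_{<\ell},H_1)$ are nonnegative and do not depend on $z$. So it is enough to prove that each $\Lambda_c$ is nondecreasing in $z$.

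To do this I will compute $\Lambda_c$ explicitly. With $c$ distinct competitors whose $g$-values are i.i.d.\ with null law $q$ on $\mathcal Z$, the winner's value is the maximum of $c$ draws. Ties are broken uniformly, but ties do not change which value wins. Writing $F(z)=\Pr_{H_0}(g\le z)$ and $F(z^-)=\Pr_{H_0}(g<z)$, we get
\[
\Lambda_c(z)=\frac{F(z)^c-F(z^-)^c}{F(z)-F(z^-)}=\sum_{k=0}^{c-1}F(z)^kF(z^-)^{c-1-k}.
\]
This matches the structure of Equation~\eqref{eq:synthid-layer-update}. Each term in the sum is nondecreasing in $z$, because both $F(z)$ and $F(z^-)$ increase with $z$. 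Hence $\Lambda_c$ is nondecreasing. For $c\ge2$ it is strictly increasing on atoms of positive null mass, and $\Lambda_1\equiv1$.

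Combining the steps: any $i$ minimizing $g_{\ell,i}$ minimizes $\textrm{BS}_\ell$, hence minimizes $s(i)$, which gives the inclusion. Only an inclusion is claimed because of the possibility of flatness, for example when all posterior weight sits on $c=1$. In that case the Bayes factor is constant and every token minimizes $s$.

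The main obstacle is justifying that the weights $w_c$ are genuinely independent of the current value $g_{\ell,i}$ and depend only on $\mathbf g_{<\ell}$. I will argue this from the tournament structure. The number of distinct competitors at layer $\ell$ is determined by the candidate draws and by the earlier layers' outcomes. The layer-$\ell$ $g$-function is an independent pseudorandom function. So conditioning on $\mathbf g_{<\ell}$ fixes the posterior over $\psi_\ell$ before $g_{\ell}(x)$ is revealed, which is precisely how Equation~\eqref{eq:synthid-layerwise-bayes-factor} is defined. A secondary check is that restricting to $\mathcal S$ causes no issue, since tokens outside the support are never sampled.
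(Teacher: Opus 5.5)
Your proposal is correct and follows essentially the same route as the paper: you write $\Lambda_c(z)=\bigl(F(z)^c-F(z^-)^c\bigr)/\bigl(F(z)-F(z^-)\bigr)$ as a difference-of-powers sum that is nondecreasing in $z$, carry this monotonicity through the $z$-independent mixture weights to $\textrm{BS}_\ell$, and combine it with the layerwise ratio--$g$-value alignment. The differences are cosmetic: the paper re-derives the tournament-ratio monotonicity instead of citing Theorem~\ref{thm:synthid-layerwise-alignment}, and it takes the $z$-independence of the mixture weights for granted, whereas you spell it out.
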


{
\begin{remark}
For fixed preceding $g$-values, the layerwise probability
ratio, 
the mean-score contribution
$g_{\ell,i}$, and the conditional Bayes factor are
nondecreasing in $g_{\ell,i}$. Thus, minimum-ratio selection
minimizes both the mean and Bayesian evidence contributed by
that layer.
\end{remark}}

%% file: Tradeoff.tex
\section{Detectability--Robustness Tradeoff}
\label{sec:tradeoff}

\S\ref{sec:optimality} establishes that score-ratio alignment holds
across all watermarks considered in this work.
A natural question
arises: \emph{Is this alignment an artifact of particular design choices, or reflecting a deeper structural constraint inherent to the generator--detector
paradigm?} In this section, we
show that
score-ratio alignment is
the \emph{inevitable
consequence} of designing a statistically optimal detector. This yields a
fundamental impossibility: no watermarking scheme within the current
paradigm can simultaneously maximize detection power and resist score-ratio minimization adversaries.
{\bf Proofs of the  theorems are in Appendix \ref{app:proof-tradeoff}}.

\subsection{The Fundamental Tradeoff}
\label{subsec:fundamental-tradeoff}

We first formalize the notion of a watermark that can be evaded by our
attack framework.

\begin{definition}[Defeatable Watermark]
\label{def:defeatable}
A watermarking scheme with per-token score function $s_j$ is called
\emph{defeatable} if the token minimizing the probability ratio coincides
with the token minimizing the expected detection score: $ \arg\min_{i \in \mathcal{V}} \frac{\hat{p}_j}{p_j}
    \;=\;
    \arg\min_{j \in \mathcal{V}} \mathbb{E}[s_j].$
\end{definition}

\begin{theorem}[Defeatable Implies Not Robust]
\label{thm:defeatable}
If a watermarking scheme is defeatable, then it is not robust against a
score-ratio-minimizing adversary.
\end{theorem}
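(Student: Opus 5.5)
The plan is to make ``robust'' precise and then show directly that the score-ratio-minimizing adversary violates it. I will call a scheme \emph{robust against a score-ratio-minimizing adversary} if text produced by the attacked selection rule $j^\star \in \arg\min_j \hat p_j/p_j$ still has expected detection statistic strictly above its null expectation. Equivalently, the attack does not push the expected per-token evidence down to (or below) what unwatermarked text would produce. Since every statistic $\Phi$ in \S\ref{sec:background} is a monotone nondecreasing function of $\sum_t s_t$ or of the mean score (for ITS, up to sign convention), it suffices to argue at the level of the per-token expected score $\mathbb{E}[s_j]$ and then aggregate over steps.

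\textbf{Step 1 (per-token optimality).} By Definition~\ref{def:defeatable}, at every decoding step the adversary's choice satisfies $j^\star \in \arg\min_j \mathbb{E}[s_j]$. Hence
\[
\mathbb{E}[s_{j^\star}] = \min_j \mathbb{E}[s_j] \le \sum_j q_j\,\mathbb{E}[s_j]
\]
for every distribution $q$ over $\mathcal V$. Two choices of $q$ matter. Taking $q=\hat p$ shows that the attack's expected score is at most the expected score of genuinely watermarked text. Taking $q=p$ shows it is at most the null expectation $\mathbb{E}_{H_0}[s]$. The second bound needs the per-token expected scores to be averaged under $p$ exactly as the null calibration averages them. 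I will justify that by noting that under $H_0$ the key is independent of the token, so $\mathbb{E}_{H_0}[s]=\sum_j p_j \mathbb{E}[s_j]$.

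\textbf{Step 2 (aggregation).} Summing over steps $t=1,\dots,T$ by linearity of expectation gives
\[
\mathbb{E}\Big[\sum_t s_t^{\mathrm{att}}\Big]\le \mathbb{E}_{H_0}\Big[\sum_t s_t\Big].
\]
Through the normalization in \eqref{eq:kgw-stat}, \eqref{eq:generalized-kgw-stat} and their analogues, this yields $\mathbb{E}[\Phi^{\mathrm{att}}]\le \mathbb{E}_{H_0}[\Phi]$. The threshold $\theta$ is calibrated so that $\mathbb{E}_{H_0}[\Phi]<\theta$ with a margin growing like $\sqrt{T}$ at a fixed FPR. A Chebyshev or Hoeffding bound on the bounded or sub-exponential per-token scores then shows that the attacked text's detection probability is at most the FPR, up to vanishing terms. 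So the watermark's power advantage is eliminated, and the scheme is not robust.

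\textbf{Main obstacle.} The delicate point is Step 1's comparison with the null. The expectation in the definition is taken over the key, while the adversary's token depends on the key through $\hat p$. I would handle this by reading $\mathbb{E}[s_j]$ as conditional on the information available to the adversary (the realized $\hat p$), as in Theorems~\ref{thm:kgw-variant} and~\ref{thm:synthid-layerwise-alignment}. Then $\min_j$ of the conditional expectation lies below its $p$-average, and the tower property recovers the unconditional inequality. If that route fails, I will fall back to the weaker but still sufficient conclusion from $q=\hat p$ with strict inequality whenever $\hat p$ is not a point mass on the argmin. That conclusion is that the attack strictly lowers expected detection evidence relative to watermarked output, which already contradicts any definition of robustness requiring the statistic to be preserved under the adversary.
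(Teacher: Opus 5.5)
Your Step 1 and the linear part of Step 2 are the paper's argument: defeatability makes the ratio-minimizing token a per-step minimizer of the expected score, and summing over steps lowers the expected cumulative score. Your explicit comparison with the null via $q=p$ goes beyond the paper's proof, which only asserts that minimizing $S$ ``reduces $\mathbb{E}_{\hat p}[S]$, which directly reduces TPR.''

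Even at the expectation level, two points need more than linearity. First, summing the per-step bounds requires the null mean $\sum_j p_t(j\mid h)\,\mathbb{E}[s_t(j)\mid h]$ at the \emph{attacked} prefix $h$ to equal the calibrated per-step null mean. This is not automatic, because $h$ was built from key-dependent observations. It is exactly what Assumption~\ref{ass:nullinv} supplies in Theorem~\ref{thm:minimum-ratio-effectiveness}. Second, in \eqref{eq:generalized-kgw-stat} the centering and scaling depend on the realized text, so $\mathbb{E}[\Phi]$ is not an affine image of $\mathbb{E}[\sum_t s_t]$. Compare centered per-step increments instead.

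The genuine gap is the upgrade to ``detection probability at most the FPR, up to vanishing terms.'' On the normalized scale the threshold is $\theta_\varepsilon=O(1)$, about $2.33$ at $1\%$ FPR for \eqref{eq:kgw-stat}. The $\sqrt{T}$ margin you cite on the raw-sum scale is matched by $\sqrt{T}$ fluctuations, so nothing vanishes. From per-step bounds on conditional means alone, Azuma--Hoeffding for KGW's $\{0,1\}$ scores gives only $\Pr(\Phi^{\mathrm{att}}\ge\theta_\varepsilon)\le\exp(-2\theta_\varepsilon^2\gamma(1-\gamma))\approx 0.13$ for $\gamma=0.25$, and Chebyshev is weaker still.

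More fundamentally, ordering means does not order upper tails; the remark after Theorem~\ref{thm:stochastic-dominance} says exactly this. Under your conditional-expectation reading, the chosen token may tie the null mean yet have a more dispersed score, which gives $\mathrm{TPR}>\varepsilon$.

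A TPR statement needs one of two things. The first is the almost-sure coupling $Z_t^{\mathrm A}\le\widetilde Z_t^0$ of Theorem~\ref{thm:stochastic-dominance}. That coupling uses \emph{pointwise} alignment for every key realization plus Assumption~\ref{ass:nullinv}. Pointwise alignment is the form the paper's proof actually invokes, $j^\star=\arg\min_j s_j$, and your conditional reading is too weak to supply it. The second is a uniform strict per-step gap $c>0$. Then the normalized mean drifts like $-c\sqrt{T}$, and Azuma does give $\mathrm{TPR}\to 0$.

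Otherwise, keep the conclusion at the expectation level. That is all your definition of robustness requires, and all the paper's proof of this theorem actually justifies.
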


\begin{theorem}[Detectability--Robustness Tradeoff]
\label{thm:tradeoff}
Among all hypothesis tests $\mathcal{D}$ for $H_0 : x \sim p$ versus
$H_1 : x \sim \hat{p}$, the statistically optimal detector is the
likelihood-ratio test: $\Lambda(x) = \frac{\hat{p}(x)}{p(x)}.$
Detectability is maximized when the detection score is aligned with the
log-likelihood ratio: $s(x) \;\propto\; \log \frac{\hat{p}(x)}{p(x)}.$
However, if the score ordering coincides with the likelihood-ratio ordering,
i.e., $\arg\min_{j}\frac{\hat{p}_j}{p_j} = \arg\min_{j}\, s_j$,
then an adversary selecting tokens to minimize $\hat{p}_j/p_j$
simultaneously minimizes the detection score $s_j$, driving TPR to
baseline. Hence, maximizing statistical detectability necessarily
introduces the score-ratio alignment that adversaries can exploit.
\end{theorem}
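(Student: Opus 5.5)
The plan has three parts, one for each claim in Theorem~\ref{thm:tradeoff}: optimality of the likelihood-ratio test, detectability maximized by the log-likelihood-ratio score, and alignment implying defeat.

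\textbf{Step 1: optimality of the likelihood-ratio test.} I will invoke the Neyman--Pearson lemma directly. For the simple hypotheses $H_0: x\sim p$ and $H_1: x\sim\hat p$ at a fixed FPR level $\alpha$, the test that rejects when $\Lambda(x)=\hat p(x)/p(x)>\kappa$ is most powerful. Randomization at the boundary $\Lambda(x)=\kappa$ handles the discrete vocabulary and makes the size exactly $\alpha$.

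Because $\log$ is strictly increasing, thresholding $\Lambda$ is the same as thresholding $\log\Lambda$. For sequences under the autoregressive factorization, the log-likelihood ratio splits into a sum of per-token terms:
\[
\log\Lambda(\mathbf{x})=\sum_{t}\log\frac{\hat p_t(x_t)}{p_t(x_t)}.
\]
So a per-token score $s(x)=a\log(\hat p(x)/p(x))+b$ with $a>0$, aggregated by summation as in the detectors of \S\ref{sec:background}, reproduces the most powerful test.

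\textbf{Step 2: detectability requires an order-preserving score.} I will argue that any score inducing a different ordering of tokens than $\hat p_j/p_j$ yields a rejection region that is not a superlevel set of $\Lambda$. By the uniqueness part of Neyman--Pearson, such a test is strictly less powerful whenever the two regions differ on a set of positive probability. Hence maximal detectability forces the score to be a monotone transform of $\log(\hat p_j/p_j)$. In particular the argmin sets coincide,
\[
\arg\min_j \frac{\hat p_j}{p_j}=\arg\min_j s_j,
\]
which is exactly the defeatable condition of Definition~\ref{def:defeatable}.

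\textbf{Step 3: alignment implies defeat.} I will apply Theorem~\ref{thm:defeatable}: a defeatable scheme is not robust against a score-ratio-minimizing adversary. Concretely, an adversary always emitting $j^\star\in\arg\min_j \hat p_j/p_j$ obtains $s_{j^\star}=\min_j s_j$ at every step.

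To get ``TPR to baseline'', I need to compare this minimum with the null. Since $\sum_j p_j\,(\hat p_j/p_j)=1$, the minimum ratio is at most $1$, so its log is at most $0$, and with $s=a\log(\hat p/p)+b$ we get $s_{j^\star}\le b$. Also $\mathbb{E}_{p}[\log(\hat p/p)]=-\mathrm{KL}(p\|\hat p)\le 0$, so $\mathbb{E}_{H_0}[s]=b-a\,\mathrm{KL}(p\|\hat p)\le b$. Comparing these two facts directly would not settle the sign, because both quantities are bounded above by $b$. Instead I will use the fact that the minimum of a function lies below its mean under any distribution, here the mean under $p$:
\[
\min_j s_j\le \mathbb{E}_{p}[s]=\mathbb{E}_{H_0}[s].
\]
Summing over $t$, the attacked statistic is at most its null expectation at every step. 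A Markov/Hoeffding-style bound then gives a rejection probability no larger than the calibrated FPR, up to concentration terms.

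\textbf{Main obstacle.} I expect the hardest part to be making ``driving TPR to baseline'' rigorous in Step 3. The per-step minimal scores are not i.i.d. under the null, so bounding the attacked statistic's exceedance of the threshold needs a martingale concentration argument for the sum of per-step scores. I would first try Azuma's inequality with bounded scores, and fall back to an expectation-only statement if the scores are unbounded.
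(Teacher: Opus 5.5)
Your Steps 1--2 and the reduction in Step 3 to Theorem~\ref{thm:defeatable} follow the paper's proof. Neyman--Pearson gives the likelihood-ratio test, a score $a\log(\hat p/p)+b$ reproduces it, the argmin sets then coincide (the defeatability condition), and Theorem~\ref{thm:defeatable} is cited for ``not robust''. The paper stops there. Using NP uniqueness to justify ``necessarily'' goes beyond the paper, with one caveat. Uniqueness at a single level $\alpha$ only forces the rejection region at that level to be a superlevel set of $\Lambda$. It says nothing about which token sits at the bottom of the score ordering. To get the argmin identity you need optimality at every level (in particular $\alpha$ near $1$), or you must take $s\propto\log(\hat p/p)$ as the meaning of ``aligned'', as the paper does.

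The genuine gap is the quantitative part of Step 3. The bound $\min_j s_j\le\mathbb{E}_{p_t(\cdot\mid h)}[s]$ is evaluated at the \emph{attacked} prefix. Summing therefore gives $D^{\mathrm A}\le\sum_t\mathbb{E}_{p_t(\cdot\mid h^{\mathrm A}_{1:t-1})}[s_t]$, which is not $\mathbb{E}_{H_0}[D]$. For the likelihood-ratio score the per-step null mean is $b-a\,\mathrm{KL}(p_t\|\hat p_t)$, and this varies with the prefix. That can break the conclusion outright, as the following example shows.
\begin{itemize}
\item Take $a=1$, $b=0$, $T=2$.
\item At step $1$, $p=(0.99,0.01)$ and $\hat p=(0.995,0.005)$ on tokens $A,B$, so the adversary emits $B$ (ratio $\tfrac12$).
\item After $B$ the next token is forced ($p=\hat p$).
\item After $A$, $p=(0.995,0.005)$ and $\hat p=(0.25,0.75)$ on two tokens.
\end{itemize}
Then $D^{\mathrm A}=\log\tfrac12\approx-0.69$. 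Under $H_0$, $D\approx-1.38$ with probability $\approx0.985$, so $\mathbb{E}_{H_0}[D]\approx-1.34<D^{\mathrm A}$ and $\Pr_{H_0}(D\ge\log\tfrac12)\approx0.015$. The level-$0.02$ likelihood-ratio test flags the attacked text with probability one, and even at level $0.01$ the randomized test flags it with probability $\approx\tfrac12$. No martingale concentration can change this.

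What is missing is a prefix-invariance hypothesis on the null score law, the paper's Assumption~\ref{ass:nullinv}. You must state it explicitly, because the likelihood-ratio structure does not supply it.

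Even with that hypothesis, expectation plus Azuma is the wrong tool, for three reasons. Ordering means does not order upper tails. Azuma with worst-case ranges $c_t$ yields a bound like $\exp\bigl(-(\tau_\varepsilon-\mu_0)^2/(2\sum_t c_t^2)\bigr)$ with $\mu_0=\mathbb{E}_{H_0}[D]$, which is generally larger than $\varepsilon$. And log-ratio scores need not be bounded.

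Instead, exploit the pointwise fact you already have through a coupling. Draw $Y_t\sim p_t(\cdot\mid h^{\mathrm A}_{1:t-1})$ under the same key, so $s_t(x^{\mathrm A}_t)\le s_t(Y_t)$ almost surely. Invariance makes $(s_t(Y_t))_{t\le T}$ equal in law to the null scores. Monotonicity of the aggregator then gives $D^{\mathrm A}\preceq_{\mathrm{st}}D^0$, hence $\mathrm{TPR}_{\mathrm A}(\varepsilon)\le\varepsilon$ exactly, as in Theorem~\ref{thm:stochastic-dominance}.
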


\subsection{Implications for Watermark Design}
\label{subsec:design-implications}

Theorem~\ref{thm:tradeoff} establishes that the detectability-robustness tradeoff is not merely an engineering challenge that can be addressed,
but rather an {\bf information-theoretic impossibility} inherent to the current paradigm. This fundamental limitation has several practical implications.

\vspace{+0.05in}
\noindent\textbf{Robustness claims must specify the threat model.}
Prior works reporting robustness improvements (e.g., against paraphrasing
or bounded token editing~\cite{kuditipudi2024robust, Zhao2023ProvableRW}) do
so against threat models that do not include score-ratio-minimizing
adversaries. Our results show that schemes robust against post-hoc attacks
remain defeatable under generation-time evasion.
Any robustness claim must therefore explicitly state the adversary's access
model and attack phase.

\vspace{+0.05in}
\noindent\textbf{Improving detectability worsens robustness.}
Schemes that increase the logit boost $\delta$ or green-list fraction
$\gamma$ to improve detection power simultaneously sharpen the probability-ratio gap between green and red tokens, making the score-ratio ordering
more informative to the adversary. Adaptive and token-specific variants
that locally maximize $\delta_t$ or $\gamma_t$ are therefore more
exploitable at each step, not less.

\vspace{+0.05in}
\noindent\textbf{Breaking score-ratio alignment requires paradigm change.}
Theorem~\ref{thm:tradeoff} implies that robustness against our
attack
requires breaking the alignment between the detection score
ordering and the likelihood-ratio ordering. Two directions are promising:
(i) \emph{secret score randomization}, in which the score assigned to each
token is perturbed by a secret-key-dependent offset unknown to the
adversary, decoupling the observable probability ratio from the true
detection contribution; and (ii) \emph{multi-key ensemble watermarking},
in which detection aggregates scores across multiple independent watermark
keys, so that minimizing any single key's score ordering does not minimize
the ensemble statistic. We leave the formal analysis of these directions to
future work.

%% file: TheoreticalEffectiveness.tex
\section{Theoretical Effectiveness of the Attack}
\label{sec:attack-effectiveness}

We establish that score-ratio minimization reduces the
watermark statistic relative to the unwatermarked null
distribution. We first prove an expectation-level guarantee in  \S\ref{subsec:expected-score-reduction} that exact minimum-ratio selection reduces the expected
detector statistic. 
We then strengthen this conclusion to
first-order stochastic dominance, which yields a TPR bound at
every detection threshold in \S\ref{sec:tpr-dominance}.
We finally extend this exact result to
our practical finite-sample attack
(Algorithm \ref{alg:blackbox_watermark_attack})
in
\S\ref{subsec:practical-effectiveness}.
{\bf All proofs are deferred to Appendix \ref{app:proof-effectiveness}}.

\subsection{Expected-Score Reduction}
\label{subsec:expected-score-reduction}

At decoding step $t$, let $h$ denote the current prefix; and $h_{1:t-1}$ be the context until step $t-1$.
Define $\mathcal{S}_t(h_{1:t-1})
 :=
 \{i\in\mathcal{V}:p_t(i\mid h_{1:t-1})>0\}$
as the support of the base next-token distribution. Let
$s_t(i\mid h_{1:t-1},\xi_t)$ denote the oriented detection score of
token $i$ under key realization $\xi_t$.

We first make the following standard fresh-key assumption.

\begin{assumption}[Fresh-key null invariance]
For every admissible prefix $h$, let
$\xi_t$ be a fresh pseudorandom key  and let
$Y_t\sim p_t(\cdot\mid h)$ independently of $\xi_t$. The null
score $Z_t^0=\phi(\xi_t,h,Y_t)$
has a distribution that does not depend on $h$.
Further, the key values are independent across distinct
contexts.
\label{ass:nullinv}
\end{assumption}

\begin{theorem}[Effectiveness of minimum-ratio selection]
\label{thm:minimum-ratio-effectiveness}
If the score-ratio alignment holds pointwise at every
step
$t$, i.e.,
\begin{align}
 \arg\min_{i\in\mathcal{V}}
 \frac{\hat p_t(i\mid h_{1:t-1})}{p_t(i\mid h_{1:t-1})}
 \subseteq
 \arg\min_{i\in\mathcal{V}}
 s_t(i\mid h_{1:t-1},\xi_t).
 \label{eq:effectiveness-alignment}
\end{align}
Let the exact minimum-ratio adversary select
 $
 x_t^{\mathrm A}
 \in
 \arg\min_{i\in\mathcal{V}}
 \frac{\hat p_t(i\mid h_{1:t-1})}
      {p_t(i\mid h_{1:t-1})}.
$
Then, conditional on the attacked prefix and the current key,
\begin{equation}
 s_t(x_t^{\mathrm A})
 \leq
 \mathbb{E}
 \left[
   s_t(x_t^0)
   \mid h_{1:t-1},\xi_t
 \right],
 \label{eq:per-step-expectation-bound}
\end{equation}
where
$
 x_t^0
 \sim
 p_t(\cdot\mid h_{1:t-1})
$
is a counterfactual unwatermarked token sampled under the same
prefix.

Consequently, for any nonnegative weights
$w_1,\ldots,w_T$, the additive attacked score $ s_T^{\mathrm A}
 =
 \sum_{t=1}^{T}w_t s_t(x_t^{\mathrm A})$
satisfies
\begin{equation}
 \mathbb{E}[s_T^{\mathrm A}]
 \leq
 \mathbb{E}[s_T^0],
 \label{eq:cumulative-expectation-bound}
\end{equation}
under the fresh-key null-invariance assumption \ref{ass:nullinv}.

\end{theorem}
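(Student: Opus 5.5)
The proof has two parts: a pointwise per-step bound, then aggregation by the tower property.

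\textbf{Per-step bound.} Fix the attacked prefix $h_{1:t-1}$ and the key $\xi_t$. By the alignment hypothesis \eqref{eq:effectiveness-alignment}, the chosen token $x_t^{\mathrm A}$ lies in $\arg\min_{i\in\mathcal V} s_t(i\mid h_{1:t-1},\xi_t)$. Hence $s_t(x_t^{\mathrm A})\le s_t(i)$ for every $i\in\mathcal V$, and in particular for every $i\in\mathcal S_t(h_{1:t-1})$.

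The counterfactual token $x_t^0\sim p_t(\cdot\mid h_{1:t-1})$ is supported on $\mathcal S_t$. Conditionally on $(h_{1:t-1},\xi_t)$, its expected score is a convex combination $\sum_{i\in\mathcal S_t}p_t(i)\,s_t(i)$ of values that are each at least $\min_i s_t(i)$. This gives \eqref{eq:per-step-expectation-bound} immediately.

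If the scores can be unbounded, as for Gumbel scores, I would assume $\mathbb E[|s_t(x_t^0)|]<\infty$ so that the conditional expectation is well defined. The bound also only requires the minimum over $\mathcal V$; ratios at $p_t(i)=0$ can be handled by restricting the argmin to $\mathcal S_t$, as in the earlier theorems.

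\textbf{Aggregation.} Multiply the per-step bound by $w_t\ge0$ and take expectations over the joint law of the attacked trajectory and keys:
\[
\mathbb E[w_t s_t(x_t^{\mathrm A})]\le w_t\,\mathbb E\bigl[\mathbb E[s_t(x_t^0)\mid h_{1:t-1},\xi_t]\bigr].
\]
Here $h_{1:t-1}$ is the attacked prefix, distributed according to the attack dynamics rather than the null ones. The right-hand side is the expected null score $Z_t^0=\phi(\xi_t,h,Y_t)$ evaluated at a random prefix $h$.

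\textbf{Main obstacle: relating this to the genuine null.} The quantity $\mathbb E[s_T^0]$ is taken over a genuinely unwatermarked trajectory, whose prefixes have a different distribution. Assumption~\ref{ass:nullinv} resolves this.

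For any fixed admissible $h$, a fresh key $\xi_t$ together with an independent $Y_t\sim p_t(\cdot\mid h)$ gives $Z_t^0$ a law $\nu$ that does not depend on $h$. So $\mathbb E[Z_t^0\mid h]=m_\nu$, a constant.

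Care is needed because $\xi_t$ must be fresh, i.e.\ independent of how the prefix was formed. The prefix was built by the attacker, who used only oracle samples and a surrogate model. Each key value is a function of its context, and by the assumption key values at distinct contexts are independent. I would therefore argue that, conditional on $h_{1:t-1}$, the key $\xi_t=F_{sk}(c_t)$ is a fresh draw. The only exceptions are repeated contexts, which I would either exclude, in line with the masking set $\mathcal I$ used by SynthID, or treat as covered by the assumption's phrase ``admissible prefix''.

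By the tower property, the outer expectation then equals $m_\nu$ for both the attacked and the null trajectories. Summing over $t$ with the weights $w_t$ gives $\mathbb E[s_T^{\mathrm A}]\le\sum_t w_t m_\nu=\mathbb E[s_T^0]$, which is \eqref{eq:cumulative-expectation-bound}.

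I expect the delicate points to be this freshness step and the case of a random length $T$ (the attack stops early on EOS). For the latter I would fix $T$, or pad with zero-weight steps.
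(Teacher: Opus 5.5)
Your proposal is correct and follows the paper's proof. The per-step bound is the minimum-versus-convex-combination lemma applied to $\sum_{i\in\mathcal S_t}p_t(i)\,s_t(i)$, and the cumulative bound follows from the nonnegative weights, the tower property, and fresh-key null invariance. Your version of the last step is more explicit than the paper's one-line appeal to equality of joint distributions: you use only the constant per-step null mean $m_\nu$ (enough, since the bound is linear), you argue why $\xi_t$ is fresh given the attacked prefix, and you flag the repeated-context exception that the paper leaves implicit.
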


\begin{remark}
The inequality is strict whenever, with positive probability,
the base distribution assigns nonzero mass to a token whose
score is strictly larger than the minimum. Therefore, except
for the degenerate case in which the base model is supported
entirely on score-minimizing tokens, the attack strictly reduces
the expected watermark score.
\end{remark}

{\subsection{Stochastic Dominance and TPR Reduction}
\label{sec:tpr-dominance}

We now strengthen the expectation-level result of
Theorem~\ref{thm:minimum-ratio-effectiveness} to a distributional guarantee.
Throughout this section, token scores are oriented so that larger
values indicate stronger watermark evidence. Thus, for a scheme
whose original score has the opposite orientation, we negate the
score. For example, the oriented ITS score is
$Z_t=-s_t$.

Let $D_T=\Phi_T(Z_1,\ldots,Z_T)$
denote the detector statistic, where $\Phi_T$ is coordinatewise
nondecreasing. This includes sums, means, and positively
standardized sums.
For two real-valued random variables $U$ and $V$, we write
$U\preceq_{\mathrm{st}}V$ if $U$ is first-order stochastically
dominated by $V$, i.e., $\Pr(U\geq u)\leq\Pr(V\geq u), \forall u\in\mathbb{R}.$

At decoding step $t$, let $h\in\mathcal{V}^{t-1}$ be the current
prefix, let $p_t(\cdot\mid h)$ and $\hat p_t(\cdot\mid h)$ be the
base and watermarked next-token distributions, and define $R_t(i\mid h)
    :=
    \frac{\hat p_t(i\mid h)}{p_t(i\mid h)}$
for every $i$ in the support of $p_t(\cdot\mid h)$.

\begin{theorem}[Stochastic dominance under score-ratio minimization]
\label{thm:stochastic-dominance}
Suppose that score-ratio alignment holds pointwise: for every
step $t$, prefix $h$, and key realization $\xi_t$,
\begin{equation}
 \arg\min_{i:p_t(i\mid h)>0} R_t(i\mid h)
 \subseteq
 \arg\min_{i:p_t(i\mid h)>0}
       \phi(\xi_t,h,i).
 \label{eq:pointwise-alignment}
\end{equation}
Let the adversary select
$
x_t^{\mathrm A}
 \in
 \arg\min_{i:p_t(i\mid h_{1:t-1})>0}
 R_t(i\mid h_{1:t-1})
$
at every step $t$. Then $D_T^{\mathrm A}
    \preceq_{\mathrm{st}}
    D_T^0,$
where $D_T^0$ denotes the detector statistic under the
unwatermarked null distribution.

Further, if $\tau_\varepsilon$ is calibrated so that
$    \Pr_0(D_T^0\geq\tau_\varepsilon)\leq\varepsilon,
$
then the true positive rate on attacked watermarked text
satisfies
\begin{equation}
    \operatorname{TPR}_{\mathrm A}(\varepsilon)
    :=
    \Pr_{\mathrm A}
       (D_T^{\mathrm A}\geq\tau_\varepsilon)
    \leq\varepsilon.
 \label{eq:attacked-tpr-bound}
\end{equation}
\end{theorem}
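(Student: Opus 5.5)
The argument will be a per-step coupling followed by a monotone aggregation. First, I fix a step $t$, an attacked prefix $h=h_{1:t-1}$, and a key realization $\xi_t$. By the pointwise alignment in Equation~\eqref{eq:pointwise-alignment}, the attacked token $x_t^{\mathrm A}$ satisfies
\[
 Z_t^{\mathrm A}:=\phi(\xi_t,h,x_t^{\mathrm A})=\min_{i:p_t(i\mid h)>0}\phi(\xi_t,h,i).
\]
Now let $Y_t\sim p_t(\cdot\mid h)$ be drawn independently of $\xi_t$. Since $Y_t$ lies in the support almost surely, we get the pointwise inequality $Z_t^{\mathrm A}\le \phi(\xi_t,h,Y_t)=:Z_t^{0}$ for every realization. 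This is the same step used for Equation~\eqref{eq:per-step-expectation-bound} in Theorem~\ref{thm:minimum-ratio-effectiveness}, but here I keep it as an almost-sure coupling rather than taking expectations.

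Second, I lift the per-step coupling to the whole sequence. I build, on one probability space, the attacked trajectory together with auxiliary null tokens $Y_1,\dots,Y_T$. At each step $t$ I draw $Y_t\sim p_t(\cdot\mid h^{\mathrm A}_{1:t-1})$ independently, given the past, and set $Z_t^0=\phi(\xi_t,h^{\mathrm A}_{1:t-1},Y_t)$. This gives $Z_t^{\mathrm A}\le Z_t^0$ for all $t$ almost surely. I then need the vector $(Z_1^0,\dots,Z_T^0)$ to have the same joint law as the null score vector of genuine unwatermarked text. Assumption~\ref{ass:nullinv} is what provides this. The conditional law of $Z_t^0$ given the past does not depend on the prefix $h$, and keys are fresh, that is, independent across distinct contexts. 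So each $Z_t^0$ is independent of $(Z_1^0,\dots,Z_{t-1}^0)$ and has the fixed null marginal. The joint law is therefore the product of these marginals, which is exactly the law of the null scores along a genuine base-model trajectory: there, too, the prefix is irrelevant by the same assumption. An induction on $t$ formalizes this, conditioning on the $\sigma$-field generated by the past keys and tokens.

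Third, I use that $\Phi_T$ is coordinatewise nondecreasing. This gives $D_T^{\mathrm A}=\Phi_T(Z^{\mathrm A}_{1:T})\le\Phi_T(Z^0_{1:T})\overset{d}{=}D_T^0$ almost surely, so $\Pr(D_T^{\mathrm A}\ge u)\le\Pr(D_T^0\ge u)$ for every $u$, which is $D_T^{\mathrm A}\preceq_{\mathrm{st}}D_T^0$. Taking $u=\tau_\varepsilon$ yields Equation~\eqref{eq:attacked-tpr-bound}.

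The main obstacle is the second step. The attacked prefix depends on earlier keys, because $x_s^{\mathrm A}$ is a function of $\xi_s$. So the fresh key at step $t$ must be independent of this history. This holds only if the contexts $c_t$ are distinct across steps, so that repeated contexts are excluded or masked, as in SynthID's set $\mathcal I$. I would state this explicitly as the reading of ``independent across distinct contexts''. I would first try to justify it by noting that the key at step $t$ is determined by a context not previously queried, so it is independent of the entire past $\sigma$-field. Under this reading the conditional null law is prefix-free, and the induction goes through.
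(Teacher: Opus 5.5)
Your proposal is correct and follows the paper's proof step for step. Both arguments use the same per-step coupling $Z_t^{\mathrm A}\le\phi(\xi_t,h^{\mathrm A}_{1:t-1},Y_t)$, with an auxiliary null token drawn at the attacked prefix, then invoke Assumption~\ref{ass:nullinv} to identify the joint law of the coupled null scores with that of $D_T^0$, and finish with coordinatewise monotonicity of $\Phi_T$. You condition on the full past $\sigma$-field (keys, attacked tokens, and auxiliary tokens) and state the distinct-context requirement explicitly. That makes precise the step the paper compresses into ``repeating the construction for $t=1,\ldots,T$,'' where it conditions only on $h_{1:t-1}$.
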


\begin{remark}[Expectation minimization is insufficient]
The conclusion of Theorem~\ref{thm:stochastic-dominance}
cannot be obtained only from
$
    \mathbb{E}[D_T^{\mathrm A}]
    \leq
    \mathbb{E}[D_T^0].
$
Ordering the means generally does not  order the upper-tail
probabilities. The pointwise alignment condition is essential:
it produces an almost-sure coupling, which implies stochastic
dominance and therefore orders the TPR at every threshold.
\end{remark}}

\subsection{Practical Top-$K$ and Finite-Sample Attack}
\label{subsec:practical-effectiveness}

Algorithm~\ref{alg:blackbox_watermark_attack} operates on a top-$K$ candidate set and estimates
the watermarked probabilities using finitely many queries. Let
$C_t$ be the retained candidate set and define its excluded
base-model mass as
$
 \rho_t
 :=
 1-\sum_{i\in C_t}p_t(i\mid h_{1:t-1}).
$
Let $\zeta_t$ upper-bound the conditional probability that
finite-sample estimation, or filtering
causes the algorithm to select a token that is not a
score-minimizer within $C_t$.

\begin{theorem}[Approximate TPR guarantee]
\label{cor:practical-tpr}
If $\rho_t$ and $\zeta_t$ are valid conditional upper bounds
at every decoding step $t$, then
\begin{equation}
 \operatorname{TPR}_{\mathrm A}(\varepsilon)
 \leq
 \min\left\{
 1,\,
 \varepsilon+
 \sum_{t=1}^{T}(\rho_t+\zeta_t)
 \right\}.
 \label{eq:practical-tpr-bound}
\end{equation}
\end{theorem}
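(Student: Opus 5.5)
The approach is a coupling argument. The practical attack of Algorithm~\ref{alg:blackbox_watermark_attack} is compared step by step with the exact minimum-ratio adversary of Theorem~\ref{thm:stochastic-dominance}, and a union bound charges every step at which the two could disagree. Concretely, on one probability space we build the practical attacked sequence $\tilde x_{1:T}$ together with an idealized sequence $x^{\mathrm A}_{1:T}$ that shares the same keys. At each step $t$ the idealized adversary picks a token in $\arg\min_i R_t(i\mid h_{1:t-1})$ under the same prefix. On the event that the two coincide at every step, the attacked statistic is $D_T^{\mathrm A}$, and Theorem~\ref{thm:stochastic-dominance} gives $\Pr(D_T^{\mathrm A}\ge\tau_\varepsilon)\le\varepsilon$.

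We first define the step-$t$ failure event $B_t$. It is the event that, given the common prefix up to step $t-1$, the algorithm's token $\tilde x_t$ does not lie in $\arg\min_i \phi(\xi_t,h,i)$ over the support of $p_t$. This can happen for two reasons. The first is that every score-minimizer lies outside $C_t$. We bound this by $\rho_t$: with respect to the base-model draw that defines the null coupling, the minimizer's mass outside $C_t$ is at most the excluded mass $\rho_t$, so the conditional probability of this failure is at most $\rho_t$. The second is that a minimizer lies in $C_t$ but finite-sample estimation or filtering selects some other token; by hypothesis this has conditional probability at most $\zeta_t$. Thus $\Pr(B_t\mid \mathcal F_{t-1})\le\rho_t+\zeta_t$. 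Off $\bigcup_t B_t$, every selected token is a score-minimizer. By pointwise alignment, the almost-sure coupling in the proof of Theorem~\ref{thm:stochastic-dominance} then applies: each oriented score is at most the corresponding null score, and since $\Phi_T$ is coordinatewise nondecreasing, the statistic is at most $D_T^0$.

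We then decompose
\[
\Pr(\tilde D_T\ge\tau_\varepsilon)\le \Pr\Bigl(\tilde D_T\ge\tau_\varepsilon,\ \textstyle\bigcap_t B_t^c\Bigr)+\Pr\Bigl(\textstyle\bigcup_t B_t\Bigr).
\]
The first term is at most $\Pr_0(D_T^0\ge\tau_\varepsilon)\le\varepsilon$. The second term is bounded by the union bound together with the tower property, $\sum_t\mathbb E[\Pr(B_t\mid\mathcal F_{t-1})]\le\sum_t(\rho_t+\zeta_t)$. Since a probability never exceeds $1$, taking the minimum with $1$ gives \eqref{eq:practical-tpr-bound}.

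The main obstacle is making the coupling rigorous in the first step. Once the prefix deviates, the future keys and distributions change, and the fresh-key null invariance (Assumption~\ref{ass:nullinv}) is what lets the null comparison be re-established along any realized prefix. The first thing to try is to carry out the stochastic-dominance coupling adaptively along the realized prefix, so that it never references the ideal prefix. Then a failure at step $t$ only needs to be charged once, and steps with no failure are dominated by the fresh null score regardless of history. A second delicate point is that $\rho_t$ must bound the probability that the minimizer is excluded; this is exactly what "valid conditional upper bound" in the hypothesis licenses, and we use it as stated.
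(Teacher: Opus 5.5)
\textbf{Genuine gap: the truncation failure event is misidentified.} Your overall structure matches the paper's: couple along the realized attacked prefix, then union-bound the per-step failures. But the first half of your failure event is the wrong one, and the step that charges it to $\rho_t$ does not hold.

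You define $B_t$ as the event that $\tilde x_t$ is not a score-minimizer over the whole support of $p_t$. Its first cause, \emph{every score-minimizer lies outside $C_t$}, is an event about the key $\xi_t$, not about any base-model draw. In the paper, $\rho_t := 1-\sum_{i\in C_t}p_t(i\mid h_{1:t-1})$ is the excluded base-model mass, and it does not bound the probability of that event. For GumbelSoft, for example, $s_j=\xi_j$ is i.i.d.\ Gumbel. The unique minimizer is therefore uniform over the support, which for a softmax is all of $\mathcal V$. It falls outside a top-$40$ set with probability about $1-40/|\mathcal V|$, however small $\rho_t$ is; ITS behaves the same way. So $\Pr(B_t\mid\mathcal F_{t-1})\le\rho_t+\zeta_t$ is false, and with your $B_t$ the union bound is essentially $T$. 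Your closing remark reinterprets $\rho_t$ as a bound on the probability that the minimizer is excluded. The hypothesis does not license that: $\rho_t$ is a defined quantity, and redefining it changes the theorem into one with a nearly vacuous bound.

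\textbf{The missing idea.} You never need $\tilde x_t$ to be a global minimizer. You only need $s_t(\tilde x_t)\le s_t(Y_t)$ for the coupled null token $Y_t\sim p_t(\cdot\mid h_{1:t-1})$. This inequality holds whenever two things happen:
\begin{itemize}
\item $\tilde x_t$ minimizes the score within $C_t$, which fails with conditional probability at most $\zeta_t$;
\item $Y_t\in C_t$, in which case $s_t(\tilde x_t)=\min_{i\in C_t}s_t(i)\le s_t(Y_t)$.
\end{itemize}
Because $Y_t$ is a base-model draw, $\Pr(Y_t\notin C_t\mid h_{1:t-1})=\rho_t$ exactly. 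This is where the excluded mass legitimately enters. Replace $B_t$ by $\{Y_t\notin C_t\}\cup E_t^c$, where $E_t$ is the event that the algorithm selects a score-minimizer within $C_t$. Then
$\{D_T^{\mathrm A}\ge\tau_\varepsilon\}\subseteq\{\widetilde D_T^0\ge\tau_\varepsilon\}\cup\bigcup_t(\{Y_t\notin C_t\}\cup E_t^c)$,
and your decomposition and union bound become exactly the paper's proof.
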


\begin{remark}
Theorem~\ref{cor:practical-tpr} explicitly isolates the two
sources of approximation in Algorithm~\ref{alg:blackbox_watermark_attack}: candidate truncation captured by $\rho_t$, and ratio-estimation or filtering error,
captured by $\zeta_t$. When both quantities are small, the
practical attack approaches the exact stochastic-dominance
guarantee.
\end{remark}

%% file: Experiments.tex
\section{Experimental Evaluation}
\label{sec:experiments}

Our evaluations mainly address the below
research questions:

\begin{itemize}[leftmargin=*,nosep]
    \item \textbf{RQ1.} Does the score-ratio minimization attack reduce
    watermark detection to baseline levels?

    \item \textbf{RQ2.} Does the attack preserve text quality relative to unwatermarked generation?

    \item \textbf{RQ3.} Are results consistent across model scales?

    \item \textbf{RQ4.} How does the hyperparameters affect our attack?
\end{itemize}

\subsection{Experimental Setup}
\label{subsec:setup}

\vspace{+0.05in}
\noindent\textbf{Watermarking schemes.}
We evaluate seven watermarking methods spanning both distortional and
distortion-free families shown in Table \ref{tab:scheme-summary}\footnote{SynthID requires intensive offline training on a private dataset to learn its Bayesian detector. Since this detector is unavailable in our threat model, we evaluate our attack against SynthID using its mean-score detector.}.

\noindent\textbf{Base and surrogate LLMs.}
We use OPT-1.3B~\cite{Zhang2022OPTOP} as the target base model on which the watermarking schemes are applied. As the surrogate model used in our attack (see Algorithm \ref{alg:blackbox_watermark_attack}), 
we three LLMs 
spanning a range of scales:
\begin{itemize}[leftmargin=*,nosep]
    \item \textbf{OPT-1.3B}~\cite{Zhang2022OPTOP}: a 1.3-billion-parameter
    model used to evaluate our attack to a small model and as as reference.

    \item \textbf{LLaMA-7B}~\cite{touvron2023llama}: a 7-billion-parameter
    decoder-only transformer used as the primary evaluation
    model.
    \item \textbf{LLaMA-13B}~\cite{touvron2023llama2openfoundation}: a 13-billion-parameter
    variant used to assess scale sensitivity.

\end{itemize}

\vspace{+0.05in}
\noindent\textbf{Datasets.}
Text generation is conditioned on prompts drawn from three corpora:
\begin{itemize}[leftmargin=*,nosep]
    \item \textbf{C4}~\cite{Raffel2019ExploringTL}: a large-scale general web text
    corpus used as the primary benchmark, consistent with prior
    watermarking evaluations~\cite{kirchenbauer2023watermark,
    huo2024tokenspecific}.
    \item \textbf{FakeNews}: a collection of news-style long-form
    narratives, used to evaluate performance on factual, structured prose.
    \item \textbf{Eli-5}~\cite{fan2019eli5}: a community  QA
    corpus, used to assess performance on open-ended,
    explanatory generation.
\end{itemize}

\vspace{+0.05in}
\noindent\textbf{Attack configuration.}
We evaluate our attack against 100 watermarked texts under the considered threat model in \S\ref{sec:attack}. 
By default, we use $N=256$ queries per generation step, {$K=40$ in top-K candidate set, and generation length $T=256$. We also study the impact of these hyperparameters.}

\vspace{+0.05in}
\noindent\textbf{Evaluation metrics.}
We report the following metrics:

\begin{itemize}[leftmargin=*,nosep]
    \item \textbf{TPR @ FPR=1\%}: the fraction of
    adversarially generated texts correctly classified as watermarked by
    the detector at a fixed false positive rate of $1\%$. {\bf Lower TPR
    indicates more effective watermark evasion.}

    \vspace{+0.05in}
    \item \textbf{$p$-value}: the detector's hypothesis
    test applied to adversarially generated text.
    {\bf Values above $0.05$
    indicate failure to reject the null hypothesis (non-watermarked),
    confirming evasion.}

    \item \textbf{Z-score}: the normalized detection statistic. {\bf Negative
    or near-zero Z-scores indicate that the watermark signal has been
    suppressed to below the detection threshold.}

    \item \textbf{Perplexity (PPL)}: text fluency measured by GPT-2~\cite{Radford2019LanguageMA}. We report PPL to verify that evasion
    does not degrade output quality.
    {\bf Lower perplexity indicates more
    fluent and natural text.}
\end{itemize}

\begin{table*}[t]
  \centering

  \caption{
  Attack results on three datasets using OPT-1.3B (base model) as the surrogate model. TPR is reported at a 1\% false positive rate. Larger $p$-values and lower TPRs indicate stronger evasion. A $p$-value above $0.05$ indicates unsuccessful watermark detection. PPL denotes the perplexity of adversarially generated text; lower values correspond to more fluent outputs.}
 \label{tab:opt1.3b-all}

  \renewcommand{\arraystretch}{0.99}
  \setlength{\tabcolsep}{4pt}

  \resizebox{\textwidth}{!}{
  \begin{tabular}{lcccc|cccc|cccc}
    \toprule
    & \multicolumn{4}{c|}{\textbf{C4}}
    & \multicolumn{4}{c|}{\textbf{FakeNews}}
    & \multicolumn{4}{c}{\textbf{Eli-5}} \\
    \cmidrule(lr){2-5}
    \cmidrule(lr){6-9}
    \cmidrule(lr){10-13}

    \textbf{Scheme}
    & \textbf{TPR}$\downarrow$
    & \textbf{$p$-value}$\uparrow$
    & \textbf{$Z$-score}$\downarrow$
    & \textbf{PPL}$\downarrow$
    & \textbf{TPR}$\downarrow$
    & \textbf{$p$-value}$\uparrow$
    & \textbf{$Z$-score}$\downarrow$
    & \textbf{PPL}$\downarrow$
    & \textbf{TPR}$\downarrow$
    & \textbf{$p$-value}$\uparrow$
    & \textbf{$Z$-score}$\downarrow$
    & \textbf{PPL}$\downarrow$ \\
    \midrule

    KGW
    & 0.01 & 0.98 & -4.61 & 14.72
    & 0.01 & 0.98 & -4.23 & 15.71
    & 0.02 & 0.98 & -4.39 & 15.16 \\

    Token-Specific KGW
    & 0.01 & 0.97 & -4.18 & 5.91
    & 0.01 & 0.97 & -3.47 & 6.35
    & 0.01 & 0.96 & -2.13 & 6.48 \\

    Adaptive KGW
    & 0.03 & 0.24 & 0.73 & 3.86
    & 0.02 & 0.34 & 0.29 & 4.54
    & 0.03 & 0.31 & 0.46 & 4.13 \\

    GumbelSoft
    & 0.02 & 0.49 & -0.06 & 4.51
    & 0.01 & 0.63 & -0.36 & 4.92
    & 0.01 & 0.61 & -0.38 & 4.94 \\

    DipMark
    & 0.01 & 0.74 & -0.63 & 5.04
    & 0.02 & 0.66 & -0.41 & 5.27
    & 0.01 & 0.64 & -0.24 & 5.61 \\

    ITS
    & 0.02 & 0.27 & 0.69 & 5.21
    & 0.01 & 0.21 & 0.81 & 4.96
    & 0.03 & 0.43 & 0.12 & 3.91 \\

    SynthID
    & 0.01 & 0.93 & -2.41 & 5.78
    & 0.01 & 0.98 & -3.61 & 4.24
    & 0.01 & 0.98 & -2.87 & 5.19 \\

    \bottomrule
  \end{tabular}
  }
\end{table*}

\begin{table*}[!t]
\centering

\caption{Attack results on three datasets using LLaMA-7B as the surrogate model.
}
\label{tab:llama7b-all}
\renewcommand{\arraystretch}{0.99}
\setlength{\tabcolsep}{4pt}

\resizebox{\textwidth}{!}{
\begin{tabular}{lcccc|cccc|cccc}
\toprule

& \multicolumn{4}{c|}{\textbf{C4}}
& \multicolumn{4}{c|}{\textbf{FakeNews}}
& \multicolumn{4}{c}{\textbf{Eli-5}} \\

\cmidrule(lr){2-5}
\cmidrule(lr){6-9}
\cmidrule(lr){10-13}

\textbf{Scheme}
& \textbf{TPR}$\downarrow$
& \textbf{$p$-value}$\uparrow$
& \textbf{Z-score}$\downarrow$
& \textbf{PPL}$\downarrow$

& \textbf{TPR}$\downarrow$
& \textbf{$p$-value}$\uparrow$
& \textbf{Z-score}$\downarrow$
& \textbf{PPL}$\downarrow$

& \textbf{TPR}$\downarrow$
& \textbf{$p$-value}$\uparrow$
& \textbf{Z-score}$\downarrow$
& \textbf{PPL}$\downarrow$
\\

\midrule

KGW
& 0.00 & 1.00 & -4.94 & 13.18
& 0.00 & 0.99 & -4.67 & 13.99
& 0.03 & 0.99& -4.67 & 13.86
\\

Token-Specific KGW
& 0.00 & 1.00 & -5.37 & 6.23
& 0.00 & 0.99 & -3.96 & 6.23
& 0.00 & 0.98 & -2.27 & 6.56
\\

Adaptive KGW
& 0.01 & 0.07 & 1.47 & 3.19
& 0.00 & 0.058 & 1.57 & 4.42
& 0.01 & 0.28 & 0.58 & 4.06
\\

GumbelSoft
& 0.00 & 0.46 & 0.10 & 4.39
& 0.00 & 0.65 & -0.40 & 4.98
& 0.00 & 0.66 & -0.42 & 4.89
\\

DipMark
& 0.00 & 0.68 & -0.48 & 4.95
& 0.00 & 0.59 & -0.25 & 5.15
& 0.00 & 0.61 & -0.29 & 5.27
\\

ITS
& 0.00 & 0.32 & 0.48 & 5.66
& 0.00 & 0.16 & 0.99 & 4.81
& 0.02 & 0.25 & 0.65 & 3.28
\\

SynthID
& 0.00 & 0.99 & -3.56 & 5.49
& 0.00 & 0.99 & -3.38 & 4.16
& 0.00 & 0.97 & -1.94 & 5.12
\\

\bottomrule
\end{tabular}
}
\end{table*}

\begin{table*}[!t]
\centering

\caption{Attack results on three datasets using LLaMA-13B as the surrogate model.}
\label{tab:llama13b-all}
\renewcommand{\arraystretch}{0.99}
\setlength{\tabcolsep}{4pt}

\resizebox{\textwidth}{!}{
\begin{tabular}{lcccc|cccc|cccc}
\toprule

& \multicolumn{4}{c|}{\textbf{C4}}
& \multicolumn{4}{c|}{\textbf{FakeNews}}
& \multicolumn{4}{c}{\textbf{Eli-5}} \\

\cmidrule(lr){2-5}
\cmidrule(lr){6-9}
\cmidrule(lr){10-13}

\textbf{Scheme}
& \textbf{TPR}$\downarrow$
& \textbf{$p$-value}$\uparrow$
& \textbf{Z-score}$\downarrow$
& \textbf{PPL}$\downarrow$

& \textbf{TPR}$\downarrow$
& \textbf{$p$-value}$\uparrow$
& \textbf{Z-score}$\downarrow$
& \textbf{PPL}$\downarrow$

& \textbf{TPR}$\downarrow$
& \textbf{$p$-value}$\uparrow$
& \textbf{Z-score}$\downarrow$
& \textbf{PPL}$\downarrow$
\\

\midrule

KGW
& 0.00 & 0.99 & -4.34 & 15.94
& 0.00 & 0.99 & -3.98 & 16.43
& 0.00 & 0.99& -4.12 & 16.37
\\

Token-Specific KGW
& 0.00 & 0.99 & -3.89 & 5.63
& 0.00 & 0.99 & -3.14 & 6.47
& 0.00 & 0.97 & -1.98 & 6.56
\\

Adaptive KGW
& 0.02 & 0.28 & 0.57 & 4.23
& 0.00 & 0.56 & -0.17 & 4.65
& 0.02 & 0.28 & 0.58 & 4.06
\\

GumbelSoft
& 0.01 & 0.43 & 0.17 & 4.62
& 0.00 & 0.69 & -0.50 & 4.86
& 0.00 & 0.66 & -0.42 & 4.89
\\

DipMark
& 0.00 & 0.80 & -0.86 & 5.12
& 0.00 & 0.62 & -0.31 & 5.36
& 0.00 & 0.57 & -0.18 & 5.93
\\

ITS
& 0.00 & 0.19 & 0.86 & 4.83
& 0.00 & 0.16 & 0.98 & 5.11
& 0.01 & 0.77 & -0.76 & 4.13
\\

SynthID
& 0.00 & 0.87 & -1.17 & 6.12
& 0.00 & 0.99 & -3.84 & 4.32
& 0.00 & 0.99 & -3.94 & 5.12
\\

\bottomrule
\end{tabular}
}
\end{table*}

\subsection{Results}
\label{subsec:results}

Table~\ref{tab:opt1.3b-all}--Table~\ref{tab:llama13b-all}
report TPR@FPR=1\%, $p$-value, Z-score, and PPL.
We summarize the following key findings.

\vspace{+0.05in}
\noindent\textbf{Finding 1: Near-universal watermark evasion.}
Across all seven watermarking schemes and all model--dataset combinations,
the score-ratio minimization attack achieves a TPR of $0.00$--$0.05$,
reducing detection to or below the false positive rate baseline. This
confirms RQ1: the attack generalizes across diverse scheme designs,
including both distortional
(KGW, Token-Specific, Adaptive) and
distortion-free (GumbelSoft, DipMark, ITS, SynthID) families. 
The result is consistent with the score-ratio alignment
theorems of \S\ref{sec:optimality}: since alignment holds for every scheme,
the attack is provably optimal in each case, and the empirical TPR results
follow from the theory.

\vspace{+0.05in}
\noindent\textbf{Finding 2: Text quality is preserved.}
Perplexity remains low and competitive with unwatermarked generation across
all conditions (e.g., PPL $<7$ in all cases except on KGW). This confirms RQ2: the attack is genuinely
non-distortionary in practice, not only in theory. The preservation of
fluency validates that the score-ratio minimization strategy selects tokens
that are both low-scoring and semantically coherent, consistent with the
distributional proximity argument of \S\ref{sec:attack}.

\vspace{+0.05in}
\noindent\textbf{Finding 3: Results are consistent across model scales.}
Comparing results on LLaMA-7B and LLaMA-13B, 
we observe no
meaningful difference in attack effectiveness since TPR is very small (most are 0.00)
for both model scales across all schemes. This confirms RQ3 and suggests
that vulnerability is not a function of model capacity but of the
structural alignment property.

\vspace{+0.05in}
\noindent\textbf{Finding 4: Statistically significant evasion.}
The $p$-values for all
schemes substantially exceed the standard
$0.05$ significance threshold, confirming that the detector cannot
reject the null hypothesis (non-watermarked) for adversarially generated
text. Z-scores are negative or near-zero in almost all cases, indicating that the
cumulative detection statistic has been driven below the detection
threshold. These results are consistent with Theorem~\ref{thm:defeatable}:
minimizing $S = \sum_t s_{i_t^*}$ at each step drives $\mathbb{E}[S]$
below the threshold $\tau$, causing the TPR to collapse.

\begin{figure*}[!t]
    \centering
    \begin{subfigure}[b]{0.32\linewidth}
        \centering
        \includegraphics[width=\linewidth,height=3.5cm]{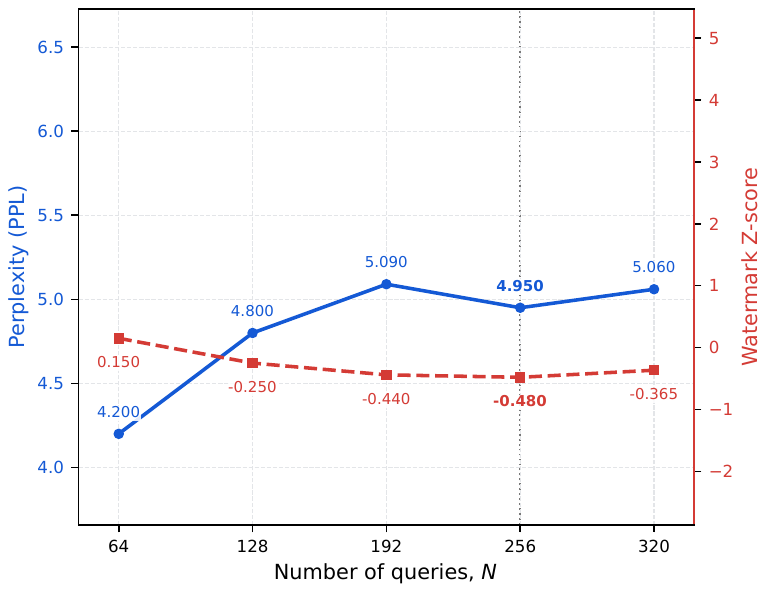}
        \caption{C4}
        \label{fig:ppl-zscore-C}
         \end{subfigure}
    \hfill
    \begin{subfigure}[b]{0.32\linewidth}
        \centering
        \includegraphics[width=\linewidth,height=3.5cm]{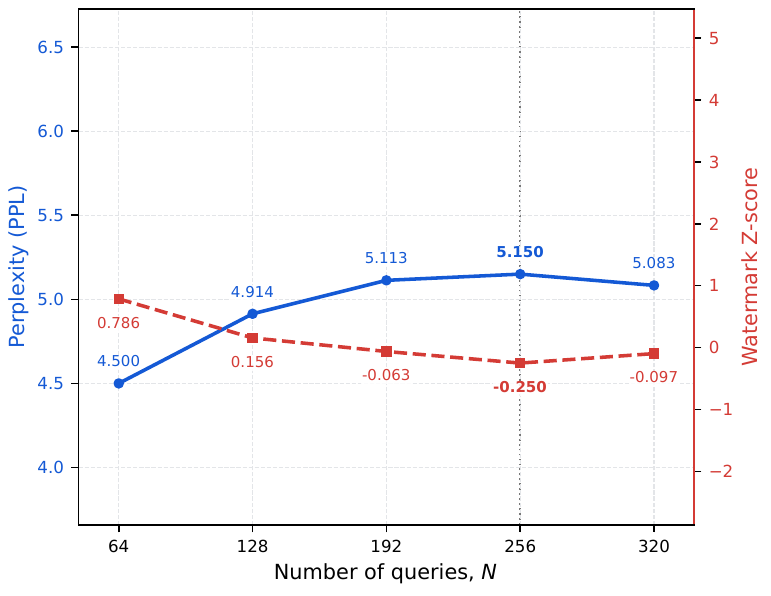}
        \caption{Fake-News}
        \label{fig:ppl-zscore-B}
    \end{subfigure}
    \hfill
    \begin{subfigure}[b]{0.32\linewidth}
        \centering
        \includegraphics[width=\linewidth,height=3.5cm]{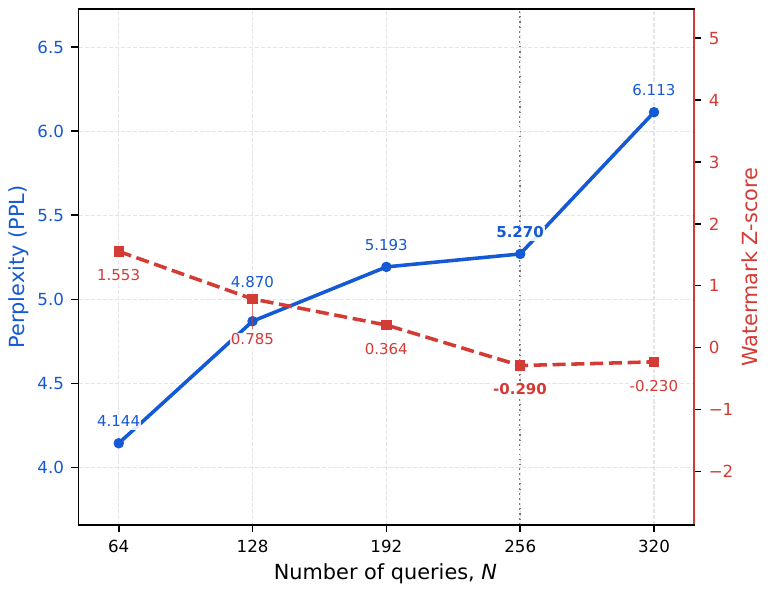}
        \caption{Eli-5}
        \label{fig:ppl-zscore-A}
    \end{subfigure}
    \caption{Perplexity (left axis) and watermark Z-score (right axis) as a function of \#queries $N$.
    }
    \label{fig:ppl-zscore-multi}

\end{figure*}

\begin{figure*}[!t]
    \centering
        \begin{subfigure}[b]{0.32\linewidth}
        \centering
        \includegraphics[width=\linewidth]{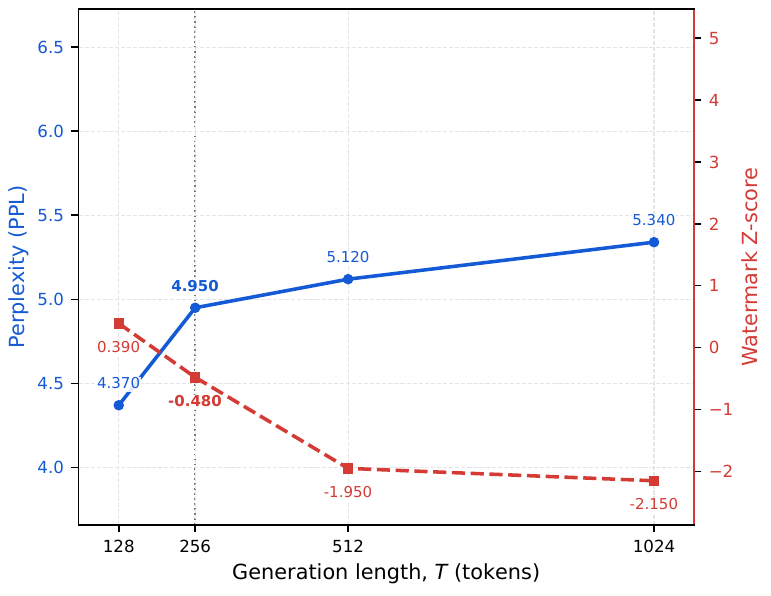}
        \caption{C4}
        \label{fig:ppl-zscore-C}
    \end{subfigure}
    \hfill
    \begin{subfigure}[b]{0.32\linewidth}
        \centering
        \includegraphics[width=\linewidth]{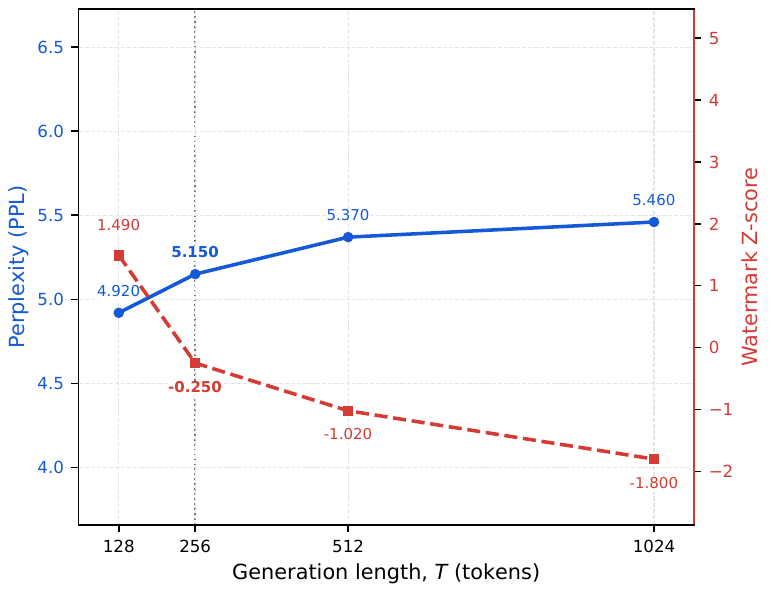}
        \caption{Fake-News}
        \label{fig:ppl-zscore-B}
    \end{subfigure}
    \hfill
    \begin{subfigure}[b]{0.32\linewidth}
        \centering
        \includegraphics[width=\linewidth]{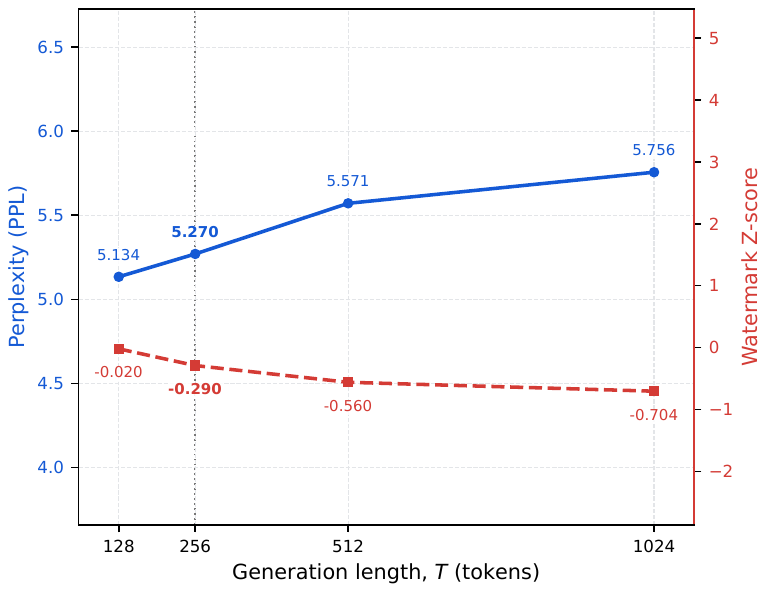}
        \caption{Eli-5}
        \label{fig:ppl-zscore-A}
    \end{subfigure}
    \caption{Effect of generated text length $T$ (tokens) on attack
effectiveness and text quality.
}
    \label{fig:ppl-zscore-queries}

\end{figure*}

\begin{figure*}[!t]
    \centering
        \begin{subfigure}[b]{0.32\linewidth}
        \centering
        \includegraphics[width=\linewidth]{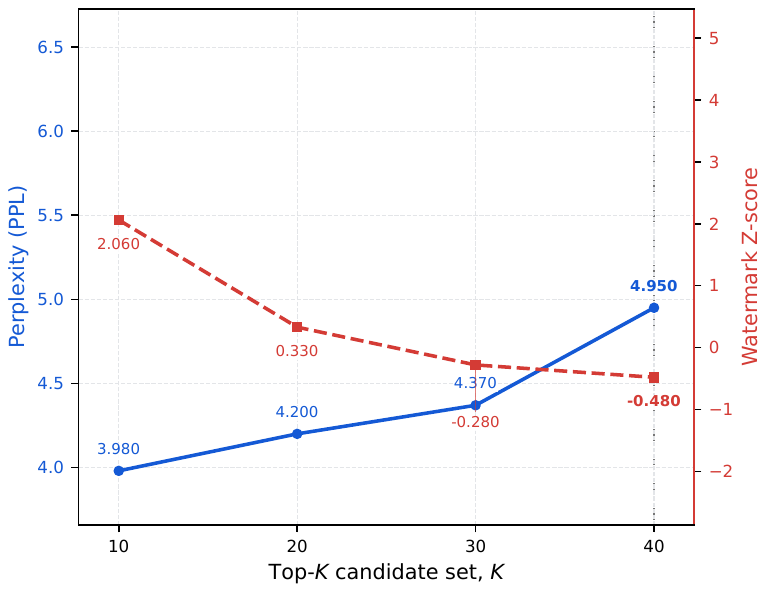}
        \caption{C4}
        \label{fig:ppl-zscore-C}
    \end{subfigure}
    \hfill
    \begin{subfigure}[b]{0.32\linewidth}
        \centering
        \includegraphics[width=\linewidth]{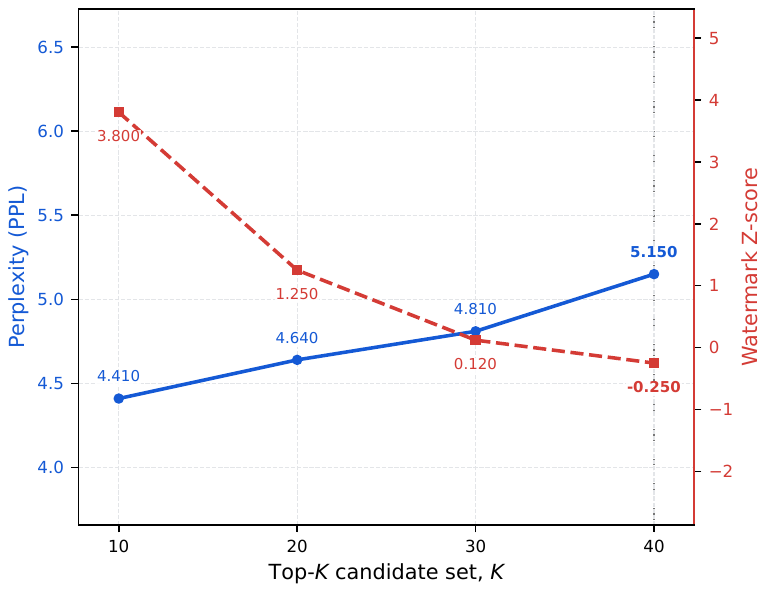}
        \caption{Fake-News}
        \label{fig:ppl-zscore-B}
    \end{subfigure}
    \hfill
    \begin{subfigure}[b]{0.32\linewidth}
        \centering
        \includegraphics[width=\linewidth]{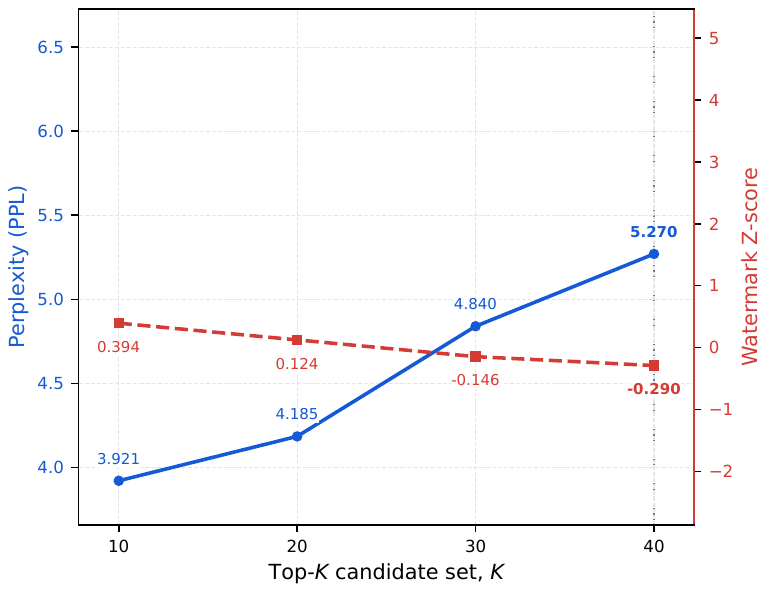}
        \caption{Eli-5}
        \label{fig:ppl-zscore-A}
    \end{subfigure}
    \caption{Effect of $K$ in Top-K candidate set on attack
effectiveness and text quality.}
    \label{fig:impact-top-k}

\end{figure*}

\subsection{Ablation Study}
\label{subsec:ablation}

In this section, we investigate the impact of key parameters on attack performance (RQ4): query budget $N$, generated text length $T$, and $K$ in Top-K candidate tokens. For brevity, we report results using the medium LLaMA-7B model on DipMark.

\vspace{+0.05in}
\noindent\textbf{Impact of $N$.}
Figure~\ref{fig:ppl-zscore-multi} studies the effect of the query budget $N$ on attack effectiveness and text quality. Across all datasets, increasing $N$ consistently decreases the watermark Z-score, indicating stronger watermark evasion. For example, on C4, the Z-score decreases from $0.15$ to $-0.48$ as $N$ increases, while the perplexity increases only slightly, from $4.20$ to $5.09$, suggesting that the attack preserves text quality well even with a larger query budget.

\vspace{+0.05in}
\noindent\textbf{Impact of $T$.}
Figure~\ref{fig:ppl-zscore-queries} shows the impact of the generated text length $T$. As $T$ increases, the watermark Z-score becomes increasingly negative, indicating stronger watermark evasion. This trend is expected, as longer generations provide more evidence for the attack to identify and suppress the watermark signal. Meanwhile, perplexity increases marginally with $T$, suggesting a modest trade-off in text quality as the attack operates over longer generations.

\vspace{+0.05in}
\noindent \textbf{Impact of $K$.}
Figure~\ref{fig:impact-top-k} shows the results with varying $K$ candidate tokens in Algorithm~\ref{alg:blackbox_watermark_attack}. We observe that increasing $K$ decreases the watermark Z-score, indicating stronger attack performance; and the perplexity increases since there is a trade-off between detectability and quality.

%% file: RelatedWork.tex
\section{Related Work}
\label{sec:relatedwork}

\subsection{AI-Generated Text Detection}

\subsubsection{Non-watermark based Detection}
A large body of work develops detectors that do not require a watermark signal to be embedded at generation time.

\vspace{+0.05in} 
\noindent {\bf Statistical zero-shot detectors.}
Mitchell et al.~\cite{mitchell2023detectgpt} propose
\emph{DetectGPT}, which exploits the observation that
AI-generated text tends to lie in regions of the
probability surface with negative curvature; detection is
performed by checking whether small random perturbations of
a candidate text consistently reduces its log-probability under
the model.
Su et al.~\cite{su2023detectllm} extend the work 
by
using the \emph{log rank} of sampled tokens—rather than
log-probability curvature—as a detection signal, improving
efficiency and accuracy.
Bao et al.~\cite{bao2024fastdetectgpt} introduce
\emph{Fast-DetectGPT}, which replaces  perturbation-based
curvature estimate with a conditional probability curvature
measure, achieving  accuracy comparable to DetectGPT but with less 
costs.
Hans et al.~\cite{hans2024binoculars} propose
\emph{Binoculars}, a zero-shot detector that compares the
perplexity of a text under two closely related models, 
without requiring
access to the original generator.

\vspace{+0.05in} \noindent {\bf Learning-based detectors.}
Guo et al.~\cite{guo2023closechatgpt} train 
supervised classifiers to distinguish GPT-generated text from
human text, finding that finetuned detectors can achieve high
accuracy on in-distribution data but degrade under distribution
shift.
He et al.~\cite{he2024mgtbench} introduce MGTBench, a 
benchmark for evaluating machine-generated text
detectors across diverse generation models, domains, and
adversarial conditions.
Ghosal et al.~\cite{ghosal2023towards} provide a survey of
possibilities and impossibilities of AI-generated text
detection, including both theoretical limits and practical
evaluations.

\subsubsection{Watermark based Detection}

\noindent {\bf Distortional watermarking.}
Distortional schemes \cite{kirchenbauer2023watermark,huo2024tokenspecific,liu2024adaptive,fernandez2023three,hu2024unbiased} explicitly modify the token distribution,
accepting a measurable divergence from the base model in exchange
for a stronger detection signal.
The foundational work~\cite{kirchenbauer2023watermark}
introduces the \emph{KGW green/red list} watermark, which
pseudorandomly partitions the vocabulary into green- and
red-listed tokens at each decoding step and applies a fixed logit
boost to all green tokens; detection is performed via a
$Z$-score on the fraction of green tokens observed in a candidate
text.
Huo et al.~\cite{huo2024tokenspecific} introduce
\emph{token-specific} watermarking, which assigns
context-dependent boost magnitudes $\delta_i$ to individual
tokens, improving fluency while preserving detectability.
Liu and Bu~\cite{liu2024adaptive} propose an \emph{adaptive}
watermark that dynamically adjusts both the green-list fraction
and the boost magnitude based on
real-time semantic coherence signals, reducing perceptual
artifacts in sensitive decoding contexts.
While these refinements improve the detectability-quality
tradeoff, we show in \S\ref{sec:optimality} that all KGW-style
schemes share a structural vulnerability: their detection score
functions are aligned with the token probability-ratio ordering,
making them susceptible to our unified attack framework.

\vspace{+0.05in} \noindent {\bf Distortion-free watermarking.}
Distortion-free schemes \cite{christ2024undetectable,fu2024gumbelsoft,aaronson2022watermarking,kuditipudi2024robust,wu2024dipmark,liu2024semantic,Zhao2023ProvableRW,takezawa2023necessary} aim to preserve the original output
distribution in expectation for all tokens, 
while still
embedding a detectable signal~\cite{christ2024undetectable}.
GumbelSoft~\cite{fu2024gumbelsoft} achieves distortion-free
watermarking by injecting Gumbel-distributed pseudorandom noise
into logits before softmax sampling; the detector recovers the
Gumbel noise component at the sampled token and aggregates these
values into a $Z$-score.
Kuditipudi et al.~\cite{kuditipudi2024robust} propose a
distortion-free scheme ITS with formal robustness guarantees against
bounded text editing, though their analysis does not address
adaptive adversaries that exploit score structure.
Wu et al.~\cite{wu2024dipmark} introduce {DiPMark}, which
reweights the token distribution via a random permutation, assigning binary colour scores to tokens according to their permuted rank.
SynthID~\cite{dathathri2024scalable} is the first production-ready generative LLM watermarking system, designed to be efficient, non-invasive, and detectable at scale. Its core contribution is a novel \emph{Tournament Sampling} mechanism for next-token generation, which achieves state-of-the-art watermark detectability.

\subsection{Attacks on AI-Generated Text Detectors}

\noindent {\bf Post-hoc attacks.}
Most existing 
attacks 
operate after text generation by modifying the generated content to disrupt the detector. 
The most widely studied approach is paraphrasing, where an auxiliary  model rewrites the  
generated text while attempting to preserve its semantic meaning~\cite{krishna2023paraphrasing}. Other post-hoc attacks include token-level editing~\cite{Piet2023MARKMyWORDSAA}, detector red-teaming using adversarially finetuned language models~\cite{shi2024red}, watermark-evasion generation strategies~\cite{jiang2023evading,meng2025gradescape}, and universal adversarial triggers that manipulate model outputs through specially crafted prompts~\cite{wallace2019universal}. Beyond empirical attacks, Sadasivan et al.~\cite{Sadasivan2023CanAT} establish information-theoretic limitations of watermark detection under sufficiently powerful paraphrasing adversaries. Although these approaches can substantially reduce detection rates, they typically exhibit an inherent quality--evasion tradeoff: weakening the watermark often requires modifying the generated text, which may introduce semantic drift, stylistic changes, or additional computational overhead.

\vspace{+0.05in} \noindent {\bf Generation-time evasion.}
These works consider attacks that 
intervene at generation time
rather than after the fact.
Zhao et al.~\cite{Zhao2023ProvableRW} study an adversary that
selects among multiple watermarked continuations, but assume
access to the detector score 
during generation.
Zhang et al.~\cite{zhang2024watermarks} 
show that no scheme
can simultaneously guarantee detectability and security against
a computationally bounded forger with access to the generator.
Abdelnabi and Fritz~\cite{abdelnabi2021adversarialwatermarking}
study adversarial attacks on \emph{text} watermarks in an
earlier generation of transformer-based systems, providing
an empirical baseline for generation-time adversaries.
Jovanovic et al.~\cite{jovanovic2024watermark} propose a watermark stealing attack against KGW~\cite{kirchenbauer2023watermark}, where an attacker queries the watermarked model to approximately reverse-engineer its secret watermarking rules and then exploits the learned watermark to facilitate watermark removal. Omidi et al.~\cite{omidi2026google,omidi2026google2} propose the first layer inflation attack against SynthID, which exploits its tournament-based sampling mechanism by externally adding additional tournament layers through black-box queries.
Our attack is the first to formulate generation-time evasion as a principled optimization problem, establish its optimality across multiple watermark families, and empirically achieve near-zero true positive rates without sacrificing text quality. 
Importantly, it requires only black-box sampling access, making the attack practical for API-level adversaries.

\subsection{Detectability–Robustness Tradeoffs}

The tension between watermark detectability and adversarial
robustness has been recognized informally in prior work but
not formally characterized for the LLM setting. 
Christ et al.~\cite{christ2024undetectable} show that
distortion-free watermarks are necessary for computational
undetectability, implying a tradeoff between quality
preservation and statistical detectability.
Zhang et al.~\cite{zhang2024watermarks} prove that no
watermark can be simultaneously unremovable and undetectable
under a computationally bounded model, establishing a
cryptographic impossibility analogous to our
information-theoretic result.
Takezawa et al.~\cite{takezawa2023necessary} characterize
necessary and sufficient conditions on score functions for a
watermark to achieve 
detectability and a form of
robustness.

Our work proves that maximizing detection power under
the Neyman–Pearson lemma~\cite{Neyman1933OnTP}
necessarily aligns the score function with the likelihood-ratio
ordering—the exact structure our adversary exploits.  This
result is stronger than prior informal observations: it
establishes that the tradeoff is not merely empirical but
information-theoretically fundamental within the current
generator–detector paradigm, and that no in-paradigm fix can
simultaneously achieve optimal detectability and resistance to
score-ratio-minimizing adversaries.

%% file: Conclusion.tex
\section{Conclusion}
\label{sec:conclusion}

 We develop a unified attack framework on watermarked LLMs based on \emph{score-ratio alignment}, a structural property shared by most generative watermarks. By selecting score-minimizing tokens at each decoding step, our attack achieves a near-zero true positive rate with minimal degradation in text quality. 
 We show that statistically optimal watermark detectors inherently align their detection scores with score-ratio ordering, making such watermarks fundamentally vulnerable to our attack. Our results reveal an inherent limitation of statistical generative watermarks  and establish score-ratio minimization 
as an important benchmark for evaluating watermark robustness.

%% file: Appendix.tex
\section{Proof of Optimality}
\label{app:proof-opt}

\subsection{Proof of Theorem \ref{thm:logit-addition}}

    Compute the probability ratio directly:
\begin{align}
    \frac{\hat{p}_j}{p_j}
    &= \frac{e^{\hat{l}_j} / \sum_k e^{\hat{l}_k}}
            {e^{l_j}      / \sum_k e^{l_k}}
     = \frac{e^{l_j + \delta\xi_j}}{e^{l_j}}
       \cdot \frac{\sum_k e^{l_k}}{\sum_k e^{l_k + \delta\xi_k}}
     = e^{\delta\xi_j} \cdot C,
     \label{eq:ratio-logit}
\end{align}
where $C := \frac{\sum_k e^{l_k}}{\sum_k e^{l_k + \delta\xi_k}}$

is a normalizing constant that does not depend on $j$. Hence,
$ \arg\min_{j \in \mathcal{V}} \frac{\hat{p}_j}{p_j}
    = \arg\min_{j \in \mathcal{V}} e^{\delta\xi_j}.$

Since $\delta > 0$ and $x \mapsto e^{\delta x}$ is strictly increasing,
\[
    \arg\min_{j \in \mathcal{V}} e^{\delta\xi_j}
    = \arg\min_{j \in \mathcal{V}} \xi_j
    = \arg\min_{j \in \mathcal{V}} \mathbb{E}[s_j],
\]
where the final equality holds because $s_j = \xi_j$ and the expectation
is taken over the PRF randomness conditioned on the context. This
establishes the stated inclusion.

\subsection{Proof of Theorem \ref{thm:kgw-variant}}

    We split into two cases based on whether any red token exists.

{First of all, we know that based on the design of KGW variant watermarks, we have $\delta_j>0$ .}

\noindent\textit{Case 1: At least one red token exists.}
Suppose $\xi_j = 0$ for some $j \in \mathcal{V}$. For any red token,
$e^{\delta_j \cdot 0} = 1$; for any green token, $e^{\delta_j \cdot 1}
= e^{\delta_j} > 1$ since $\delta_j > 0$. Therefore, the probability ratio
is minimized exactly by the red tokens:
\[
    \arg\min_{j \in \mathcal{V}} \frac{\hat{p}_j}{p_j}
    = \{j:  \xi_j = 0\}
    = \arg\min_{j \in \mathcal{V}} s_j.
\]

\medskip
\noindent\textit{Case 2: All tokens are green.}
If $\xi_j = 1$ for all $j \in \mathcal{V}$, then $s_j = 1$ for all $i$,
so $\arg\min_j s_j = \mathcal{V}$. Moreover, every token satisfies
$e^{\delta_j\xi_j} = e^{\delta_j}$, giving $ \frac{\hat{p}_j}{p_j} = e^{\delta_j} \cdot C,$

which is not necessarily equal across tokens (since $\delta_j$ may vary).
Nevertheless, the minimizer of $\hat{p}_j / p_j$ belongs to $\mathcal{V}
= \arg\min_j s_j$, so the inclusion holds.

\subsection{Proof of Theorem \ref{thm:kgw-feasibility}}

    If at least one red token exists, then for every red token $i$, $s_j =
\xi_j = 0$ is the minimum achievable score, and $e^{\delta_j \cdot 0} = 1
< e^{\delta_j} = e^{\delta_j \cdot 1}$ for any $\delta_j > 0$, confirming
that red tokens attain the smallest probability ratio. If no red token
exists, all tokens have identical score $s_j = 1$ and the minimizer
condition holds vacuously over $\mathcal{V}$.

\subsection{Proof of Theorem \ref{thm:dipmark}}

    We consider two exhaustive cases.

\medskip
\noindent\textit{Case 1: All tokens are green.}
If $s_{\iota[k]} = 1$ for all $k$, then $\mathbb{E}[s_j] = 1$ for all
$j \in \mathcal{V}$, so
$\arg\min_{j} \mathbb{E}[s_j] = \mathcal{V}$.
Any minimizer of $\hat{p}_j / p_j$ therefore belongs to $\mathcal{V}$,
and the inclusion holds trivially.

\medskip
\noindent\textit{Case 2: At least one token is red.}
Since the red prefix is non-empty, $\min_j s_j = 0$ and
$\min_j \mathbb{E}[s_j] = 0$. By the DipMark construction, the token
$\iota[0]$---the first in the shuffled order---has the highest probability
(over the random permutation) of landing in the red prefix, and therefore
attains the smallest expected score:

\[
    \iota[0] \in \arg\min_{j \in \mathcal{V}} \mathbb{E}[s_j].
\]

In DipMark's reweighting procedure, probability mass is redistributed from
tokens at the start of the permuted list toward those at the end: the
earliest tokens are the first to have their probabilities reduced (and
potentially
zero), while later tokens remain invariant or receive
increased mass. Consequently, the maximum degradation $\hat{p}_j / p_j$ is
achieved by the token appearing earliest in the permuted order:
\[
     \iota[0] {\in} \arg\min_{j \in \mathcal{V}} \frac{\hat{p}_j}{p_j}.
\]
Combining both inclusions:
\[
    \iota[0] {\in} \arg\min_{j \in \mathcal{V}} \frac{\hat{p}_j}{p_j}
    \in \arg\min_{j \in \mathcal{V}} \mathbb{E}[s_j],
\]
which proves the theorem in Case 2.

\subsection{Proof of Theorem \ref{thm:dipmark-feasibility}}

    We consider two exhaustive cases.

\medskip
\noindent\textit{Case 1: All tokens are green.}
If $s_{\iota[k]} = 1$ for all $k$, then $\mathbb{E}[s_j] = 1$ for all
$j \in \mathcal{V}$, so
$\arg\min_{j} \mathbb{E}[s_j] = \mathcal{V}$.
Any minimizer of $\hat{p}_j / p_j$ therefore belongs to $\mathcal{V}$,
and the inclusion holds trivially.

\medskip
\noindent\textit{Case 2: At least one token is red.}
Since the red prefix is non-empty, $\min_j s_j = 0$ and
$\min_j \mathbb{E}[s_j] = 0$. By the DipMark construction, the token
$\iota[0]$---the first in the shuffled order---has the highest probability
(over the random permutation) of landing in the red prefix, and therefore
attains the smallest expected score: $ \iota[0] \in \arg\min_{j \in \mathcal{V}} \mathbb{E}[s_j].$

In DipMark's reweighting procedure, probability mass is redistributed from
tokens at the start of the permuted list toward those at the end: the
earliest tokens are the first to have their probabilities reduced (and
potentially
zero), while later tokens remain invariant or receive
increased mass. Consequently, the maximum degradation $\hat{p}_j / p_j$ is
achieved by the token appearing earliest in the permuted order:
\[
     \iota[0] {\in} \arg\min_{j \in \mathcal{V}} \frac{\hat{p}_j}{p_j}.
\]
Combining both inclusions:
\[
    \iota[0] {\in} \arg\min_{j \in \mathcal{V}} \frac{\hat{p}_j}{p_j}
    \in \arg\min_{j \in \mathcal{V}} \mathbb{E}[s_j],
\]
which proves the theorem in Case 2.

\subsection{Proof of Theorem \ref{thm:its}}
\label{subsec:proof-its}

Let
$
 \xi_t^{(1)},\ldots,\xi_t^{(K)}
 \stackrel{\mathrm{i.i.d.}}{\sim}\mathcal L(\xi_t)
$
be auxiliary keys independent of the detector key $\xi_t$.
Define
\begin{align}
 \widehat p_{t,K}(j)
 &:=
 \frac{1}{K}
 \sum_{k=1}^{K}
 \mathbf{1}
 \left\{
  \Gamma(p_t,\xi_t^{(k)})=j
 \right\},
 \label{eq:its-empirical-distribution}\\
 R_{t,K}(j)
 &:=
 \frac{\widehat p_{t,K}(j)}{p_t(j)},
 \qquad j\in\mathcal s_j.
 \label{eq:its-empirical-ratio}
\end{align}
The attack selects
\begin{equation}
 x_t^\star
 \in
 \operatorname*{arg\,min}_{j\in\mathcal s_j}
 R_{t,K}(j).
 \label{eq:its-ratio-minimizer}
\end{equation}
For any candidate token $j$, write
\begin{equation}
 s(j)
 :=
 \left|
  \mu_t-\eta(\pi_t(j))
 \right|,
 \qquad
 \eta(r)=\frac{r-1}{|\mathcal V|-1}.
 \label{eq:its-candidate-score}
\end{equation}
Thus the detector contribution of token $j$ is
$-s(j)$.

\begin{theorem}[ITS detection-statistic reduction]

Let $v:=|\mathcal V|$. Suppose that $x_t^\star$ is
selected according to
Equation~\eqref{eq:its-ratio-minimizer} using auxiliary keys
independent of the detector key $\xi_t$. Then
\begin{equation}
 \mathbb{E}_{\xi_t}
 \left[
  s(x_t^\star)
 \right]
 =
 c_v,
 \qquad
 c_v:=\frac{2v-1}{6(v-1)}.
 \label{eq:its-attacked-null-score}
\end{equation}
The legitimate ITS token
$x_t=\Gamma(p_t,\xi_t)$ satisfies
\begin{equation}
 \mathbb{E}_{\xi_t}
 \left[
  s(x_t)
 \right]
 \leq c_v,
 \label{eq:its-legitimate-score-bound}
\end{equation}
with strict inequality whenever $|\mathcal s_j|\geq2$.
Consequently,
\begin{equation}
 \mathbb{E}_{\xi_t}
 \left[
  -s(x_t^\star)
 \right]
 <
 \mathbb{E}_{\xi_t}
 \left[
  -s(x_t)
 \right].
 \label{eq:its-proof-statistic-reduction}
\end{equation}
Thus empirical probability-ratio minimization reduces the
expected ITS detection statistic to its key-independent null
baseline.
\end{theorem}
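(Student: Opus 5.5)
The proof splits into two computations: an exact evaluation of the attacked expectation, which only needs independence from the detector key, and an optimal-transport (Monge) comparison for the legitimate token. Subtracting the two then gives \eqref{eq:its-proof-statistic-reduction}.

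\emph{Attacked token.} By construction, $x_t^\star$ is a measurable function of the auxiliary keys $\xi_t^{(1)},\ldots,\xi_t^{(K)}$ and of $p_t$ only. It is therefore independent of $\xi_t=(\mu_t,\pi_t)$, so I condition on $x_t^\star=j$ and compute $\mathbb{E}|\mu_t-\eta(\pi_t(j))|$. For a fixed $j$, $\pi_t(j)$ is uniform on $\{1,\ldots,v\}$, so $a:=\eta(\pi_t(j))$ is uniform on the grid $\{0,\tfrac{1}{v-1},\ldots,1\}$, and $a$ is independent of $\mu_t\sim\operatorname{Unif}(0,1)$. For fixed $a$,
\begin{align*}
\mathbb{E}_{\mu}|\mu-a| = \tfrac{a^2+(1-a)^2}{2} = \tfrac12-a+a^2 .
\end{align*}
Averaging over the grid uses $\mathbb{E}[a]=\tfrac12$ and $\mathbb{E}[a^2]=\tfrac{(2v-1)}{6(v-1)}$, which gives exactly $c_v$. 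Since this holds for every $j$, \eqref{eq:its-attacked-null-score} follows.

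\emph{Legitimate token.} I condition on $\pi_t=\pi$. Then $x_t=j$ iff $\mu_t\in I_{t,j}^{\pi}$, and these intervals are laid out in increasing $\pi$-rank. Hence $A:=\eta(\pi(x_t))$ is a nondecreasing function of $\mu_t$, with law $\nu_\pi:=\sum_j p_t(j)\,\delta_{\eta(\pi(j))}$. In other words, $(\mu_t,A)$ is the comonotone (quantile) coupling of $\operatorname{Unif}(0,1)$ and $\nu_\pi$. The cost $c(x,y)=|x-y|$ satisfies the Monge condition: $c(x,y)+c(x',y')\le c(x,y')+c(x',y)$ for $x\le x'$, $y\le y'$. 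So the comonotone coupling minimizes $\mathbb{E}\,c$ among all couplings with these marginals (the standard 1-D transport fact, proved by the usual uncrossing/rearrangement argument). In particular,
\begin{align*}
\mathbb{E}\bigl[|\mu_t-A|\,\big|\,\pi\bigr] \le \mathbb{E}\bigl[|\mu_t-A'|\,\big|\,\pi\bigr],
\end{align*}
where $A'\sim\nu_\pi$ is drawn independently of $\mu_t$. Next I average over $\pi$: $A'$ is marginally $\sum_j p_t(j)\cdot$(uniform grid), which is the uniform grid, and it stays independent of $\mu_t$. Therefore the right-hand side averages to $c_v$ by the first computation, which gives \eqref{eq:its-legitimate-score-bound}.

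\emph{Strictness (the main obstacle).} I must show the Monge inequality is strict on a set of $\pi$ of positive probability when $|\mathcal S_t|\ge2$. For such $\pi$, $\nu_\pi$ has at least two distinct atoms $y<y'$, both with positive mass. I would compare the comonotone coupling with the independent one through a single uncrossing. The independent coupling puts positive mass on pairs $(x,y')$, $(x',y)$ with $x<x'$, where $x$ ranges over the interval assigned to $y$ and $x'$ over the interval assigned to $y'$. For $|\cdot|$, the Monge inequality is strict exactly when the intervals $[x,x']$ and $[y,y']$ overlap in more than a point. The point I expect to need care on is that this cannot always be guaranteed for $y,y'$ at the grid extremes. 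I would handle it by noting that, averaged over $\pi$, the pair of ranks of two support tokens is uniform over distinct pairs. So with positive probability the atoms are adjacent and interior, for example with $\eta$-values straddling $p$-mass boundaries in $(0,1)$. This yields a strictly positive expected gap, hence $\mathbb{E}[s(x_t)]<c_v$. Combining with \eqref{eq:its-attacked-null-score} and negating gives \eqref{eq:its-proof-statistic-reduction}.
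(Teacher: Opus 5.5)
\paragraph{Overall.} Your main line is correct and takes a genuinely different route from the paper; the strictness step, however, is not proved as sketched, though it is easy to fix.

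\paragraph{Comparison of routes.} Your attacked-token computation is the paper's. For the legitimate token you fix $\pi_t$, identify $(\mu_t,\eta(\pi_t(x_t)))$ as the comonotone coupling of $\mathrm{Unif}(0,1)$ and $\nu_\pi$, use optimality of that coupling for the convex cost $|x-y|$, and only then average over $\pi_t$. The paper never fixes $\pi_t$. It shows $R_t=\pi_t(x_t)$ is uniform and proves $(\mu_t,R_t)$ is positively quadrant dependent via $\Pr(\mu_t\le u,R_t\le r)=\mathbb{E}_{\pi_t}[\min\{u,C^{\pi_t}_{t,r}\}]\ge ur/v$. It then concludes with the layer-cake formula for $\mathbb{E}[\min\{\mu_t,Z_t\}]$ and $|a-b|=a+b-2\min\{a,b\}$. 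The underlying fact is the same: $\min\{u,c\}$ is the comonotone joint CDF and $uc$ the independent one. The paper averages over $\pi_t$ before integrating, while you integrate first. Your version gives the stronger inequality for every fixed $\pi$ but leans on a transport fact; the paper's is self-contained.

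\paragraph{The gap in strictness.} Your worry about grid extremes is backwards. If $y=0$ and $y'=1$, then $[x,x']\subseteq[y,y']$ and the uncrossing gain is $2(x'-x)>0$. Your fallback also does not always exist. For $v=2$ the only atoms are $0$ and $1$, and for $v=3$ there is a single interior grid point. Yet strictness holds there: for $v=2$, $\mathbb{E}[s(x_t)]=(p_1^2+p_2^2)/2<\tfrac12=c_2$. Adjacency is also not what produces a gain. Two adjacent interior atoms lying to the right of both their mass intervals give $[x,x']\cap[y,y']=\emptyset$. So the existence of a good $\pi$ is asserted, not shown.

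\paragraph{How to fix it.} Strictness in fact holds for \emph{every} $\pi$. Start from $\mathbb{E}|X-Y|=\int\bigl(F_X(t)+F_Y(t)-2\Pr(X\le t,Y\le t)\bigr)\,dt$. The independent-minus-comonotone cost gap is then $2\int_0^1\bigl(\min\{t,F_\pi(t)\}-tF_\pi(t)\bigr)\,dt$, where $F_\pi$ is the CDF of $\nu_\pi$. The integrand is positive wherever $F_\pi(t)\in(0,1)$, i.e.\ on $[y_{\min},y_{\max})\cap(0,1)$. That set has positive length, because $|\mathcal{S}_t|\ge2$ gives $\nu_\pi$ two distinct atoms for every $\pi$. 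Equivalently, uncross the smallest and largest atoms, whose mass intervals start at $0$ and end at $1$. The paper gets strictness the same way, restricted to $u<1/(v-1)$, where $F_\pi(u)=C^{\pi_t}_{t,1}\in(0,1)$ with positive probability.
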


\begin{proof}
Because $x_t^\star$ is a measurable function only of the
auxiliary keys
$\xi_t^{(1)},\ldots,\xi_t^{(K)}$, it is independent of the
detector key $\xi_t=(\mu_t,\pi_t)$. Condition on
$x_t^\star=j$. Since $\pi_t$ is a uniformly random
permutation, $\pi_t(j)$ is uniform on $\{1,\ldots,v\}$ and is
independent of $\mu_t$. Therefore,
\begin{align}
 \mathbb{E}_{\xi_t}
 \left[
  s(x_t^\star)
  \mid x_t^\star=j
 \right]
 &=
 \frac{1}{v}
 \sum_{r=1}^{v}
 \int_0^1
 \left|
  u-\eta(r)
 \right|du
 \nonumber\\
 &=
 \frac{1}{v}
 \sum_{r=1}^{v}
 \frac{
  \eta(r)^2+
  \bigl(1-\eta(r)\bigr)^2
 }{2}.
 \label{eq:its-null-score-integral}
\end{align}
For $\eta(r)=(r-1)/(v-1)$,
\begin{align}
 \frac{1}{v}\sum_{r=1}^{v}\eta(r)
 &=\frac12,\\
 \frac{1}{v}\sum_{r=1}^{v}\eta(r)^2
 &=\frac{2v-1}{6(v-1)}.
\end{align}
Substituting these identities into
Equation~\eqref{eq:its-null-score-integral} gives
\[
 \mathbb{E}_{\xi_t}
 \left[
  s(x_t^\star)
  \mid x_t^\star=j
 \right]
 =
 \frac{2v-1}{6(v-1)}
 =
 c_v.
\]
This value does not depend on $j$. Hence the law of total
expectation proves
Equation~\eqref{eq:its-attacked-null-score}.

We next analyze the legitimate ITS token. Define its
permuted rank and normalized rank by
\[
 R_t:=\pi_t(x_t),
 \qquad
 Z_t:=\eta(R_t).
\]
For each $r\in\{1,\ldots,v\}$, define the cumulative mass of
the first $r$ tokens in the permuted ordering:
\begin{equation}
 C_{t,r}^{\pi_t}
 :=
 \sum_{j:\pi_t(j)\leq r}p_t(j).
 \label{eq:its-cumulative-mass}
\end{equation}
By the inverse-CDF decoder,
\begin{equation}
 R_t\leq r
 \quad\Longleftrightarrow\quad
 \mu_t\leq C_{t,r}^{\pi_t}.
 \label{eq:its-rank-cdf-equivalence}
\end{equation}

We first establish the marginal distribution of $R_t$.
Conditioned on $\pi_t$, the probability that the output has
rank $r$ equals the probability mass of the token occupying
that rank. Thus,
\begin{align}
 \Pr(R_t=r)
 &=
 \mathbb{E}_{\pi_t}
 \left[
  p_t\bigl(\pi_t^{-1}(r)\bigr)
 \right]
 \nonumber\\
 &=
 \frac{1}{v}
 \sum_{j\in\mathcal V}p_t(j)
 =
 \frac{1}{v}.
 \label{eq:its-uniform-output-rank}
\end{align}
Therefore, $R_t$ is uniform on $\{1,\ldots,v\}$ and $Z_t$ is
uniform on
\[
 \left\{
  0,\frac{1}{v-1},\ldots,1
 \right\}.
\]

Although $Z_t$ has the same marginal distribution as the rank
of a key-independent token, it is positively dependent on
$\mu_t$. Indeed, every token belongs to the first $r$
positions of a uniform permutation with probability $r/v$.
Consequently,
\begin{align}
 \mathbb{E}_{\pi_t}
 \left[
  C_{t,r}^{\pi_t}
 \right]
 &=
 \sum_{j\in\mathcal V}
 p_t(j)
 \Pr\bigl(\pi_t(j)\leq r\bigr)
 \nonumber\\
 &=
 \frac{r}{v}
 \sum_{j\in\mathcal V}p_t(j)
 =
 \frac{r}{v}.
 \label{eq:its-expected-cumulative-mass}
\end{align}
For every $u\in[0,1]$, Equation
\eqref{eq:its-rank-cdf-equivalence} gives
\begin{align}
 \Pr(\mu_t\leq u,R_t\leq r)
 &=
 \mathbb{E}_{\pi_t}
 \left[
  \Pr
  \left(
   \mu_t\leq
   \min\{u,C_{t,r}^{\pi_t}\}
   \mid\pi_t
  \right)
 \right]
 \nonumber\\
 &=
 \mathbb{E}_{\pi_t}
 \left[
  \min\{u,C_{t,r}^{\pi_t}\}
 \right]
 \nonumber\\
 &\geq
 u\,
 \mathbb{E}_{\pi_t}
 \left[
  C_{t,r}^{\pi_t}
 \right]
 \nonumber\\
 &=
 u\frac{r}{v}.
 \label{eq:its-positive-quadrant-dependence}
\end{align}
The inequality uses
\[
 \min\{u,c\}\geq uc,
 \qquad u,c\in[0,1].
\]
Since $\Pr(\mu_t\leq u)=u$ and
$\Pr(R_t\leq r)=r/v$, Equation
\eqref{eq:its-positive-quadrant-dependence} proves that
$\mu_t$ and $R_t$, and hence $\mu_t$ and $Z_t$, are
positively quadrant dependent.

Let $\widetilde Z_t$ be independent of $\mu_t$ and have the
same marginal distribution as $Z_t$. Positive quadrant
dependence implies
\begin{equation}
 \Pr(\mu_t>u,Z_t>u)
 \geq
 \Pr(\mu_t>u)
 \Pr(\widetilde Z_t>u)
 \label{eq:its-upper-tail-dependence}
\end{equation}
for every $u\in[0,1]$. Using the layer-cake representation,
\begin{align}
 \mathbb{E}
 \left[
  \min\{\mu_t,Z_t\}
 \right]
 &=
 \int_0^1
 \Pr(\mu_t>u,Z_t>u)\,du
 \nonumber\\
 &\geq
 \int_0^1
 \Pr(\mu_t>u)
 \Pr(\widetilde Z_t>u)\,du
 \nonumber\\
 &=
 \mathbb{E}
 \left[
  \min\{\mu_t,\widetilde Z_t\}
 \right].
 \label{eq:its-minimum-dominance}
\end{align}

Both $Z_t$ and $\widetilde Z_t$ have mean $1/2$. Therefore,
using $|a-b|=a+b-2\min\{a,b\}$,
\begin{align}
 \mathbb{E}
 \left[
  |\mu_t-Z_t|
 \right]
 &=
 1-
 2\mathbb{E}
 \left[
  \min\{\mu_t,Z_t\}
 \right]
 \nonumber\\
 &\leq
 1-
 2\mathbb{E}
 \left[
  \min\{\mu_t,\widetilde Z_t\}
 \right]
 \nonumber\\
 &=
 \mathbb{E}
 \left[
  |\mu_t-\widetilde Z_t|
 \right]
 =
 c_v.
 \label{eq:its-distance-comparison}
\end{align}
Because
$s(X_t)=|\mu_t-Z_t|$, this proves
Equation~\eqref{eq:its-legitimate-score-bound}.

It remains to establish strictness. If
$|\mathcal s_j|\geq2$, every token in $\mathcal s_j$ has
probability strictly between zero and one. Hence
\[
 \Pr_{\pi_t}
 \left(
  0<C_{t,1}^{\pi_t}<1
 \right)
 >0.
\]
For every $u\in(0,1)$ and every $c\in(0,1)$,
\[
 \min\{u,c\}>uc.
\]
Thus Equation~\eqref{eq:its-positive-quadrant-dependence} is
strict for $r=1$ and $u\in(0,1)$. In particular, for
\[
 0<u<\frac{1}{v-1},
\]
the event $Z_t\leq u$ is equivalent to $R_t=1$, and
Equation~\eqref{eq:its-upper-tail-dependence} is strict on an
interval of positive measure. It follows that
\[
 \mathbb{E}
 \left[
  \min\{\mu_t,Z_t\}
 \right]
 >
 \mathbb{E}
 \left[
  \min\{\mu_t,\widetilde Z_t\}
 \right],
\]
and therefore
\[
 \mathbb{E}_{\xi_t}
 \left[
  s(X_t)
 \right]
 <
 c_v
 =
 \mathbb{E}_{\xi_t}
 \left[
  s(x_t^\star)
 \right].
\]
Multiplying by $-1$ gives
\[
 \mathbb{E}_{\xi_t}
 \left[
  -s(x_t^\star)
 \right]
 <
 \mathbb{E}_{\xi_t}
 \left[
  -s(X_t)
 \right],
\]
which proves
Equation~\eqref{eq:its-proof-statistic-reduction}.
\end{proof}

\subsection{Proof of Theorem \ref{thm:synthid-layerwise-alignment}}
Fix a tournament layer $\ell$ and let
$p^{(\ell-1)}$ be its incoming token distribution. For an
ordered score value $a$, define
\begin{align}
 Q_{\ell,<}(a)
 :=
 \sum_{k:g_{\ell,k}<a}p_k^{(\ell-1)}, \quad
 Q_{\ell,\leq}(a)
 :=
 \sum_{k:g_{\ell,k}\leq a}p_k^{(\ell-1)}, \quad
 Q_{\ell,=}(a)
 :=
 Q_{\ell,\leq}(a)-Q_{\ell,<}(a).
\end{align}
For every supported token $i$, we have
$Q_{\ell,=}(g_{\ell,i})>0$.

Consider a match containing $N$ independent candidates drawn
from $p^{(\ell-1)}$. The probability that the largest
$g$-value among these candidates is exactly $a$ is
\[
 Q_{\ell,\leq}(a)^N-Q_{\ell,<}(a)^N.
\]
Conditioned on the largest value being $a$, uniform
tie-breaking and exchangeability imply that the winning token
is distributed proportionally to $p^{(\ell-1)}$ among the
tokens whose $g$-value equals $a$. Consequently,
\begin{equation}
 p_i^{(\ell)}
 =
 p_i^{(\ell-1)}
 \frac{
 Q_{\ell,\leq}(g_{\ell,i})^N
 -
 Q_{\ell,<}(g_{\ell,i})^N
 }{
 Q_{\ell,=}(g_{\ell,i})
 }.
 \label{eq:synthid-layer-update-proof}
\end{equation}
Therefore, the layerwise probability ratio is
\begin{align}
 R_{\ell,i}
 &:=
 \frac{p_i^{(\ell)}}{p_i^{(\ell-1)}}\\
 &=
 \frac{
 Q_{\ell,\leq}(g_{\ell,i})^N
 -
 Q_{\ell,<}(g_{\ell,i})^N
 }{
 Q_{\ell,\leq}(g_{\ell,i})
 -
 Q_{\ell,<}(g_{\ell,i})
 }\\
 &=
 \sum_{r=0}^{N-1}
 Q_{\ell,\leq}(g_{\ell,i})^{N-1-r}
 Q_{\ell,<}(g_{\ell,i})^r,
 \label{eq:synthid-ratio-polynomial}
\end{align}
where the last equality follows from
\[
 \frac{a^N-b^N}{a-b}
 =
 \sum_{r=0}^{N-1}a^{N-1-r}b^r.
\]

Now consider $i,j\in\mathcal S$ satisfying
$g_{\ell,i}<g_{\ell,j}$. Because both score levels have
positive incoming probability mass,
\begin{equation}
 Q_{\ell,<}(g_{\ell,i})
 <
 Q_{\ell,\leq}(g_{\ell,i})
 \leq
 Q_{\ell,<}(g_{\ell,j})
 <
 Q_{\ell,\leq}(g_{\ell,j}).
 \label{eq:synthid-cdf-order}
\end{equation}
Every term in Equation~\eqref{eq:synthid-ratio-polynomial}
is nondecreasing in both cumulative masses. Moreover, because
$N\geq2$,
\[
 Q_{\ell,\leq}(g_{\ell,i})^{N-1}
 <
 Q_{\ell,\leq}(g_{\ell,j})^{N-1}.
\]
Thus at least the $r=0$ term is strictly larger for token $j$,
while none of the other terms is smaller. It follows that
\[
 g_{\ell,i}<g_{\ell,j}
 \quad\Longrightarrow\quad
 R_{\ell,i}<R_{\ell,j}.
\]

Hence $R_{\ell,i}$ is strictly increasing in the ordered
$g$-value. All tokens attaining the smallest $g$-value attain
the smallest probability ratio, and every token with a larger
$g$-value has a strictly larger ratio. Therefore,
\[
 \arg\min_{i\in\mathcal S}R_{\ell,i}
 =
 \arg\min_{i\in\mathcal S}g_{\ell,i}.
\]
Since $g_{\ell,i}$ is the contribution of token $i$ to the
mean detector at layer $\ell$, minimum-ratio selection also
minimizes that layer's mean-score contribution.

\subsection{Proof of Theorem \ref{thm:synthid-bayesian-layerwise}}

We first prove that the conditional Bayesian factor is
nondecreasing in the current $g$-value.

Let
$
    F_g^{-}(z):=\Pr(G<z)=F_g(z)-f_g(z).
$
Conditioned on $\psi_\ell=c$, the winning $g$-value is the
maximum of $c$ i.i.d. null $g$-values. Therefore,
$    \Pr(G_{(c)}=z)
    =
    F_g(z)^c-F_g^{-}(z)^c.
$
Consequently,
\begin{align}
    \Lambda_c(z)
    &=
    \frac{F_g(z)^c-F_g^{-}(z)^c}
         {F_g(z)-F_g^{-}(z)}
    =
    \sum_{k=0}^{c-1}
    F_g(z)^{c-1-k}F_g^{-}(z)^k,
    \label{eq:bayes-factor-discrete}
\end{align}
where the second equality follows from
\[
    \frac{a^c-b^c}{a-b}
    =
    \sum_{k=0}^{c-1}a^{c-1-k}b^k.
\]

If $z_1<z_2$, then
$
    F_g(z_1)\leq F_g(z_2),
    F_g^{-}(z_1)\leq F_g^{-}(z_2).
$

Every term in Equation~\eqref{eq:bayes-factor-discrete} is
nondecreasing in both arguments. Hence
\[
    z_1<z_2
    \quad\Longrightarrow\quad
    \Lambda_c(z_1)\leq\Lambda_c(z_2).
\]
Because $q_c(u)\geq0$ and $\sum_c q_c(u)=1$, it follows that
\[
    \textrm{BS}_\ell(z_1;u)
    =
    \sum_{c=1}^{N}q_c(u)\Lambda_c(z_1)
    \leq
    \sum_{c=1}^{N}q_c(u)\Lambda_c(z_2)
    =
    \textrm{BS}_\ell(z_2;u).
\]
Thus $\textrm{BS}_\ell(z;u)$ is nondecreasing in $z$.

For completeness, when the $g$-value distribution is
continuous,
\[
    \Pr(G_{(c)}\in dz)
    =
    cF_g(z)^{c-1}f_g(z)\,dz,
\]
and therefore
\[
    \Lambda_c(z)=cF_g(z)^{c-1},
\]
which is again nondecreasing in $z$.

We next consider the Tournament-sampling probability ratio.
For a score level $z$, define
\[
    Q_{\ell,<}(z)
    :=
    \sum_{j:g_{\ell,j}<z}p_j^{(\ell-1)},
    \qquad
    Q_{\ell,\leq}(z)
    :=
    \sum_{j:g_{\ell,j}\leq z}p_j^{(\ell-1)},
\]
and
\[
    Q_{\ell,=}(z)
    :=
    Q_{\ell,\leq}(z)-Q_{\ell,<}(z).
\]
The Tournament-sampling update gives
\[
    R_{\ell,i}
    =
    \frac{
      Q_{\ell,\leq}(g_{\ell,i})^N
      -
      Q_{\ell,<}(g_{\ell,i})^N
    }{
      Q_{\ell,=}(g_{\ell,i})
    }.
\]
Using the difference-of-powers identity,
\begin{align}
    R_{\ell,i}
    &=
    \sum_{k=0}^{N-1}
    Q_{\ell,\leq}(g_{\ell,i})^{N-1-k}
    Q_{\ell,<}(g_{\ell,i})^k.
    \label{eq:tournament-ratio-polynomial}
\end{align}

Both $Q_{\ell,\leq}(z)$ and $Q_{\ell,<}(z)$ are
nondecreasing in $z$. Since every coefficient in
Equation~\eqref{eq:tournament-ratio-polynomial} is
nonnegative,
\[
    g_{\ell,i}<g_{\ell,j}
    \quad\Longrightarrow\quad
    R_{\ell,i}\leq R_{\ell,j}.
\]
For distinct supported score levels and $N\geq2$, the
inequality is strict. Therefore,
\[
    \arg\min_i R_{\ell,i}
    =
    \arg\min_j g_{\ell,j}.
\]

We have shown that both the layerwise probability multiplier
and the conditional Bayesian factor are minimized at the
smallest current-layer $g$-value. Hence,
\[
    \arg\min_i R_{\ell,i}
    =
    \arg\min_i g_{\ell,i}
    \subseteq
    \arg\min_i \textrm{BS}_\ell(g_{\ell,i};u).
\]

Finally, if $q_c(u)>0$ for at least one $c\geq2$, then
$\Lambda_c(z)$, and consequently $\textrm{BS}_\ell(z;u)$, is strictly
increasing over the supported score levels. In that case,
\[
    \arg\min_i R_{\ell,i}
    =
    \arg\min_i g_{\ell,i}
    =
    \arg\min_i \textrm{BS}_\ell(g_{\ell,i};u).
\]

\section{Proof of Robustness-Detectability Tradeoff}
\label{app:proof-tradeoff}

\subsection{Proof of Theorem \ref{thm:defeatable}}

    Consider an adversary that, at each decoding step $t$, selects the token
minimizing the probability ratio:
$
    j^* = \arg\min_{j \in \mathcal{V}} \frac{\hat{p}_j}{p_j}.
$
By the defeatability condition (Definition~\ref{def:defeatable}), this
selection is equivalent to
$
    j^* = \arg\min_{j \in \mathcal{V}} s_j.
$
The Cumulative detection score over a sequence of length $T$ is $S = \sum_{t=1}^{T} s_{i_t^*}$.

Since the adversary selects the score-minimizing token at every step, $S$
is minimized in expectation over all possible generation paths. The
detector declares watermark presence by thresholding $S$ against a fixed
value $\tau$; the true positive rate is
$\Pr_{\hat{p}}\!\left(S \geq \tau\right).$
Minimizing $S$ at each step reduces $\mathbb{E}_{\hat{p}}[S]$, which
directly reduces $\mathrm{TPR}$. Therefore, the adversary can drive the
detection rate to baseline levels, and the watermark is not robust.

\subsection{Proof of Theorem \ref{thm:tradeoff}}

    By the Neyman--Pearson lemma ~\cite{Neyman1933OnTP}, the most powerful
test at significance level $\alpha$ for the simple hypothesis pair
$H_0 : x \sim p$ versus $H_1 : x \sim \hat{p}$ is the likelihood-ratio
test, which rejects $H_0$ when $\Lambda(x) = \hat{p}(x)/p(x)$ exceeds a
threshold $\tau_\alpha$ calibrated to achieve false positive rate $\alpha$.
No other test achieves a higher true positive rate at the same false
positive rate.

Detection power is therefore maximized when the score function $s$ used
to compute the detection statistic preserves the ordering of
$\Lambda(x)$---that is, when $s(x) \propto \log(\hat{p}(x)/p(x))$. Under
this assignment, tokens with larger likelihood ratios receive higher scores.

However, aligning $s$ with $\log(\hat{p}/p)$ implies that
\[
    \arg\min_{j}\, s_j
    = \arg\min_{j}\, \log\frac{\hat{p}_j}{p_j}
    = \arg\min_{j}\, \frac{\hat{p}_j}{p_j}.
\]
This is precisely the defeatability condition of
Definition~\ref{def:defeatable}. By Theorem~\ref{thm:defeatable}, the
scheme is therefore not robust. Since both conclusions follow from the same
alignment requirement, improving detectability necessarily strengthens the
structure that the adversary exploits.

\section{Proof of Theoretical Attack Effectiveness}
\label{app:proof-effectiveness}

\begin{lemma}[Minimum versus convex combination]
\label{lem:min-convex-combination}
For arbitrary real numbers $a_1,\ldots,a_n$ and weights
$q_1,\ldots,q_n$ satisfying
$
 q_j\geq 0
 \text{ and }
 \sum_{i=1}^{n}q_j=1,
$
we have
$
 \min_{1\leq i\leq n}a_j
 \leq
 \sum_{i=1}^{n}q_j a_j
 \label{eq:min-convex-combination}
$.
Equality holds if and only if every index assigned positive
weight is a minimizer of $a_j$.
\end{lemma}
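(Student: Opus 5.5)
The plan is to compare every $a_j$ with the common value $m:=\min_{1\leq i\leq n}a_i$, which exists because the index set is finite. First establish the inequality. Since $a_j\geq m$ for every $j$ and every $q_j\geq 0$, we get $q_j a_j\geq q_j m$ for each $j$. Summing over $j$ and using $\sum_j q_j=1$ gives
\[
 \sum_{j=1}^{n}q_j a_j\;\geq\;m\sum_{j=1}^{n}q_j\;=\;m .
\]
No convexity or probabilistic argument is needed; the whole inequality is this termwise comparison.

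For the equality characterization, rewrite the gap as a sum of nonnegative terms:
\[
 \sum_{j=1}^{n}q_j a_j-m=\sum_{j=1}^{n}q_j\,(a_j-m),
\]
where each summand is a product of two nonnegative factors. The gap is zero if and only if every summand vanishes. That holds if and only if $a_j=m$ for every $j$ with $q_j>0$, which says exactly that every index with positive weight is a minimizer. Both directions come out of this one identity: for the forward direction, a positive weight on a non-minimizer contributes a strictly positive term; for the reverse direction, every term is zero either because $q_j=0$ or because $a_j=m$.

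The only step that needs any care, and it is hardly an obstacle, is stating explicitly that $m$ is attained because $n$ is finite, and keeping the ``only if'' direction tied to the nonnegativity of each summand. This is the form in which the lemma will be applied later: take $a_j$ to be the token scores and $q_j$ the base probabilities $p_t(j\mid h)$, which yields the per-step bound \eqref{eq:per-step-expectation-bound}.
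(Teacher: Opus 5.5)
Your proof is correct and follows essentially the same route as the paper: compare each term termwise, $q_j a_j \geq q_j m$, sum using $\sum_j q_j = 1$, and read the equality case off the nonnegative terms $q_j(a_j - m)$. Your gap identity just spells out the paper's one-line remark that the inequality is strict exactly when some positive weight falls on a non-minimizer.
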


\emph{Proof:} Let $a_{\min}=\min_j a_j$. Since $a_{\min}\leq a_j$ for every
$i$, multiplying each inequality by the nonnegative weight
$q_j$ and summing gives
\[
 \sum_{i=1}^{n}q_j a_{\min}
 \leq
 \sum_{i=1}^{n}q_j a_j.
\]
Because $\sum_jq_j=1$, the left-hand side equals
$a_{\min}$, proving~\eqref{eq:min-convex-combination}.
The inequality is strict exactly when $q_j>0$ for at least one
index satisfying $a_j>a_{\min}$.

\subsection{Proof of Theorem \ref{thm:minimum-ratio-effectiveness}}

Fix a decoding step $t$ and condition on
$h_{1:t-1}=h$ and $\xi_t$. By
Equation~\eqref{eq:effectiveness-alignment}, minimum-ratio
selection also minimizes the token score:
\[
 s_t(x_t^{\mathrm A}\mid h,\xi_t)
 =
 \min_{i\in\mathcal{S}_t(h)}
 s_t(i\mid h,\xi_t).
\]

The conditional expected score of an unwatermarked token is
\[
 \mathbb{E}
 \left[
   s_t(x_t^0)
   \mid h_{1:t-1}=h,\xi_t
 \right]
 =
 \sum_{i\in\mathcal{S}_t(h)}
 p_t(i\mid h)s_t(i\mid h,\xi_t).
\]
Because the values $p_t(i\mid h)$ are nonnegative and sum to
one, Lemma~\ref{lem:min-convex-combination} gives
\[
 \min_{i\in\mathcal{S}_t(h)}
 s_t(i\mid h,\xi_t)
 \leq
 \sum_{i\in\mathcal{S}_t(h)}
 p_t(i\mid h)s_t(i\mid h,\xi_t).
\]
This proves~\eqref{eq:per-step-expectation-bound}.

Multiplying by $w_t\geq0$, summing over $t$, and applying the
law of total expectation gives
Under fresh-key null invariance, the counterfactual null scores
have the same joint distribution as the scores generated under
$H_0$. Hence the right-hand side equals
$\mathbb{E}[s_j^0]$, proving
Equation~\eqref{eq:cumulative-expectation-bound}.

\subsection{Proof of Theorem \ref{thm:stochastic-dominance}}

We construct the attacked and null score sequences on a common
probability space.

Suppose the attacked prefix
$h_{1:t-1}$ has already been generated. Conditional
on this prefix and the current key realization $\xi_t$, the
adversary selects $x_t^{\mathrm A}$ by minimizing the probability
ratio. By the pointwise alignment condition
in~\eqref{eq:pointwise-alignment},
\[
 \phi(\xi_t,h_{1:t-1},x_t^{\mathrm A})
 =
 \min_{i:p_t(i\mid h_{1:t-1})>0}
 \phi(\xi_t,h_{1:t-1},i).
\]
Define the attacked score as
\[
 Z_t^{\mathrm A}
 :=
 \phi(\xi_t,h_{1:t-1},x_t^{\mathrm A}).
\]

On the same probability space, independently sample the
counterfactual null token
\[
 Y_t\sim p_t(\cdot\mid h_{1:t-1})
\]
and define
\[
 \widetilde Z_t^0
 :=
 \phi(\xi_t,h_{1:t-1},Y_t).
\]
Because $Y_t$ belongs to the support of the base distribution,
the score-minimizing property gives
\[
    Z_t^{\mathrm A}\leq\widetilde Z_t^0
    \qquad\text{almost surely}.
 \label{eq:per-step-coupling}
\]

Although $\widetilde Z_t^0$ is evaluated at the attacked prefix,
fresh-key null invariance implies
\[
 \mathcal{L}
 \left(
   \widetilde Z_t^0
   \mid h_{1:t-1}
 \right)
 =
 G_t.
\]
Therefore, its conditional distribution is exactly the same as
that of a null score used to calibrate the detector and does not
depend on the attacked history. Repeating the construction for
$t=1,\ldots,T$ yields
\[
 (\widetilde Z_1^0,\ldots,\widetilde Z_T^0)
 \stackrel{d}{=}
 (Z_1^0,\ldots,Z_T^0).
\]
At the same time, Equation~\eqref{eq:per-step-coupling} holds
simultaneously at every step:
\[
    Z_t^{\mathrm A}\leq\widetilde Z_t^0,
    \qquad t=1,\ldots,T,
    \quad\text{almost surely}.
\]

Since $\Phi_T$ is coordinatewise nondecreasing,
\[
 \begin{aligned}
 D_T^{\mathrm A}
   &=
   \Phi_T(Z_1^{\mathrm A},\ldots,Z_T^{\mathrm A})
   \leq
   \Phi_T(\widetilde Z_1^0,\ldots,\widetilde Z_T^0)
   =:\widetilde D_T^0
   \qquad\text{almost surely}.
 \end{aligned}
\]
Moreover, $\widetilde D_T^0\stackrel{d}{=}D_T^0$. Hence, for
every threshold $\tau$,
\[
 \Pr_{\mathrm A}(D_T^{\mathrm A}\geq\tau)
 \leq
 \Pr(\widetilde D_T^0\geq\tau)
 =
 \Pr_0(D_T^0\geq\tau).
\]
Thus
$D_T^{\mathrm A}\preceq_{\mathrm{st}}D_T^0$.
Setting $\tau=\tau_\varepsilon$ gives
\[
 \operatorname{TPR}_{\mathrm A}(\varepsilon)
 \leq
 \Pr_0(D_T^0\geq\tau_\varepsilon)
 \leq\varepsilon,
\]
which proves~\eqref{eq:attacked-tpr-bound}.

\subsection{Proof of Theorem \ref{cor:practical-tpr}}

Let $E_t$ be the event that the estimator selects a
score-minimizing token within $C_t$. If $x_t^0\in C_t$ and
$E_t$ occurs, then
\[
 s_t(x_t^{\mathrm A})
 =
 \min_{i\in C_t}s_t(i)
 \leq
 s_t(x_t^0).
\]
Therefore, the coordinatewise coupling used in
Theorem~\ref{thm:stochastic-dominance} fails only if
$x_t^0\notin C_t$ or $E_t^c$ occurs. Hence,
\[
 \{D_T^{\mathrm A}\geq\tau_\varepsilon\}
 \subseteq
 \{D_T^0\geq\tau_\varepsilon\}
 \cup
 \bigcup_{t=1}^{T}
 \left(
   \{x_t^0\notin C_t\}\cup E_t^c
 \right).
\]
Taking probabilities and applying the union bound gives
\[
 \Pr_{\mathrm A}
 (D_T^{\mathrm A}\geq\tau_\varepsilon)
 \leq
 \varepsilon+\sum_{t=1}^{T}(\rho_t+\zeta_t),
\]
which proves~\eqref{eq:practical-tpr-bound}.